\newcommand{\atCP}{\makeatletter @\makeatother}
\newcommand{\angles}[1]{\ensuremath{\langle {#1} \rangle}}
\renewcommand{\u}{\cup}
\let\oldae=\ae
\renewcommand{\ae}{\oldae\xspace}
\newcommand{\set}[1]{{\{ #1 \}}}
\newcommand{\tup}[1]{\angles{#1}}
\newcommand{\asgreekstyle}{\mathrm} %
\newcommand{\greek}[1]{\ensuremath{#1}\xspace}
\newcommand{\mathfnstyle}[1]{\ensuremath{\mathrm{#1}}}
\reservestyle{\mathfn}{\mathfnstyle}
\newcommand{\metalangkeywordstyle}[1]{\ensuremath{\mathsf{#1}}}
\reservestyle{\metalangkeyword}{\metalangkeywordstyle}
\newcommand{\semanticdomainstyle}[1]{%
  \ensuremath{\mathchoice%
    {\mbox{\normalfont\ensuremath{#1}}}%
    {\mbox{\normalfont\ensuremath{#1}}}%
    {\mbox{\normalfont\scriptsize\ensuremath{#1}}}%
    {\mbox{\normalfont\tiny\ensuremath{#1}}}}}
\reservestyle{\semanticdomain}{\semanticdomainstyle} 
\newcommand{\proglangkeywordstyle}[1]{\ensuremath{\mathbf{#1}}\xspace}
\reservestyle{\proglangkeyword}{\proglangkeywordstyle}
\lstdefinelanguage{bpl}{%
  keywords={%
    procedure,var,int,while,assume,assert,true,false,call,return,newLock,acquire,release,
  },
  morecomment=[l]{//},
}
\lstdefinelanguage{program}{%
  keywords={%
    var,const,bool,int,true,false,%
    while,do,if,then,else,assume,assert,proc,call,return,%
    task,async,yield,wait,%
  },
  morecomment=[l]{//},
  morecomment=[s]{/*}{*/},
  morecomment=[n]{(*}{*)},
  mathescape=true,
  escapeinside=`',
}
\newcommand{\match}{\mathbin{\mapstochar\relbar\mapsfromchar}}
\newtheorem{lem}{Lemma}
\title{Proving linearizability using forward simulations}
\author{Ahmed Bouajjani\inst{1} \and Michael Emmi\inst{2} \and Constantin Enea\inst{1} \and Suha Orhun Mutluergil\inst{3}}
\institute{IRIF, University Paris Diderot \& CNRS, \email{\{abou,cenea\}@irif.fr} \and Nokia Bell Labs \email{michael.emmi@nokia.com} \and Koc University\email{smutluergil@ku.edu.tr}} 
\begin{document}

\maketitle

\begin{abstract}
Linearizability is the standard correctness criterion concurrent data structures such as stacks and queues. %
It allows to establish observational refinement between a concurrent implementation and an atomic reference implementation.
Proving linearizability requires identifying linearization points for each method invocation along all possible computations, leading to valid sequential executions, or alternatively, establishing forward \emph{and} backward simulations. In both cases, carrying out proofs is hard and complex in general. In particular, backward reasoning is difficult in the context of programs with data structures, and strategies for identifying statically linearization points cannot be defined for all existing implementations.  In this paper, we show that, contrary to common belief, many such complex implementations, including, e.g., the Herlihy\&Wing Queue and the Time-Stamped Stack, can be proved correct using only forward simulation arguments. This leads to simple and natural correctness proofs for these implementations that are amenable to automation. 

\end{abstract}

\section{Introduction}
Programming efficient concurrent implementations of atomic collections, e.g., stacks and queues, is error prone. To minimize synchronization overhead between concurrent method invocations, implementors avoid blocking operations like lock acquisition, allowing methods to execute concurrently. However, concurrency risks unintended inter-operation interference, and risks conformance to atomic reference implementations. Conformance is formally captured by \emph{(observational) refinement}, which assures that all behaviors of programs
using these efficient implementations would also be possible were the atomic reference implementations used instead.

Observational refinement can be formalized as a trace inclusion problem, and the latter can itself be reduced to an invariant checking problem, but this requires in general introducing history and prophecy variables~\cite{DBLP:journals/tcs/AbadiL91}. Alternatively, verifying refinement requires in general establishing a forward simulation {\em and} a backward simulation~\cite{DBLP:journals/iandc/LynchV95}. While simulations are natural concepts, backward reasoning, corresponding to the use of prophecy variables, is in general hard and complex for programs manipulating data structures.
Therefore, a crucial issue is to understand the limits of forward reasoning in establishing refinement. More precisely, an important question is to determine for which concurrent abstract data structures, and for which classes of implementations, it is possible to carry out a refinement proof using only forward simulations.

To get rid of backward simulations (or prophecy variables) while preserving completeness w.r.t. refinement, it is necessary to have reference implementations 
that are {\em deterministic}. Interestingly, determinism allows also to simplify the forward simulation checking problem. Indeed, in this case, this problem can be reduced to an invariant checking problem. Basically, the simulation relation can be seen as an invariant of the system composed of the two compared programs. Therefore, existing methods and tools for invariant checking can be leveraged in this context. %

But, in order to determine precisely what is meant by determinism, an important point is to fix the alphabet of observable events along computations. 
Typically, to reason about refinement between two library implementations, the only observable events are the calls and returns corresponding to the method invocations along computations. This means that only the external interface of the library is considered to compare behaviors, and nothing else from the implementations is exposed. Unfortunately, it can be shown that in this case, it is impossible to have deterministic atomic reference implementations for common data structures such as stacks and queues (see, e.g., \cite{DBLP:conf/cav/SchellhornWD12}). Then, an important question is what is the necessary amount of information that should be exposed by the implementations to overcome this problem ?

One approach addressing this question is based on linearizability \cite{journals/toplas/HerlihyW90} and its correspondence with refinement %
\cite{journals/tcs/FilipovicORY10,DBLP:conf/popl/BouajjaniEEH15}. Linearizability of a computation (of some implementation) means that each of the method invocations can be seen as happening at some point, called {\em linearization point}, occurring somewhere between the call and return events of that invocation. The obtained sequence of linearization points along the computation should define a sequence of operations that is possible in the atomic reference implementation. Proving the existence of such sequences of linearization points, for all the computations of a concurrent library, is a complex problem \cite{journals/iandc/AlurMP00,conf/esop/BouajjaniEEH13,DBLP:conf/netys/Hamza15}. 
However, proving linearizability becomes less complex when linearization points are fixed for each method, i.e., associated with the execution of a designated statement in its source code \cite{conf/esop/BouajjaniEEH13}. In this case, we can consider that libraries expose in addition to calls and returns, events signaling linearization points. By extending this way the alphabet of observable events, it becomes straightforward to define {\em deterministic} atomic reference implementations. Therefore, proving linearizability can be carried out using forward simulations when linearization points are fixed, e.g.,~\cite{conf/ppopp/VafeiadisHHS06,conf/cav/AmitRRSY07,conf/vmcai/Vafeiadis09,conf/tacas/AbdullaHHJR13}.
Unfortunately, this approach is not applicable to efficient implementations such as the LCRQ queue~\cite{DBLP:conf/ppopp/MorrisonA13} (based on the principle of the Herlihy\&Wing queue \cite{journals/toplas/HerlihyW90}), and the Time-Stamped Stack \cite{DBLP:conf/popl/DoddsHK15}. The proofs of linearizability of these implementations are highly nontrivial, very involved, and hard to read, understand and automatize. Therefore, the crucial question we address is what is precisely the kind of information that is necessary to expose in order to obtain deterministic atomic reference implementations for such data structures, allowing to derive simple and natural linearizability proofs for such complex implementations, based on forward simulations, that are amenable to automation ?

We observe that the main difficulty in reasoning about these implementations is that, linearization points of enqueue/push operations occurring along some given computation, depend in general on the linearization points of dequeue/pop operations that occur arbitrarily far in the future. Therefore, since linearization points for enqueue/push operations cannot be determined in advance, the information that could be fixed and exposed can concern only the dequeue/pop operations. 

One first idea is to consider that linearization points are fixed for dequeue/pop methods and only for these methods. We show that under the assumption that implementations expose linearizations points for these methods, it is possible to define deterministic atomic reference implementations for both queues and stacks. We show that this is indeed useful by providing a simple proof of the Herlihy\&Wing queue (based on establishing a forward simulation) that can be carried out as an invariant checking proof.

However, in the case of Time-Stamped Stack, fixing linearization points of pop operations is actually too restrictive. Nevertheless, we show that our approach can be generalized to handle this case. The key idea is to reason about what we call {\em commit points}, and  that correspond roughly speaking to the last point a method accesses to the shared data structure during its execution. We prove that by exposing commit points (instead of linearization points) for pop methods, we can still provide deterministic reference implementations. We show that using this approach leads to a quite simple proof of the Time-Stamped Stack, based on forward simulations.

\section{Preliminaries}
\vspace{-2mm}
We formalize several abstraction relations between libraries using a simple
yet universal model of computation, namely labeled transition systems (LTS).
This model captures shared-memory programs with an arbitrary number of threads,
abstracting away the details of any particular programming system irrelevant to
our development.

A \emph{labeled transition system} (LTS) $A=(Q,\Sigma, s_0, \delta)$ over the 
possibly-infinite alphabet $\Sigma$ is a possibly-infinite set $Q$ of states with
initial state $s_0 \in Q$, and a transition relation $\delta \subseteq Q \times @S \times
Q$. The $i$th symbol of a sequence $@t \in @S^*$ is denoted $@t_i$, and the empty
sequence is denoted by $\epsilon$.
An \emph{execution} of $A$ is an alternating sequence of states and transition labels (called also actions)
$\rho = s_0, e_0,s_1\ldots e_{k-1},s_k$ for some $k>0$ such that $\delta(s_i, e_i, s_{i+1})$
for each $i$ such that $0\leq i<k$. We write $s_i\xrightarrow{e_i\ldots e_{j-1}}_A s_j$ as shorthand for 
the subsequence $s_i,e_i,...,s_{j-1},e_{j-1},s_j$ of $\rho$, for any $0\leq i\leq j <k$
(in particular $s_i\xrightarrow{\epsilon}s_i$).
The projection $@t| \Gamma$ of a sequence $@t$ is the maximum subsequence of $@t$ over
 alphabet $\Gamma$. This notation is extended to sets of sequences as usual.
A \emph{trace} of $A$ is the projection $\rho | \Sigma$ of an execution $\rho$ of $A$. 
The set of executions, resp., traces, of an LTS $A$ is denoted by $E(A)$, resp., $Tr(A)$.
An LTS is \emph{deterministic} if for any state $s$ and any sequence $@t\in \Sigma^*$, there is at most
one state $s'$ such that $s\xrightarrow{@t}s'$. More generally, for an alphabet $\Gamma\subseteq \Sigma$,
an LTS is \emph{$\Gamma$-deterministic} if for any state s and any sequence $@t\in \Gamma^*$, there
is at most one state $s'$ such that $s\xrightarrow{@t'}s'$ and $@t$ is a subsequence of $@t'$.

\subsection{Libraries}
Programs interact with libraries by calling named library \emph{methods}, which
receive \emph{parameter values} and yield \emph{return values} upon completion.
We fix arbitrary sets $\<Methods>$ and $\<Vals>$ of method names and
parameter/return values. 
We fix an arbitrary set $\<Ops>$ of operation identifiers, and for given sets
$\<Methods>$ and $\<Vals>$ of methods and values, we fix the sets
\vspace{-2mm}
\begin{align*}
  & C = \set{ inv(m,d,k) : m \in \<Methods>, d \in \<Vals>, k \in \<Ops> }
  \text{ and } 
  R = \set{ ret(m,d,k) : m \in \<Methods>, d \in \<Vals>, k \in \<Ops> }  
\end{align*}

\vspace{-2mm}
\noindent
of \emph{call actions} and \emph{return actions}; each call action $inv(m,d,k)$
combines a method $m \in \<Methods>$ and value $d \in \<Vals>$ with an
\emph{operation identifier} $k \in \<Ops>$. Operation identifiers are used to
pair call and return actions. 
We may omit the second field from a call/return action $a$ for methods that have no inputs or return values.
For
notational convenience, we take $\<Ops>=\<Nats>$ for the rest of the paper.

A \emph{library} is an LTS over alphabet $\Sigma$ such that $C \u R\subseteq \Sigma$. 
We assume that the traces of a library satisfy standard well-formedness properties, 
e.g., return actions correspond to previous call actions, which for lack of 
space are delegated to Appendix~\ref{app:prelim}. An operation $k$ is called \emph{completed} in a trace $\tau$ when
$ret(m,d,k)$ occurs in $\tau$, for some $m$ and $d$. Otherwise, it is called \emph{pending}.

The projection of a library trace over $C\cup R$ is called a \emph{history}. The set of histories of a library $L$ is denoted by $H(L)$.
Since libraries only dictate methods’ executions between their respective calls and returns, for any history they admit, they must also 
admit histories with weaker inter-operation ordering, in which calls may happen earlier, and/or returns later. A 
history $h_1$ is \emph{weaker} than a history $h_2$, written $h_1 \sqsubseteq h_2$, 
if{f} there exists a history $h_1'$
obtained from $h_1$ by appending return actions, and deleting call actions,
s.t.:
  $h_2$ is a permutation of $h_1'$ that preserves the order between
  return and call actions, i.e.,~if a given return action occurs before a given
  call action in $h_1'$, then the same holds in $h_2$.
A library $L$ is called \emph{atomic} when there exists a set $S$ of sequential histories such that 
$H(L)$ contains every weakening of a history in $S$.
Atomic libraries are often considered as specifications for concurrent objects. 
Libraries can be made atomic by guarding their methods bodies with global lock acquisitions.

A library $L$ is called a \emph{queue implementation} when $\<Methods>=\{enq,deq\}$ ($enq$ is the method that enqueues a value and $deq$ is the method removing a value) and $\<Vals>=\<Nats>\cup\{{\tt EMPTY}\}$ where {\tt EMPTY} is the value returned by $deq$ when the queue is empty. Similarly, a library $L$ is called a \emph{stack implementation} when $\<Methods>=\{push,pop\}$ and  $\<Vals>=\<Nats>\cup\{{\tt EMPTY}\}$. For queue and stack implementations, we assume that the same value is never added twice, i.e., for every trace $@t$ of such a library and every two call actions $inv(m,d_1,k_1)$ and $inv(m,d_2,k_2)$ where $m\in \{enq,push\}$ we have that $d_1\neq d_2$. As shown in several works~\cite{conf/tacas/AbdullaHHJR13,DBLP:conf/icalp/BouajjaniEEH15}, this assumption is without loss of generality for libraries that are data independent, i.e., their behaviors are not influenced by the values added to the collection, which is always the case in practice. On a technical note, this assumption is used to define ($\Gamma$-)deterministic abstract implementations of stacks and queues in Section~\ref{sec:queues} and Section~\ref{sec:stacks}.

\subsection{Refinement and Linearizability}
Conformance of a library $L_1$ to a specification given as an ``abstract'' library $L_2$ 
is formally captured by \emph{(observational) refinement}. Informally, 
we
say $L_1$ refines $L_2$ if{f} every computation of every program
using $L_1$ would also be possible were $L_2$ used instead. We assume that a program can 
interact with the library only through call and return actions, and thus refinement can be defined
as history set inclusion. Refinement is equivalent to the \emph{linearizability} criterion~\cite{journals/toplas/HerlihyW90} 
when $L_2$ is an atomic library~\cite{journals/tcs/FilipovicORY10,DBLP:conf/popl/BouajjaniEEH15}.

\begin{definition}
A library $L_1$ \emph{refines} another library $L_2$ if{f} $H(L_1) \subseteq H(L_2)$.
\end{definition}

Linearizability~\cite{journals/toplas/HerlihyW90} requires that every history of a concurrent library $L_1$ can be 
``linearized'' to a sequential history admitted by a library $L_2$ used as a specification. 
Formally, a sequential history $h_2$ with only complete operations is called a \emph{linearization} of a history $h_1$ when $h_1 \sqsubseteq h_2$.
A history $h_1$ is \emph{linearizable} w.r.t.~a library $L_2$ if{f} there exists a linearization $h_2$ of $h_1$ such that 
$h_2 \in H(L_2)$. A library $L_1$
is \emph{linearizable} w.r.t. $L_2$, written $L_1 \sqsubseteq L_2$, if{f}
each history $h_1 \in H(L_1)$ is linearizable w.r.t. $L_2$.

\vspace{-1.5mm}
\begin{theorem}[\cite{journals/tcs/FilipovicORY10,DBLP:conf/popl/BouajjaniEEH15}]
  $L_1 \sqsubseteq L_2$ if{f} $L_1$ refines $L_2$,
  if $L_2$ is atomic.
\vspace{-1.5mm}
\end{theorem}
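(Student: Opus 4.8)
The plan is to reduce the biconditional to a single per-history characterization and then quantify over $H(L_1)$. Concretely, I will establish the auxiliary claim that, for an atomic library $L_2$, a well-formed history $h_1$ is linearizable w.r.t.\ $L_2$ if and only if $h_1 \in H(L_2)$. Granting this, the theorem is immediate: $L_1 \sqsubseteq L_2$ holds iff every $h_1 \in H(L_1)$ is linearizable w.r.t.\ $L_2$, iff every such $h_1$ lies in $H(L_2)$, iff $H(L_1) \subseteq H(L_2)$, which is exactly $L_1$ refining $L_2$.

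For the ``only if'' part of the claim (linearizable $\Rightarrow$ history inclusion), atomicity is not needed. Given a linearization $h_2 \in H(L_2)$ of $h_1$, we have $h_1 \sqsubseteq h_2$ by definition, i.e.\ $h_1$ is weaker than $h_2$. Since the history set of any library is downward closed under the weakening relation $\sqsubseteq$ --- the closure property recorded in the preliminaries, whereby a library admitting a history also admits every history with weaker inter-operation ordering --- we conclude $h_1 \in H(L_2)$.

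For the ``if'' direction (history inclusion $\Rightarrow$ linearizable), atomicity is essential. Let $S$ be a set of sequential histories witnessing the atomicity of $L_2$. The argument is to show that $h_1 \in H(L_2)$ forces $h_1$ to be a weakening of some complete sequential history $h_2 \in S$; since $h \sqsubseteq h$ places $S \subseteq H(L_2)$, such an $h_2$ is then a linearization of $h_1$ living in $H(L_2)$, as required.

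The main obstacle is precisely this last step: it requires that $H(L_2)$ not merely contain the weakenings of $S$ (the inclusion granted verbatim by the definition of atomicity) but essentially coincide with them, so that every admitted history is recoverable as the weakening of a sequential, complete witness. Proving this reverse inclusion is the technical heart: one must take an arbitrary --- possibly genuinely concurrent, possibly containing pending operations --- history of $L_2$ and exhibit a single sequential, complete history in $S$ of which it is a weakening, completing or discarding pending operations and verifying that the return-before-call ordering demanded by $\sqsubseteq$ is preserved. The first two directions are routine given the weakening-closure property; essentially all the content lies in tying the semantics of the atomic library $L_2$ to its sequential specification $S$.
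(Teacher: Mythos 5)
Your reduction to the per-history claim is the right skeleton, and your ``only if'' direction is sound: from $h_1 \sqsubseteq h_2$ and $h_2 \in H(L_2)$ you get $h_1 \in H(L_2)$ using the library closure conditions of Definition~\ref{def:libraries} (prefix closure of trace sets, insertion of call actions, commuting calls earlier and returns later). The genuine gap is in your ``if'' direction, and you have named it without closing it: you need $H(L_2)$ to be \emph{exactly} the set of weakenings of $S$, whereas the paper's definition of atomicity only grants the inclusion $\{h : \exists h' \in S.\ h \sqsubseteq h'\} \subseteq H(L_2)$. The reverse inclusion is not something you can prove from that definition --- it is false for it. Take $S = \{\epsilon\}$ and let $L_2$ be the smallest library (closed under the conditions of Definition~\ref{def:libraries}) admitting the complete concurrent history $h = inv(m,d,1)\, inv(m,e,2)\, ret(m,d,1)\, ret(m,e,2)$. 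Those closure operations only produce weaker histories (calls moved earlier, returns later, extra pending calls, prefixes), so neither sequentialization of $h$ lies in $H(L_2)$; yet $L_2$ is atomic by the letter of the definition, since it contains every weakening of $\epsilon$. Then $h \in H(L_2)$ has no linearization in $H(L_2)$, so $L_2$ refines $L_2$ but is not linearizable w.r.t.\ itself, and your auxiliary claim fails. So the ``technical heart'' you defer is not a provable lemma about arbitrary atomic libraries; it has to be part of what ``atomic'' means.

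This also explains the relation to the paper: the paper offers no proof at all --- the theorem is imported from Filipovi\'c et al.\ and Bouajjani et al., where the specification's history set is by construction the weakening closure of a set of sequential histories (as it is for the lock-guarded reference implementations such as $AbsQ_0$ and $AbsS_0$ used later in the paper). Under that reading of atomicity, i.e.\ $H(L_2)$ equal to the weakenings of $S$, your hard step evaporates: $h_1 \in H(L_2)$ immediately yields some $h_2 \in S$ with $h_1 \sqsubseteq h_2$, and $h_2 \in H(L_2)$ since $h_2 \sqsubseteq h_2$; no separate argument about completing or discarding pending operations is needed, because $\sqsubseteq$ already builds that into its definition. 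The correct repair to your proposal is therefore definitional, not technical: either adopt the equality reading of atomicity (matching the cited sources), or verify the reverse inclusion separately for the concrete reference implementation at hand, rather than attempting to derive it for an arbitrary library satisfying the stated one-sided inclusion.
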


In the rest of the paper, we discuss methods for proving refinement (and thus, linearizability) focusing mainly on queue and stack implementations.

\vspace{-3.5mm}
\section{Refinement Proofs}
\vspace{-1.5mm}
Library refinement is the instance of a more general notion of refinement between LTSs
which for some alphabet $\Gamma$ of \emph{observable actions} is defined as the inclusion of sets of 
traces projected on $\Gamma$. Library refinement corresponds to the case $\Gamma=C\cup R$. 
Typically, $\Gamma$-refinement between two LTSs $A$ and $B$ is proved using \emph{simulation relations} which roughly, require that 
$B$ can mimic every step of $A$ using a (possibly empty) sequence of steps. Mainly, there are two kinds of simulation
relations, forward or backward, depending on whether the preservation of steps is proved starting from a similar state
forward or backward. It has been shown
that $\Gamma$-refinement is equivalent to the existence of \emph{backward simulations}, modulo the addition of history variables
that record events in the implementation, and to the existence of \emph{forward simulations} provided that the right-hand side
LTS $B$ is $\Gamma$-deterministic~\cite{DBLP:journals/tcs/AbadiL91,DBLP:journals/iandc/LynchV95}. 
We focus on proofs based on forward simulations because they are easier to automatize.

In general, forward simulations are \emph{not} a complete proof method for library refinement because libraries are not 
$C\cup R$-deterministic (the same sequence of call/return actions can lead to different states depending on the interleaving of the internal actions).
However, there are classes of atomic libraries, e.g., libraries with ``fixed linearization points'' (defined later in this section), 
for which it is possible to identify a larger alphabet $\Gamma$ of observable actions (including call/return actions), 
and implementations that are $\Gamma$-deterministic. For queues and stacks, 
Section~\ref{sec:queues} and Section~\ref{sec:stacks} define other such classes of implementations that cover
all the implementations that we are aware of.

Let $L_1=(Q_1,\Sigma, s_0^1, \delta_1)$ and $L_2=(Q_2,\Sigma, s_0^2, \delta_2)$ be two libraries over $\Sigma_1$ and $\Sigma_2$, resp., such that $C\cup R \subseteq \Sigma_1\cap\Sigma_2$. Also, let $\Gamma$ be a set of actions s.t. $C\cup R\subseteq \Gamma\subseteq \Sigma_1\cap\Sigma_2$.

\begin{definition}\label{def:gref}
The library $L_1$ \emph{$\Gamma$-refines} $L_2$ if{f} $Tr(L_1) | \Gamma \subseteq Tr(L_2) | \Gamma$.
\end{definition}

Notice that $\Gamma$-refinement implies refinement for any $\Gamma$ as in Definition~\ref{def:gref}.

We define a notion of \emph{forward} simulation that can be used to prove $\Gamma$-refinement (a dual notion of \emph{backward} simulation is defined in 
Appendix~\ref{app:backSim}). 
For a relation $R\subseteq A\times B$, $R[X]$ is the set of elements related by $R$ to elements of $X$, i.e., $R[X]=\set{y:\exists x\in X.\ R(x,y)}$.

\vspace{-1.5mm}
\begin{definition}
A relation $\mathit{fs} \subseteq Q_{1} \times Q_{2}$ is called a \emph{$\Gamma$-forward simulation} from $L_1$ to $L_2$ if{f} $\mathit{fs}[s_0^1] = \{s_0^2 \}$ and:
\vspace{-1.5mm}
\begin{itemize}
\item If $(s,\gamma,s') \in \delta_1$, for some $\gamma\in \Gamma$, and $u \in \mathit{fs}[s]$, then there exists $u' \in \mathit{fs}[s']$ such that $u \xrightarrow{@s} u'$, $@s_i=\gamma$, for some $i$, and $@s_j\in \Sigma_2\setminus\Gamma$, for each $j\neq i$.
\item If $(s,e,s') \in \delta_1$, for some $e \in \Sigma_1\setminus \Gamma$ and $u \in \mathit{fs}[s]$, then there exists $u' \in \mathit{fs}[s']$ such that $u \xrightarrow{\sigma} u'$ and $\sigma\in (\Sigma_2\setminus\Gamma)^*$.  
\end{itemize}
\vspace{-3.5mm}
\end{definition}
A $\Gamma$-forward simulation requires that every step of $L_1$ corresponds to a sequence of steps of $L_2$. To imply $\Gamma$-refinement, every step of $L_1$ labeled by an observable action $\gamma\in \Gamma$ should be simulated by a sequence of steps of $L_2$ where exactly one transition is labeled by $\gamma$ and all the other transitions are labeled by non-observable actions.

The following shows the soundness and the completeness of $\Gamma$-forward simulations (when $L_2$ is $\Gamma$-deterministic). It is an instantiation of previous results~~\cite{DBLP:journals/tcs/AbadiL91,DBLP:journals/iandc/LynchV95}.

\vspace{-1.5mm}
\begin{theorem}\label{th:forSim}
$L_1$ $\Gamma$-refines $L_2$ when there is a $\Gamma$-forward simulation from $L_1$ to $L_2$. Moreover, if $L_1$ $\Gamma$-refines $L_2$ and $L_2$ is $\Gamma$-deterministic, then there is a $\Gamma$-forward simulation from $L_1$ to $L_2$.
\vspace{-1.5mm}
\end{theorem}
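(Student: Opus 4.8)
The statement has two directions, soundness and completeness, which I would treat separately.

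\textbf{Soundness.} To prove $Tr(L_1)|\Gamma \subseteq Tr(L_2)|\Gamma$, the plan is to match every execution of $L_1$, step by step, by an execution of $L_2$ with the same $\Gamma$-projection. Concretely, given an execution $\rho = s_0^1, e_0, s_1, \ldots, e_{k-1}, s_k$ of $L_1$, I would build by induction on $k$ an execution $\rho'$ of $L_2$ from $s_0^2$ whose last state lies in $\mathit{fs}[s_k]$ and which satisfies $\rho'|\Gamma = \rho|\Gamma$. The base case is immediate from $\mathit{fs}[s_0^1]=\{s_0^2\}$. For the inductive step, with the last matched state $u \in \mathit{fs}[s_i]$, I case on whether $e_i\in\Gamma$: if $e_i=\gamma\in\Gamma$, the first clause of the simulation yields $u \xrightarrow{\sigma} u'$ with $u'\in\mathit{fs}[s_{i+1}]$, exactly one $\sigma$-symbol equal to $\gamma$ and all others in $\Sigma_2\setminus\Gamma$, so $\sigma|\Gamma=\gamma$; if $e_i\in\Sigma_1\setminus\Gamma$, the second clause yields $\sigma\in(\Sigma_2\setminus\Gamma)^*$, so $\sigma|\Gamma=\epsilon$. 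In both cases appending $\sigma$ to $\rho'$ preserves the invariant. I expect this direction to be entirely routine.

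\textbf{Completeness.} Here I assume $L_1$ $\Gamma$-refines $L_2$ and $L_2$ is $\Gamma$-deterministic, and I construct a $\Gamma$-forward simulation. I would define $\mathit{fs}$ so that $(s,u)\in\mathit{fs}$ holds iff there is a word $w\in\Gamma^*$ together with an execution of $L_1$ from $s_0^1$ to $s$ with $\Gamma$-projection $w$ and an execution of $L_2$ from $s_0^2$ to $u$ with $\Gamma$-projection $w$. The crucial use of $\Gamma$-determinism is that $w$ determines the $L_2$-state reached uniquely, so this $u$ is the only image any $L_1$-execution with $\Gamma$-trace $w$ could have; this is precisely what lets the held state $u$ be reused once $L_1$ moves on. For the initial condition, $w=\epsilon$ gives $s_0^2\in\mathit{fs}[s_0^1]$, and the empty-$\Gamma$-trace case of $\Gamma$-determinism (only $s_0^2$ is reachable in $L_2$ with empty $\Gamma$-trace) pins $\mathit{fs}[s_0^1]=\{s_0^2\}$.

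To verify the two clauses I would fix $(s,u)\in\mathit{fs}$ witnessed by $w$ and a step $(s,e,s')\in\delta_1$. If $e\in\Sigma_1\setminus\Gamma$, appending it keeps the $L_1$ $\Gamma$-trace equal to $w$, so $(s',u)\in\mathit{fs}$ and $L_2$ may stay put ($\sigma=\epsilon$). If $e=\gamma\in\Gamma$, then $L_1$ reaches $s'$ with $\Gamma$-trace $w\gamma$; by $\Gamma$-refinement $w\gamma\in Tr(L_2)|\Gamma$, so $L_2$ has an execution $\rho_2$ with $\rho_2|\Gamma=w\gamma$. I would then split $\rho_2$ at the transition carrying the final $\gamma$: its prefix has $\Gamma$-trace $w$ and, by $\Gamma$-determinism, must end in the very state $u$; the remaining suffix gives $u\xrightarrow{\sigma}u'$ with $\sigma|\Gamma=\gamma$ (hence exactly one $\gamma$ and all other labels in $\Sigma_2\setminus\Gamma$) and $(s',u')\in\mathit{fs}$ witnessed by $w\gamma$. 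The main obstacle is exactly this alignment step: from $\Gamma$-refinement alone one only learns that \emph{some} $L_2$-execution realizes $w\gamma$, with no a priori reason for it to pass through the earlier-chosen state $u$, which would break the state-respecting character of a forward simulation. It is $\Gamma$-determinism that collapses all $w$-realizing executions to the common endpoint $u$, making the split legitimate; I would also take care that non-$\Gamma$ transitions of $L_2$ cannot perturb this uniqueness, which both secures the initial condition and justifies the $\sigma=\epsilon$ choice in the internal case.
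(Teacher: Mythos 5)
Your proposal is correct and takes essentially the same route as the paper, which gives no inline proof of Theorem~\ref{th:forSim} but presents it as an instantiation of the classical results of Abadi--Lamport and Lynch--Vaandrager: your soundness induction, and your completeness construction relating $s$ to the (by $\Gamma$-determinism, unique) $L_2$-state reached by an execution with the same $\Gamma$-projection, with the split of the $L_2$-execution at the last observable action, is precisely that standard argument. One pedantic caveat: your claim that $\Gamma$-determinism ``pins $\mathit{fs}[s_0^1]=\{s_0^2\}$'' only covers witnesses with $w=\epsilon$, and if $L_1$ had an execution looping back to $s_0^1$ with nonempty $\Gamma$-projection the equality could fail --- but this corner is inherited from the paper's equality-form initial condition (under which no simulation would exist at all in such a case) and is vacuous for the well-formed libraries the paper considers.
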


The linearization of a concurrent history can be also defined in terms of \emph{linearization points}. Informally, a linearization point of 
an operation in an execution is a point in time where the operation is conceptually effectuated; given the linearization points of 
each operation, the linearization of a concurrent history is the sequential history which takes operations in order of their linearization points.
For some libraries, the linearization points correspond to a fixed set of actions. For instance, in the case of atomic libraries  
where method bodies are guarded with a global-lock acquisition, the linearization point of every method invocation corresponds to the execution 
of the body. When the linearization points are fixed, we assume that the library is an LTS over an alphabet that includes actions 
$lin(m,d,k)$ with $m\in\<Methods>$, $d\in\<Vals>$ and $k\in \<Ops>$. The action $lin(m,d,k)$ represents the linearization point of the operation $k$ 
returning value $d$.
Let $Lin$ denote the set of such actions. 
The projection of a library trace over $C\cup R\cup Lin$ is called an 
\emph{extended history}. A trace or extended history is called \emph{$Lin$-complete} when every completed operation has a linearization 
point, i.e., each return action $ret(m,d,k)$ is preceded by an action $lin(m,d,k)$. 
A library $L$ over alphabet $\Sigma$ is called \emph{with fixed linearization points} if{f} $C\cup R\cup Lin\subseteq \Sigma$ 
and every trace $@t\in Tr(L)$ is $Lin$-complete. 

Proving the correctness of an implementation $L_1$ of a concurrent object such as a queue or a stack with fixed linearization points
reduces to proving that $L_1$ is a $(C\cup R\cup Lin)$-refinement of an abstract implementation $L_2$ of the same object where method
bodies are guarded with a global-lock acquisition. Since the abstract implementation is usually $(C\cup R\cup Lin)$-deterministic,
by Theorem~\ref{th:forSim}, proving $(C\cup R\cup Lin)$-refinement is equivalent to finding a $(C\cup R\cup Lin)$-forward simulation 
from $L_1$ to $L_2$.

Section~\ref{sec:queues} and Section~\ref{sec:stacks} extend this result to queue and stack implementations where the linearization point of the methods 
\emph{adding} values to the collection is \emph{not} fixed.

\section{Queues With Fixed Dequeue Linearization Points}\label{sec:queues}
The typical abstract implementation of a concurrent queue, denoted as $AbsQ_0$, maintains a sequence of values, the enqueue adds a value atomically to the beginning of the sequence, and the dequeue removes a value from the end of the sequence (if any, otherwise it returns {\tt EMPTY}). Both methods have a fixed linearization point when the update of the sequence happens. %
For some queue implementations, e.g., the Herlihy\&Wing Queue~\cite{journals/toplas/HerlihyW90} ($\mathit{HWQ}$ for short), there exists no forward simulation to $AbsQ_0$ although they are a refinement of $AbsQ_0$. The main reason is that the enqueue methods don't have a \emph{fixed} linearization point. 
In this section, we propose a new abstract implementation for queues, denoted as $AbsQ$, which roughly maintains a \emph{partially-ordered set} of values instead of a sequence. We show that there exists a forward simulation from any correct queue implementation where only the \emph{dequeue} methods have fixed linearization points (the enqueue methods are unconstrained) to $AbsQ$. This covers all the queue implementations that we are aware of, in particular $\mathit{HWQ}$, Baskets Queue~\cite{DBLP:conf/opodis/HoffmanSS07}, LCRQ~\cite{DBLP:conf/ppopp/MorrisonA13}, or Time-Stamped Queue~\cite{DBLP:conf/popl/DoddsHK15} (where the enqueues don't have fixed linearization points). 
We also describe a forward simulation from $\mathit{HWQ}$ to $AbsQ$.

\vspace{-3.5mm}
\subsection{Enqueue Methods With Non-Fixed Linearization Points}
\vspace{-1mm}
We describe $\mathit{HWQ}$ where the linearization points of the enqueue methods are not fixed.
The shared state consists of an array {\tt items} storing the values in the queue and a counter {\tt back} storing the index of the first unused position in {\tt items}. Initially, all the positions in the array are {\tt null} and {\tt back} is 0.
An enqueue method starts by reserving a position in {\tt items} ({\tt i} stores the index of this position and {\tt back} is incremented so the same position can't be used by other enqueues) and then, stores the input value {\tt x} at this position. The dequeue method traverses the array {\tt items} starting from the beginning and atomically swaps {\tt null} with the encountered value. If the value is not {\tt null}, then the dequeue returns that value. If it reaches the end of the array, then it restarts.

\begin{wrapfigure}{l}{5.3cm}
\vspace{-9mm}
\begin{lstlisting}
void enq(int x){
  i = back++;
  items[i] = x;
}
int deq() {
  while (1) {
    range = back - 1;
    for (int i = 0; i <= range; i++){
      x = swap(items[i],null);
      if ( x != null ) return x;
}}}
  \end{lstlisting}
\vspace{-5.5mm}
\caption{Herlihy \& Wing Queue. We assume that every statement is atomic.}
\label{fig:HerlihyWing}
\vspace{-3mm}
\end{wrapfigure}
The linearization points of the enqueues are not fixed, they depend on dequeues executing in the future. Consider the following trace with two concurrent enqueues (${\tt i}(k)$ represents the value of {\tt i} in operation $k$): $inv(enq,x,1)$, $inv(enq,y,2)$, ${\tt i}(1) = \mbox{{\tt bck++}}$, ${\tt i}(2) = \mbox{{\tt bck++}}$, ${\tt items[i(}2{\tt )]} = y$.
Assuming that the linearization point corresponds to the assignment of {\tt i}, the history of this trace should be linearized to $inv(enq,x,1)$, $ret(enq,1)$, $inv(enq,y,2)$, $ret(enq,2)$. However, a dequeue executing until completion after this trace will return $y$ (only position $1$ is filled in the array {\tt items}) which is not consistent with this linearization. On the other hand, assuming that enqueues should be linearized at the assignment of {\tt items[i]} and extending the trace with ${\tt items[i(}1{\tt )]} = x$ and a completed dequeue that in this case returns $x$, leads to the incorrect linearization: $inv(enq,y,2)$, $ret(enq,2)$, $inv(enq,x,1)$, $ret(enq,1)$, $inv(deq,3)$, $ret(deq,x,3)$.

The dequeue method has a fixed linearization point which corresponds to an execution of {\tt swap} returning a non-null value. This action alone contributes to the effect of that value being removed from the queue. Every concurrent history can be linearized to a sequential history where dequeues occur in the order of their linearization points in the concurrent history.
This claim is formally proved in Section~\ref{ssec:HerlihyWing}.

Since the linearization points of the enqueues are not fixed, there exists no forward simulation from $\mathit{HWQ}$ to $AbsQ_0$. 
In the following, we describe the abstract implementation $AbsQ$ for which such a forward simulation does exist.

\subsection{Abstract Queue Implementation}
Informally, $AbsQ$ records the happens-before order between enqueue operations for which the added value has not been removed by a dequeue operation. The linearization point of a dequeue operation with return value $d\neq{\tt EMPTY}$ is enabled only if the happens-before stored in the current state contains a minimal enqueue that adds the value $d$. The effect of the linearization point is that the minimal enqueue is removed from the current state and the return value is recorded in the library state. When 
the return value is {\tt EMPTY}, the linearization point of a dequeue is enabled only if the current state stores only pending enqueues (the dequeue overlaps with all the enqueue operations stored in the current state and it can be linearized before all of them).  
The return of a dequeue is enabled only if the returned value matches the one fixed at the linearization point. 

\begin{wrapfigure}{l}{6.7cm}
\includegraphics[width=6.8cm]{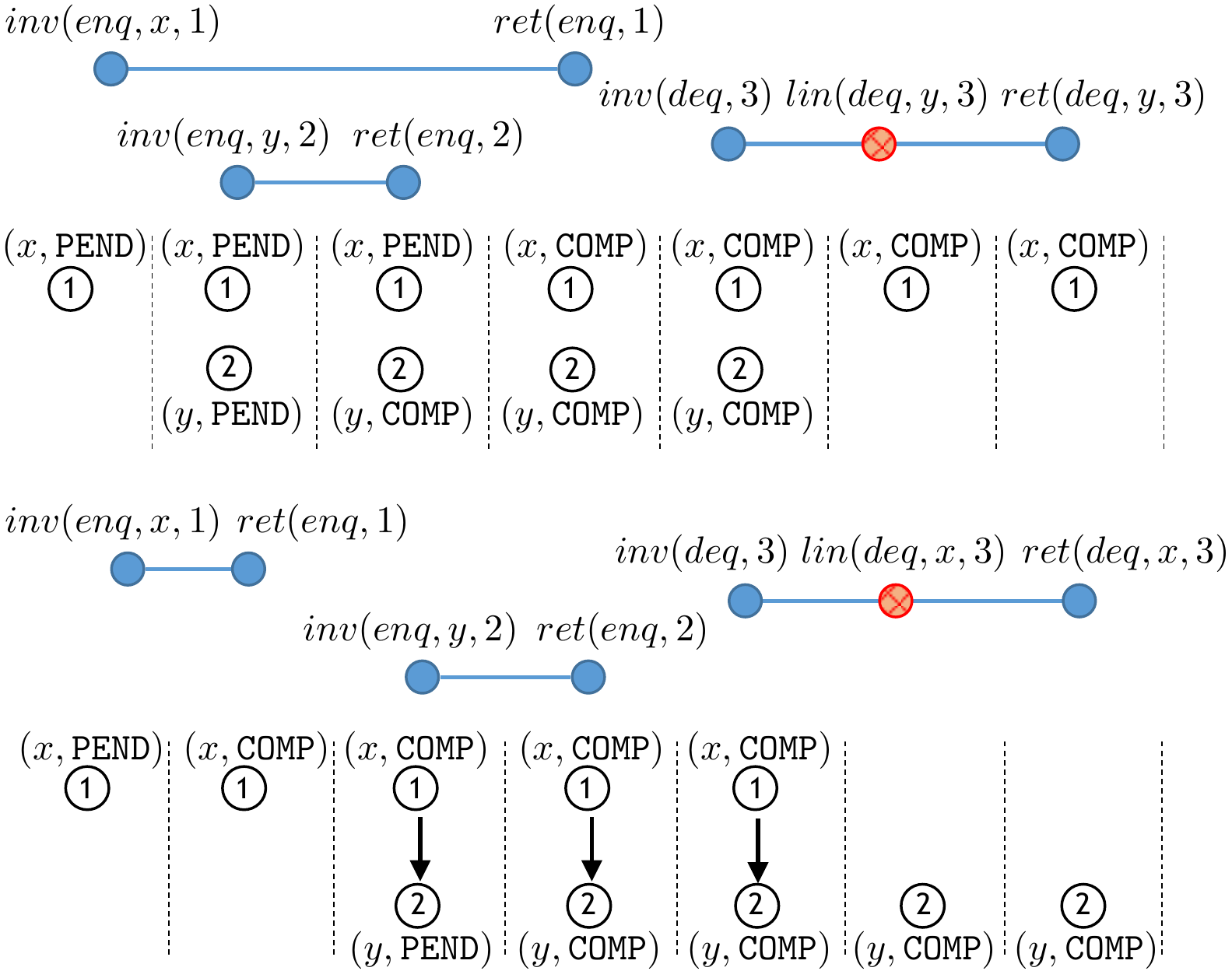}
\vspace{-8mm}
\caption{Simulating queue histories with $AbsQ$. The order between actions is from left to right.}
\label{fig:queueSim}
\vspace{-6mm}
\end{wrapfigure}
Figure~\ref{fig:queueSim} pictures two executions of $AbsQ$ for two extended histories (that include dequeue linearization points). The state of $AbsQ$ after each action is pictured as a graph below the action. The nodes of this graph represent enqueue operations and the edges happens-before constraints. Each node is labeled by a value (the input of the enqueue) and a flag {\tt PEND} or {\tt COMP} showing whether the operation is pending or completed. For instance, in the case of the first history, the dequeue linearization point $lin(deq,y,3)$ is enabled because the current happens-before contains a \emph{minimal} enqueue operation with input $y$. Note that a linearization point $lin(deq,x,3)$ is also enabled at this state.

Formally, the states of $AbsQ$ are tuples $\tup{O,<,\ell,rv,cp}$ where $O\subseteq \<Ops>$ is a set of operation identifiers, $<\subseteq O\times O$ is a strict partial order, $\ell: O -> \<Vals>\times\{\tt{PEND,\tt{COMP}}\}$ labels every identifier with a value and a pending/completed flag (the flag is used to track the happens-before order), $rv:\<Ops> ~> \<Vals>$ records the return value of a dequeue fixed at its linearization point ($~>$ denotes a partial function), and $cp:\<Ops> ~> \{A_1,A_2,R_1,R_2,R_3\}$ records the control point of every enqueue ($A_1, A_2$) or dequeue operation ($R_1,R_2,R_3$).
All the components are $\emptyset$ in the initial state, and the transition relation $->$ is defined in Fig.~\ref{fig:transitions:AbsQ}. The alphabet of $AbsQ$ contains call/return actions and dequeue linearization points, denoted by $lin(deq,d,k)$. $Lin(deq)$ is the set of all actions $lin(deq,d,k)$.

Concerning enqueue operations, the rule {\sc call-enq} orders the invoked operation after all the completed enqueues in the current state, and the rules {\sc ret-enq1}/{\sc ret-enq2} flip the corresponding flag from {\tt PEND} to {\tt COMP} provided that the operation is still present in the current state. For dequeue operations, {\sc call-deq} only increments the control point and {\sc ret-deq} checks whether the return value is the same as the one fixed at the linearization point. The linearization point rule {\sc lin-deq1} corresponds to the case of a non-empty queue, showing that $lin(deq,d,k)$ is enabled only if $d$ has been added by an enqueue which is minimal in the current happens-before. When enabled, it removes the enqueue adding $d$ from the state. The linearization point rule {\sc lin-deq2} corresponds to the case of dequeue operations linearized with an {\tt EMPTY} return value.

\begin{figure} [t]
{\scriptsize
  \centering
  \begin{mathpar}
    \inferrule[call-enq]{
      k\not\in dom(cp) \\ 
      d\neq {\tt EMPTY}
    }{
      O,<,\ell,rv,cp
      \xrightarrow{inv(enq,d,k)}
      O\cup\{k\},<\cup\ {\tt COMP}(O)\times\{k\},\ell[k\mapsto (d,{\tt PEND})],rv,cp[k\mapsto A_1]
    }\hspace{5mm}

    \inferrule[call-deq]{
      k\not\in dom(cp) \\ 
    }{
      O,<,\ell,rv,cp
      \xrightarrow{inv(deq,k)}
      O,<,\ell,rv,cp[k\mapsto R_1]
    }\hspace{5mm}
    \inferrule[ret-deq]{
       cp(k) = R_2 \\
       rv(k)=d  
    }{
      O,<,\ell,rv,cp
      \xrightarrow{ret(deq,d,k)}
      O,<,\ell,rv,cp[k\mapsto R_3]
    }\hspace{5mm}

    \inferrule[ret-enq1]{
      cp(k) = A_1 \\
      k \in O \\
      \ell(k) = (d,{\tt PEND}) 
    }{
      O,<,\ell,rv,cp
      \xrightarrow{ret(enq,k)}
      O,<,\ell[k\mapsto (d,{\tt COMP})],rv,cp[k\mapsto A_2]
    }\hspace{5mm}
    \inferrule[ret-enq2]{
      cp(k) = A_1 \\
      k \not\in O 
    }{
      O,<,\ell,rv,cp
      \xrightarrow{ret(enq,k)}
      O,<,\ell,rv,cp[k\mapsto A_2]
    }\hspace{5mm}

    \inferrule[lin-deq1]{
       cp(k) = R_1 \\
       d\neq{\tt EMPTY} \\
       k'\in min(O) \\ 
       \ell_1(k')=d
    }{
      O,<,\ell,rv,cp
      \xrightarrow{lin(deq,d,k)}
      O\setminus \{k'\},<\uparrow k',\ell,rv[k\mapsto d],cp[k\mapsto R_2]
    }\hspace{5mm}
    \inferrule[lin-deq2]{
       cp(k) = R_1 \\
       \forall o\in O.\ \ell_2(o)={\tt PEND}
    }{
      O,<,\ell,rv,cp
      \xrightarrow{lin(deq,{\tt EMPTY},k)}
      O,<,\ell,rv[k\mapsto {\tt EMPTY}],cp[k\mapsto R_2]
    }\hspace{5mm}    
      \end{mathpar}
  }
 \vspace{-6mm}
  \caption{The transition relation of $AbsQ$. We use the following notations: $\ell_i(k)$ denotes the projection of $\ell(k)$ over the $i$-th component, for each $i\in\{1,2\}$, ${\tt COMP}(O)=\{k\in O: \ell_2(k)={\tt COMP}\}$, $\mathit{f}[x\mapsto y]$ is the function $g$ such that $g(z)=f(z)$ for all $z\neq x$ in the domain of $f$, and $g(x)=y$, $min(O)$ is the set of elements of $O$ which are minimal in the order relation $<$, and $<\uparrow k$ denotes the relation $<$ where all the pairs containing $k$ have been removed.
  }
  \label{fig:transitions:AbsQ}
\vspace{-6mm}
\end{figure}

The following result states that the library $AbsQ$ has exactly the same set of histories as the standard abstract library $AbsQ_0$ (see Appendix~\ref{app:absImplQueue} for a proof).

\begin{theorem}\label{th:absImplQueue}
$AbsQ$ is a refinement of $AbsQ_0$ and vice-versa.
\end{theorem}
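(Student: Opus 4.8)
The statement unfolds, by the Definition of refinement, into the two inclusions $H(AbsQ)\subseteq H(AbsQ_0)$ and $H(AbsQ_0)\subseteq H(AbsQ)$, i.e. the equality $H(AbsQ)=H(AbsQ_0)$. The plan is to prove the two inclusions separately, exploiting that $AbsQ_0$ is atomic: by the correspondence between refinement and linearizability recalled earlier, $H(AbsQ_0)$ is exactly the set of histories $h$ admitting a linearization $h'$ (with $h\sqsubseteq h'$) that is a legal sequential FIFO history. Throughout I will use two structural facts about $AbsQ$: the relation $<$ records the happens-before order between live enqueues (\textsc{call-enq} adds an edge from every completed enqueue to the freshly invoked one), and the \emph{minimal} elements of $<$ play the role of the \emph{tail} of the queue, i.e. the oldest still-present values. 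I will also use the standing assumption that a value is never enqueued twice, so that at any point there is at most one live enqueue carrying a given value $d$.

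For $H(AbsQ_0)\subseteq H(AbsQ)$ I would exhibit a $(C\cup R)$-forward simulation from $AbsQ_0$ to $AbsQ$ and conclude by Theorem~\ref{th:forSim} (only soundness is needed, so $\Gamma$-determinism of $AbsQ$ is irrelevant). The simulation relation $\mathit{fs}$ links an $AbsQ_0$ state, carrying a queue sequence $w$ (newest first) together with pending/control information, to those $AbsQ$ states $\tup{O,<,\ell,rv,cp}$ whose order is \emph{compatible} with $w$: reading the values $\ell_1(k)$ for $k\in O$ along any linear extension of $<$ consistent with real time, oldest last, reproduces $w$, and the \texttt{PEND}/\texttt{COMP} flags agree with which enqueues have returned. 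Each $AbsQ_0$ call/return is matched by the identical $AbsQ$ rule; an enqueue linearization point of $AbsQ_0$ needs no $AbsQ$ move since the enqueue already sits in $O$; and a dequeue linearization point is matched by \textsc{lin-deq1} (non-empty $w$) or \textsc{lin-deq2} (empty $w$). The only real checks are the enabling conditions. For \textsc{lin-deq1}, the removed value $d$ is the tail of $w$, hence the least element of a linear extension of $<$, hence a \emph{minimal} element of $<$; by no-duplication it is the unique live enqueue with value $d$, so the premises $k'\in min(O)$ and $\ell_1(k')=d$ hold. For \textsc{lin-deq2}, $w$ empty forces every returned-and-undequeued enqueue to be absent, so every element of $O$ is still flagged \texttt{PEND}, matching the premise.

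For the converse inclusion $H(AbsQ)\subseteq H(AbsQ_0)$ a forward simulation is not available — this is the direction where $AbsQ$ keeps a partial order open while $AbsQ_0$ must commit to a total one, exactly the phenomenon the paper is about — so I would argue directly by extracting a linearization. From an $AbsQ$-trace $\tau$ with history $h$, order the dequeues as their $lin(deq,\cdot,k)$ actions occur in $\tau$, place each removed enqueue immediately before the dequeue that removes it, and append the never-removed enqueues at the end in some order extending $<$; call the result $h'$. I then prove two things. First, $h\sqsubseteq h'$: the placement preserves the order between returns and calls, which reduces to showing that each enqueue's chosen position never precedes an operation that happens-before it, using that the removed enqueue is minimal in $<$ at removal time and that \textsc{call-enq} records precisely the completed-before-called edges. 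Second, $h'$ is FIFO-legal: each non-\texttt{EMPTY} dequeue returns the current tail because the removed enqueue was minimal, and each \texttt{EMPTY} dequeue occurs (by the \textsc{lin-deq2} premise) when all present enqueues are pending, so it may be slotted legally before all of them. Hence $h$ is linearizable w.r.t. $AbsQ_0$, i.e. $h\in H(AbsQ_0)$.

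The main obstacle is this last direction, and specifically the simultaneous verification that the extracted order $h'$ is both a valid weakening witness (order-preserving with respect to happens-before) \emph{and} FIFO-legal. The crux is an invariant asserting that, at every prefix of $\tau$, the minimal elements of $<$ are exactly the values that may legally appear next at the queue's tail in \emph{some} FIFO linearization consistent with that prefix; establishing and maintaining this invariant across \textsc{lin-deq1}/\textsc{lin-deq2}, which delete a node together with its incident edges via $<\uparrow k'$, is the technical heart of the proof. The delicate points are handling pending enqueues and the \texttt{EMPTY} case uniformly, and showing that inserting an enqueue immediately before its matching dequeue can never cross an earlier happens-before edge; both are controlled by the minimality premise of \textsc{lin-deq1} together with the \texttt{COMP}-based edges introduced by \textsc{call-enq}.
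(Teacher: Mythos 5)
Your first inclusion, $H(AbsQ_0)\subseteq H(AbsQ)$, follows the same route as the paper: the appendix proof exhibits a normal forward simulation from $AbsQ_0$ to $AbsQ$ whose relation (the sequence $\sigma$ is a linearization of a partial order sandwiched between the completed and all live enqueues, pending enqueues maximal in $<$, matching control points and return values) is essentially your ``compatibility with $w$'' relation, and the case analysis, including the empty move for \textsc{lin-enq} and the minimality argument enabling \textsc{lin-deq1}, matches yours. For the converse inclusion the paper does \emph{not} extract a linearization directly: it builds a normal backward simulation from $AbsQ$ to $AbsQ_0$, in which the backward direction plays the role of a prophecy about which pending enqueues will turn out to have been linearized. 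Your attempt to replace this by a forward one-pass extraction is where the proposal breaks.

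The concrete flaw is the placement rule ``put each removed enqueue immediately before the dequeue that removes it.'' This makes FIFO-legality trivial but violates $h\sqsubseteq h'$, and your claim that the minimality premise of \textsc{lin-deq1} prevents crossings is false. Counterexample: let enqueues $1$ (value $x$) and $2$ (value $y$) run concurrently and both return, so they are incomparable in $<$ and both minimal; then a dequeue $3$ is invoked and fires $lin(deq,y,3)$, which is enabled since $2\in min(O)$. Your construction yields $h'=inv(enq,y,2),ret(enq,2),inv(deq,3),ret(deq,y,3),\ldots$ with enqueue $1$ placed after the dequeue (before its own later dequeue, or in the appended tail). But in $h$ the action $ret(enq,1)$ precedes $inv(deq,3)$, and $h'$ reverses this return-before-call pair, so $h\not\sqsubseteq h'$. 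The valid linearization is $enq(y),enq(x),deq(y)$: enqueue $1$ must be inserted \emph{before a dequeue that does not remove it}, strictly earlier than adjacent to its own dequeue. Minimality of the removed enqueue only excludes live happens-before \emph{predecessors}; it says nothing about completed enqueues that are $<$-incomparable and get overtaken, and exactly those generate ret-before-call constraints against later dequeues. Repairing this forces you to place enqueues in dequeue order at positions constrained by the calls of \emph{future} dequeues, i.e., the construction needs knowledge of the future at each prefix --- which is precisely the phenomenon the paper's backward simulation (equivalently, prophecy variables) is introduced to handle, and why your stated invariant about minimal elements of $<$ does not suffice to carry a forward extraction through.
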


A trace of a queue implementation is called \emph{$Lin(deq)$-complete} when every completed dequeue has a linearization point, i.e., each return action $ret(deq,d,k)$ is preceded by an action $lin(deq,d,k)$. A queue implementation $L$ over alphabet $\Sigma$, such that $C\cup R\cup Lin(deq)\subseteq \Sigma$, is called \emph{with fixed dequeue linearization points} when every trace $@t\in Tr(L)$ is $Lin(deq)$-complete.

The following result shows that $C\cup R\cup Lin(deq)$-forward simulations are a sound and complete proof method for showing the correctness of a queue implementation with fixed dequeue linearization points (up to the correctness of the linearization points). It is obtained from Theorem~\ref{th:absImplQueue} and Theorem~\ref{th:forSim} using the fact that the alphabet of $AbsQ$ is exactly $C\cup R\cup Lin(deq)$ and $AbsQ$ is deterministic.

\begin{corollary}
A queue implementation $L$ with fixed dequeue linearization points is a $C\cup R\cup Lin(deq)$-refinement of $AbsQ_0$ if{f} there exists a $C\cup R\cup Lin(deq)$-forward simulation from $L$ to $AbsQ$.
\end{corollary}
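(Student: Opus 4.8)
The plan is to set $\Gamma = C\cup R\cup Lin(deq)$ and to derive the corollary purely by composing the two theorems already available, after recording two structural facts about $AbsQ$. First I would observe that the alphabet of $AbsQ$ is exactly $\Gamma$, so that $Tr(AbsQ)|\Gamma = Tr(AbsQ)$, and that $AbsQ$ is deterministic. Since $AbsQ$ has no actions outside $\Gamma$, a $\Gamma$-trace can only be realized by itself (the non-$\Gamma$ part of any realizing full trace is empty), so its $\Gamma$-determinism reduces to ordinary determinism, which $AbsQ$ satisfies. These are precisely the hypotheses needed to apply the completeness half of Theorem~\ref{th:forSim} with $AbsQ$ on the right.

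Second, I would reduce $\Gamma$-refinement of $AbsQ_0$ to $\Gamma$-refinement of $AbsQ$. Since $L$ is with fixed dequeue linearization points, $Lin(deq)$ is contained in its alphabet, and $AbsQ_0$ also exposes dequeue linearization points, so ``$L$ $\Gamma$-refines $AbsQ_0$'' is well defined and unfolds to $Tr(L)|\Gamma \subseteq Tr(AbsQ_0)|\Gamma$. The key step is the identity $Tr(AbsQ_0)|\Gamma = Tr(AbsQ)|\Gamma = Tr(AbsQ)$, which is the $\Gamma$-level strengthening of Theorem~\ref{th:absImplQueue}: the two abstract implementations not only admit the same histories but also agree once the dequeue linearization points are exposed, because in both the action $lin(deq,d,k)$ marks the single moment at which $d$ is effectively removed (or, for $d = {\tt EMPTY}$, the moment the queue is observed empty). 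Granting this identity, ``$L$ $\Gamma$-refines $AbsQ_0$'' and ``$L$ $\Gamma$-refines $AbsQ$'' become the same statement.

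With these two facts in place the corollary is immediate in both directions. For the ``if'' direction, a $\Gamma$-forward simulation from $L$ to $AbsQ$ yields $L$ $\Gamma$-refines $AbsQ$ by the soundness half of Theorem~\ref{th:forSim}, hence $L$ $\Gamma$-refines $AbsQ_0$ by the identity above. For the ``only if'' direction, $L$ $\Gamma$-refines $AbsQ_0$ gives $L$ $\Gamma$-refines $AbsQ$ by the same identity, and since $AbsQ$ is $\Gamma$-deterministic, the completeness half of Theorem~\ref{th:forSim} produces the desired $\Gamma$-forward simulation from $L$ to $AbsQ$.

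The one step carrying real content — and the place I would expect to spend effort — is the $\Gamma$-level strengthening of Theorem~\ref{th:absImplQueue}. The theorem as stated equates only the $C\cup R$-projections (histories), whereas here I need the projections onto $\Gamma$, i.e.\ that the placement of each $lin(deq,d,k)$ relative to the calls and returns is the same for $AbsQ$ and $AbsQ_0$. I would obtain this by re-examining the mutual-refinement construction behind Theorem~\ref{th:absImplQueue} and checking that it preserves dequeue linearization actions: in $AbsQ_0$ the dequeue commits when it removes the tail of the sequence, and rules {\sc lin-deq1}/{\sc lin-deq2} of $AbsQ$ fire at exactly the corresponding point (removal of a minimal enqueue, resp.\ the observation that only pending enqueues remain), so the correspondence used for histories extends verbatim to traces over $\Gamma$. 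Everything else is a direct instantiation of Theorem~\ref{th:forSim}.
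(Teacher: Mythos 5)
Your proposal is correct and takes essentially the same route as the paper, whose proof of this corollary likewise composes Theorem~\ref{th:forSim} with Theorem~\ref{th:absImplQueue}, using precisely the two facts you isolate: the alphabet of $AbsQ$ is exactly $C\cup R\cup Lin(deq)$ and $AbsQ$ is deterministic (hence $\Gamma$-deterministic). The $\Gamma$-level strengthening you flag as the one substantive step is already delivered by the paper's proof of Theorem~\ref{th:absImplQueue}: the normal $C\cup R\cup Lin(deq)$-backward and forward simulations constructed in Appendix~\ref{app:absImplQueue} match each $lin(deq,d,k)$ action exactly (clause (ii-c) of the normal-simulation definitions), so $Tr(AbsQ_0)|\Gamma = Tr(AbsQ)$ holds as you require, with no further re-examination needed.
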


\subsection{A Correctness Proof For Herlihy\&Wing Queue}\label{ssec:HerlihyWing}
We describe a forward simulation $\mathit{fs}_1$ from $\mathit{HWQ}$ to $AbsQ$. A $\mathit{HWQ}$ state is related by $\mathit{fs}_1$ to an $AbsQ$ state that consists of all the  enqueue operations for which the input is still present in the array {\tt items} and all the pending enqueue operations that have at most reserved an array position, ordered by a relation $<$ satisfying the following: 
\begin{itemize}
	\item[(a)] pending enqueues are maximal, i.e., for every two enqueues $k$ and $k'$ such that $k'$ is pending, we have that $k'\not< k$,
	\item[(b)] $<$ is consistent with the order in which positions of {\tt items} have been reserved, i.e., for every two enqueues $k$ and $k'$ such that ${\tt i}(k) < {\tt i}(k')$, we have that $k' \not< k$,
	\item[(c)] an enqueue which has reserved a position $i$ %
	can't be ordered before another enqueue that has reserved a position $j \geq i$ when the position $i$ has been ``observed'' by a non-linearized dequeue that may ``observe'' $j$ in the current array traversal, i.e., for every two enqueues $k$ and $k'$, and a dequeue $k_d$, such that 
	
	\vspace{-2mm}
	\noindent
	{\small
	\begin{align}
	\hspace{-8mm}
	{\tt x}(k_d)={\tt null} \land {\tt i}(k') \leq {\tt range}(k_d) \land {\tt i}(k) \leq {\tt i}(k_d) \leq {\tt i}(k')
	 \land ({\tt i}(k) = {\tt i}(k_d) => k_d\atCP {\tt if}\text{-}{\tt inc}) \label{eq:inst}
	\end{align}}
	
	\vspace{-6mm}
	\noindent
	we have that $k \not< k'$. The predicate $k_d\atCP {\tt if}\text{-}{\tt inc}$ holds when the dequeue $k_d$ is at a control point after a {\tt swap} returning {\tt null} and before the increment of {\tt i}.
\end{itemize}

\noindent
An enqueue is labeled by $(d,{\tt PEND})$ where $d$ is the input value if it's pending and by  $(d,{\tt COMP})$, otherwise. Also, for every dequeue operation $k$ such that ${\tt x}(k)=d\neq {\tt null}$, we have that $rv(k)=d$.

We show that $\mathit{fs}_1$ is indeed a $C\cup R\cup Lin(deq)$-forward simulation. Let $s$ and $t$ be states of $\mathit{HWQ}$ and $AbsQ$, respectively, such that $(s,t)\in\mathit{fs}_1$. 
We omit discussing the trivial case of transitions labeled by call and return actions which are simulated by similar transitions of $AbsQ$ (for the return a dequeue operation $k$, we use the equality between the local variable ${\tt x}(k)$ in $s$ and the component $rv(k)$ in $t$). 

We show that each internal step of an enqueue or dequeue, except the execution of {\tt swap} returning a non-null value in dequeue (which represents its linearization point), is simulated by an \emph{empty} sequence of $AbsQ$ transitions, i.e., for every state $s'$ obtained through one of these steps, if $(s,t)\in\mathit{fs}_1$, then $(s',t)\in\mathit{fs}_1$ for each $AbsQ$ state $t$. 
Essentially, this consists in proving the following property, called \emph{monotonicity}: the set of possible orders $<$ associated by $\mathit{fs}_1$ to $s'$ doesn't exclude any order $<$ associated to $s$.

Concerning enqueues, let $s'$ be the state obtained from $s$ when a pending enqueue $k$ reserves an array position. This enqueue must be maximal in both $t$ and any state $t'$ related to $s'$ (since it's pending). Moreover, there is no dequeue that can ``observe'' this position before restarting the array traversal. Therefore, item (c) in the definition of $<$ doesn't constrain the order between $k$ and some other enqueue neither in $s$ nor in $s'$. Since this transition doesn't affect the constraints on the order between enqueues different from $k$ (their local variables remain unchanged), monotonicity holds. This property is trivially satisfied by the second step of enqueue which doesn't affect {\tt i}.

To prove monotonicity in the case of dequeue internal steps different from its linearization point, it is important to track the non-trivial instantiations of item (c) in the definition of $<$ over the two states $s$ and $s'$, i.e., the triples $(k,k',k_d)$ for which (\ref{eq:inst}) holds. Instantiations that are enabled only in $s'$ may in principle lead to a violation of monotonicity (since they restrict the orders $<$ associated to $s'$). For the two steps that begin an array traversal, i.e., reading the index of the last used position and setting {\tt i} to $0$, there exist no  such new instantiations in $s'$ because the value of {\tt i} is either not set or $0$. %
The same is true for the increment of {\tt i} in a dequeue $k_d$ since the predicate $k_d\atCP {\tt if}\text{-}{\tt inc}$ holds in state $s$.
The execution of {\tt swap} returning {\tt null} in a dequeue $k_d$ enables new instantiations $(k,k',k_d)$ in $s'$, thus adding potentially new constraints $k\not< k'$. We show that these instantiations are however vacuous because $k$ must be pending in $s$ and thus maximal in every order $<$ associated by $\mathit{fs}_1$ to $s$.
Let $k$ and $k'$ be two enqueues such that together with the dequeue $k_d$ they satisfy the property (\ref{eq:inst}) in $s'$ but not in $s$. 
We write ${\tt i}_s(k)$ for the value of the variable {\tt i} of operation $k$ in state $s$. 
We have that ${\tt i}_{s'}(k) = {\tt i}_{s'}(k_d) \leq {\tt i}_{s'}(k')$ and ${\tt items}[{\tt i}_{s'}(k_d)]={\tt null}$. The latter implies that the enqueue $k$ didn't executed
the second statement (since the position it reserved is still {\tt null}) and it is pending in $s$. The step that checks that the value returned by {\tt swap} is {\tt null} doesn't modify the variables in property  (\ref{eq:inst}) and also, it doesn't change the valuation of the predicate $\atCP {\tt if}\text{-}{\tt inc}$.

Finally, we show that the linearization point of a dequeue $k$ of $\mathit{HWQ}$, i.e., an execution of {\tt swap} returning a non-null value $d$, from state $s$ and leading to a state $s'$ is simulated by a transition labeled by $lin(deq,d,k)$ of $AbsQ$ from state $t$. By the definition of $\mathit{HWQ}$, there is a unique enqueue $k_e$ which filled the position updated by $k$, i.e., ${\tt i}_s(k_e)=i_s(k)$ and ${\tt x}_{s'}(k)={\tt x}_s(k_e)$. We show that $k_e$ is minimal in the order $<$ of $t$ which implies that $lin(deq,d,k)$ is enabled in $t$. Thus, instantiating item (c) in the definition of $<$ with $k'=k_e$ and $k_d=k$ we get that every enqueue that reserved a position smaller than the one of $k_e$ can't be ordered before $k_e$ in the order $<$. Also, applying item (b) with $k=k_e$ we get the same for every enqueue that reserved a bigger position. An enqueue that didn't reserved a position is by definition maximal in $<$ and therefore, not a predecessor of $k_e$. Then, the state $t'$ obtained from $t$ through a $lin(deq,d,k)$ transition is related to $s'$ because (1) the value added by $k_e$ is not anymore present in {\tt items} which implies that $k_e$ doesn't occur in any $AbsQ$ state related to $s'$, and (2) the value of ${\tt x}(k)$ is set to $d\neq {\tt null}$ which implies that $rv(k)$ is set to $d$ in every $AbsQ$ state related to $s'$.
%
%
%
%
%
\section{Stacks With Fixed Pop Commit Points}\label{sec:stacks}
While the abstract queue in Section~\ref{sec:queues} can be adapted to stacks (the linearization point $lin(pop,d,k)$ with $d\neq{\tt EMPTY}$ is enabled when $k$ is added by a push which is maximal in the happens-before order stored in the state), it can't simulate (through forward simulations) existing stack implementations like the Time-Stamped Stack~\cite{DBLP:conf/popl/DoddsHK15} ($\mathit{TSS}$, for short) where the linearization points of the pop operations are not fixed. Exploiting particular properties of the stack semantics, we refine the ideas used in $AbsQ$ and define 
a new abstract implementation for stacks, denoted as $AbsQ$, which is able to simulate such implementations. Forward simulations to $AbsS$ are complete for proving the correctness of stack implementations provided that the point in time where the return value of a pop operation is determined, called \emph{commit point}, corresponds to a fixed action.
\subsection{Pop Methods With Fixed Commit Points}

We explain the meaning of the commit points on a simplified version of the Time-Stamped Stack~\cite{DBLP:conf/popl/DoddsHK15} ($\mathit{TSS}$, for short) given in Figure~\ref{fig:TimeStamped}. This  implementation maintains an array of singly-linked lists, one for each thread, where list nodes contain a data value (field {\tt data}), a timestamp (field {\tt ts}), the next pointer (field {\tt next}), and a boolean flag indicating whether the node represents a value removed from the stack (field {\tt taken}). Initially, each list contains a sentinel dummy node pointing to itself with timestamp $-1$ and the flag {\tt taken} set to {\tt false}.

\begin{wrapfigure}{l}{5.2cm}
\begin{lstlisting}
struct Node{
  int data;
  int ts;
  Node* next;
  boolean taken;
};
Node* pools[maxThreads];
int TS = 0;   

void push(int x) {
  Node* n = new Node(x,MAX_INT,
                        null,false);
  n->next = pools[myTID];
  pools[myTID] = n;
  int i = TS++;
  n->ts = i;
}
int pop() {
 boolean success = false;
 int maxTS = -1;
 Node* youngest = null;
 while ( !success ) {
   maxTS = -1; youngest = null;
   for(int i=0; i<maxThreads; i++){
     Node* n = pools[i];
     while (n->taken && n->next != n)
       n = n->next;
     if(maxTS < n->ts) {
       maxTS = n->ts; youngest = n;
     }
   }
   if (youngest != null)
     success=CAS(youngest->taken,
                       false,true);
 }
 return youngest->data;
}
\end{lstlisting}
\vspace{-6mm}
\caption{Time-Stamped Stack.} %
\label{fig:TimeStamped}
\vspace{-7mm}
\end{wrapfigure}
Pushing a value to the stack proceeds in several steps: adding a node with maximal timestamp in the list associated to the thread executing the push (given by the special variable {\tt myTID}), asking for a new timestamp (given by the shared variable {\tt TS}), and updating the timestamp of the added node. Popping a value from the stack consists in traversing all the lists, finding the first element which doesn't represent a removed value (i.e., {\tt taken} is {\tt false}) in each list, and selecting the element with the maximal timestamp. A compare-and-swap (CAS) is used to set the {\tt taken} flag of this element to {\tt true}. The procedure restarts if the CAS fails.

\begin{figure}[t]
\centering
\includegraphics[width=11.5cm]{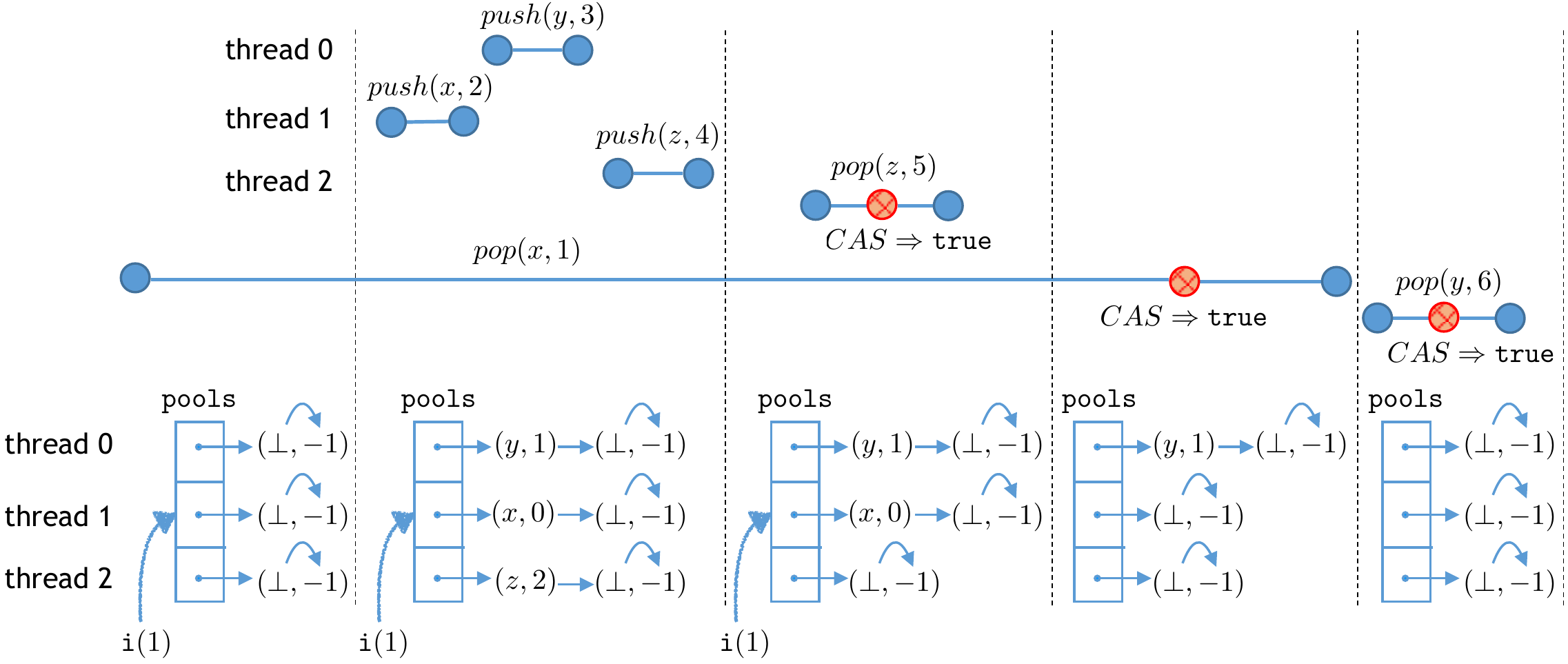}
\vspace{-4mm}
\caption{An execution of $\mathit{TSS}$. An operation is pictured by a line delimited by two circles denoting the call and respectively, the return action. Pop operations with identifier $k$ and removing value $d$ are labeled $pop(d,k)$. Their representation includes another circle that stands for a successful CAS which is their commit point. The library state after an execution prefix delimited at the right by a dotted line is pictured in the bottom part (the picture immediately to the left of the dotted line). A pair $(d,t)$ represents a list node with ${\tt data}=d$ and ${\tt ts}=t$, and ${\tt i}(1)$ denotes the value of {\tt i} in the pop with identifier 1. We omit the nodes where the field {\tt taken} is {\tt true}.}
\label{fig:commit}
\vspace{-6mm}
\end{figure}

The push operations don't have a fixed linearization point because adding a node to a list and updating its timestamp are not executed in a single atomic step. The nodes can be added in an order which is not consistent with the order between the timestamps assigned later in the execution. Also, the value added by a push that just added an element to a list can be popped before the value added by a completed push (since it has a maximal timestamp). The same holds for pop operations: The only reasonable choice for a linearization point is a successful CAS (that results in updating the field {\tt taken}). Fig.~\ref{fig:commit} pictures an execution showing that this action doesn't correspond to a linearization point, i.e., an execution for which the pop operations in every correct linearization are not ordered according to the order between successful CASs. In every correct linearization of that execution, the pop operation removing $x$ is ordered before the one removing $z$ although they perform a successful CAS in the opposite order.

An interesting property of the successful CASs in pop operations is that they fix the return value, i.e., the return value is {\tt youngest->data} where {\tt youngest} is the node updated by the CAS. We call such actions \emph{commit points}. More generally, commit points are actions that access shared variables, from which every control-flow path leads to the return control point and contains no more accesses to the shared memory (i.e., after a commit point, the return value is computed using only local variables).

When the commit points of pop operations are fixed to particular implementation actions (e.g., a successful CAS) we assume that the library is an LTS over an alphabet that contains actions $com(pop,d,k)$ with $d\in\<Vals>$ and $k\in\<Ops>$ (denoting the commit point of the pop with identifier $k$ and returning $d$). Let $Com(pop)$ be the set of such actions.

\vspace{-3mm}
\subsection{Abstract stack implementation}
\vspace{-1mm}
We define an abstract stack $AbsS$ over alphabet $C\cup R\cup Com(pop)$ that essentially, similarly to $AbsQ$, maintains the happens-before order of the pushes whose value has not been yet removed. Pops are treated differently since the commit points are not necessarily linearization points, intuitively, a pop can be linearized before its commit. Each pop operation starts by taking a snapshot of the greatest completed push operations in the happens-before order, and continuously tracks the push operations which are overlapping with it. The commit point $com(pop,d,k)$ with $d\neq {\tt EMPTY}$ is enabled only if $d$ was added by one of the push operations in the initial snapshot, or by a push happening earlier when all the values from the initial snapshot have been removed, or by one of the push operations that overlaps with pop $k$. The commit point $com(pop,{\tt EMPTY},k)$ is enabled only if all the values added by push operations ending before $k$ started have been removed. The effect of the commit points is explained below through examples.
\vspace{-.4mm}

\begin{wrapfigure}{l}{6.8cm}
\vspace{-6mm}
\includegraphics[width=7cm]{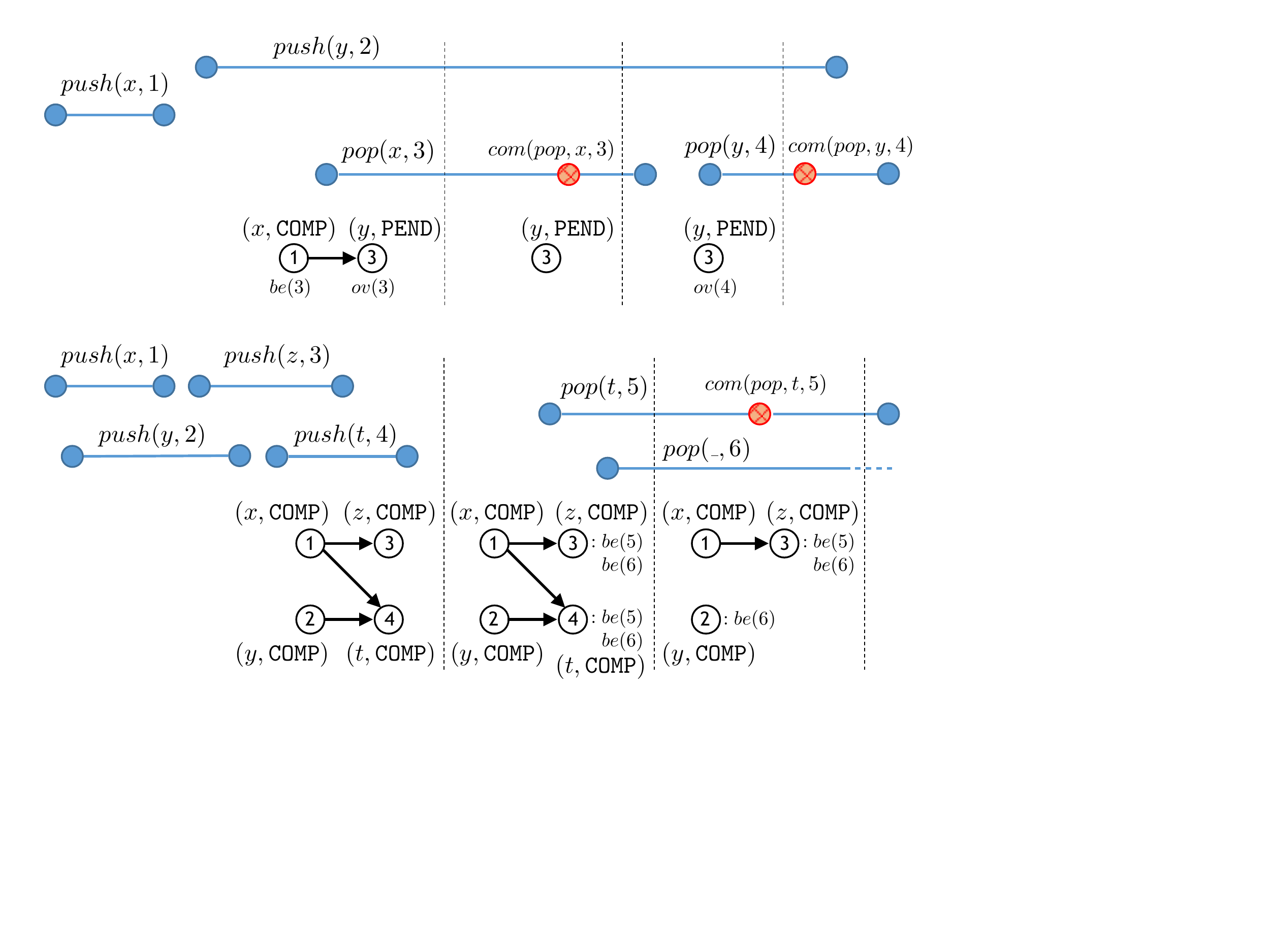}
\vspace{-8mm}
\caption{Simulating stack histories with $AbsS$.}
\label{fig:stackSim}
\vspace{-6.5mm}
\end{wrapfigure}
Figure~\ref{fig:stackSim} pictures two executions of $AbsS$ for two extended histories (that include pop commit points). For readability, we give the state of $AbsS$ only after several execution prefixes delimited at the right by a dotted line. We focus on pop operations -- the effect of push calls and returns is similar to enqueue calls and returns in $AbsQ$. Let us first consider the history on the top part. The first state we give is reached after the call of pop with identifier $3$. This shows the effect of a pop invocation: the greatest completed pushes according to the current happens-before (here, the push with identifier $1$) are marked as $be(3)$ (from ``before'' operation 3), and the pending pushes are marked as $ov(3)$ (from ``overlapping'' with operation 3). As a side remark, any other push operation that starts after pop $3$ would be also marked as $ov(3)$.
The commit point $com(pop,x,3)$ (pictured with a red circle) is enabled because $x$ was added by a push marked as $be(3)$. The effect of the commit point is that push $1$ is removed from the state (the execution on the bottom shows a more complicated case). For the second pop, the commit point $com(pop,y,4)$ is enabled because $y$ was added by a push marked as $ov(4)$. The execution on the bottom shows an example where the marking $be(k)$ for some pop $k$ is updated at commit points. The pushes $3$ and $4$ are marked as $be(5)$ and $be(6)$
when the pops $5$ and $6$ start. Then, $com(pop,t,5)$ is enabled since $t$ was added by $push(t,4)$ which is marked as $be(5)$. Besides removing $push(t,4)$, the commit point produces a state where a pop committing later, e.g., pop $6$, can remove $y$ which was added by a predecessor of $push(t,4)$ in the happens-before ($y$ could become the top of the stack when $t$ is removed). This history is valid because $push(y,2)$ can be linearized after $push(x,1)$ and $push(z,3)$. Thus, push 2, a predecessor of the push which is removed, is marked as $be(6)$. Push $1$ which is also a predecessor of the removed push is not marked as $be(6)$ because it happens before another push, i.e., push 3, which is already marked as $be(6)$ (the value added by push 3 should be removed before the value added by push 1 could become the top of the stack).

Formally, the states of $AbsS$ are tuples $\tup{O,<,\ell,rv,cp,be,ov}$ where $<$ is a strict partial order over the set $O$ of operation identifiers, $\ell: O -> \<Vals>\times\{\tt{PEND,\tt{COMP}}\}$ labels every identifier in $O$ with a value and a pending/completed flag, $rv:\<Ops> ~> \<Vals>$ records the return value of a pending pop fixed at its commit point, $cp:\<Ops> ~> \{A_1,A_2,R_1,R_2,R_3\}$ records the control point of every push ($A_1, A_2$) or pop operation ($R_1,R_2,R_3$), $be:\<Ops> ~> 2^O$ records the greatest completed push operations before a pop started or happening earlier provided that the values of all the push happening later have been removed, and $ov: \<Ops> ~> 2^O$ records push operations overlapping with a pop.
All the components are $\emptyset$ in the initial state, and the transition relation $->$ is defined in Fig.~\ref{fig:transitions:AbsS}.

The transition rules which don't correspond to commit point actions are similar to those for $AbsQ$. The rule {\sc com-pop1} for $com(pop,d,k)$ is enabled only if there exists a push $k'$ which added value $d$ and which belongs to $be(k)$ or $ov(k)$. When enabled, the push $k'$ is removed from the set $O$ (and the order $<$) and for every other pop $k_1$ such that $k'$ belongs to $be(k_1)$, $k'$ is replaced in $be(k_1)$ by its predecessors which are followed exclusively by pushes overlapping with $k_1$ (these predecessors become maximal closed pushes once $k'$ is removed). Also, $rv(k)$ is set to $d$. The rule {\sc com-pop1} for $com(pop,{\tt EMPTY},k)$ is enabled only if $be(k)$ is empty (i.e., all the values added by pushes ending before $k$, if any, have been removed). Then, $rv(k)$ is set to ${\tt EMPTY}$.

\begin{figure} [t]
\vspace{-2mm}
{\scriptsize
  \centering
  \begin{mathpar}
    \inferrule[call-push]{
      k\not\in dom(cp) \\ 
      d\neq {\tt EMPTY} \\
      \forall k'.\ ov'(k') = ov(k')\cup \{k\}
    }{
      O,<,\ell,rv,cp,be,ov 
      \xrightarrow{inv(push,d,k)} 
      O\cup\{k\},<\cup\ {\tt COMP}(O)\times\{k\},\ell[k\mapsto (d,{\tt PEND})],rv,cp[k\mapsto A_1],be,ov'
    }\hspace{5mm}

    \vspace{-1mm}
    \inferrule[call-pop]{
      k\not\in dom(cp) %
    }{
      O,<,\ell,rv,cp,be,ov
      \xrightarrow{inv(pop,k)} 
      O,<,\ell,rv,cp[k\mapsto R_1],be[k\mapsto maxCo(O)],ov[k\mapsto {\tt PEND}(O)]
    }\hspace{5mm}
    
    \vspace{-1mm}
    \inferrule[ret-pop]{
       cp(k) = R_2 \\
       rv(k)=d  
    }{
      O,<,\ell,rv,cp,be,ov
      \xrightarrow{ret(pop,d,k)}
      O,<,\ell,rv,cp[k\mapsto R_3],be,ov
    }\hspace{5mm}

    \vspace{-1mm}
    \inferrule[ret-push1]{
      cp(k) = A_1 \\
      k \in O \\
      \ell(k) = (d,{\tt PEND}) 
    }{
      O,<,\ell,rv,cp,be,ov
      \xrightarrow{ret(push,k)}
      O,<,\ell[k\mapsto (d,{\tt COMP})],rv,cp[k\mapsto A_2],be,ov
    }\hspace{5mm}
    \inferrule[ret-push2]{
      cp(k) = A_1 \\
      k \not\in O 
    }{
      O,<,\ell,rv,cp,be,ov
      \xrightarrow{ret(push,k)}
      O,<,\ell,rv,cp[k\mapsto A_2],be,ov
    }\hspace{5mm}

    \inferrule[com-pop1]{
       cp(k) = R_1 \\
       d\neq{\tt EMPTY} \\
       k'\in be(k)\cup ov(k) \\
        \ell_1(k')=d \\
       \forall k_1.\ k'\not\in be(k_1) \Rightarrow be'(k_1)=be(k_1) \\     \\
       \forall k_1.\ k'\in be(k_1) \Rightarrow be'(k_1)=(be(k_1)\setminus\{k'\})\cup \{k_2: k_2\in pred_{<}(k') \land \forall k_3. (k_2\in pred_{<}(k_3) \land k_3\neq k') => k_3\in ov(k_1)\} \\
    }{
      O,<,\ell,rv,cp,be,ov
      \xrightarrow{com(pop,d,k)} \\
      O\setminus \{k'\},<\uparrow k',\ell,rv[k\mapsto d],cp[k\mapsto R_2],be',ov %
    }\hspace{5mm}

    \inferrule[com-pop2]{
       cp(k) = R_1 \\
       be(k)=\emptyset
    }{
      O,<,\ell,rv,cp,be,ov
      \xrightarrow{com(pop,{\tt EMPTY},k)}
      O,<,\ell,rv[k\mapsto {\tt EMPTY}],cp[k\mapsto R_2],be,ov
    }\hspace{5mm}

      \end{mathpar}
  }
 \vspace{-6mm}
  \caption{The transition relation of $AbsQ$. We use the following notions: $maxCo(O)$ is the set of greatest operations in $O$ (w.r.t. $<$) which are completed, i.e., $maxCo(O)=\set{k\in O: \ell_2(k)={\tt COMP}, \forall k'\in O.\ k' < k \vee \ell_2(k')={\tt PEND}}$, ${\tt PEND}(O)=\{k\in O: \ell_2(k)={\tt PEND}\}$, and $pred_{<}(k')$ is the set of immediate predecessors of $k'$ according to $<$, i.e., $pred_{<}(k')=\set{k\in O: k < k'\land \forall k''\in O.\ k'' > k' \vee k'' < k}$.%
  }
  \label{fig:transitions:AbsS}
\vspace{-5mm}
\end{figure}

Let $AbsS_0$ be the standard abstract implementation of a stack (where elements are stored in a sequence; push, resp., pop operations add, resp., remove, an element from the beginning of the sequence in one atomic step). For $\<Methods>=\{push,pop\}$, the alphabet of $AbsS_0$ is $C\cup R\cup Lin$.
The following result states that the library $AbsS$ has exactly the same set of histories as $AbsS_0$ (see Appendix~\ref{app:absImplStack} for a proof).

\vspace{-2mm}
\begin{theorem}\label{th:absImplStack}
$AbsS$ is a refinement of $AbsS_0$ and vice-versa.
\vspace{-2mm}
\end{theorem}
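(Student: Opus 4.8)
The statement asserts $H(AbsS)=H(AbsS_0)$, so the plan is to prove the two refinement inclusions separately. Since $AbsS_0$ stores values in a sequence and performs each push/pop effect in a single atomic step, it is atomic in the sense defined in the preliminaries; hence the linearizability--refinement correspondence stated earlier (for an atomic right-hand side) lets me identify $H(AbsS_0)$ with the set of histories that are linearizable with respect to the sequential LIFO stack specification. I would use this characterization throughout rather than reasoning about $AbsS_0$ traces directly, since it lets both directions be phrased in terms of sequential stack linearizations.

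For the inclusion $H(AbsS)\subseteq H(AbsS_0)$ (soundness), I would fix a trace $\tau$ of $AbsS$ and exhibit a linearization of its history. The natural choice orders pop operations by their commit points $com(pop,d,k)$ and inserts each push at a position consistent with both the stored order $<$ and the $be/ov$ markings. The heart of the argument is a state invariant $I$ asserting that every reachable state $\tup{O,<,\ell,rv,cp,be,ov}$ arises from a concurrent execution admitting a valid sequential stack linearization that respects happens-before and the commit points already taken; in particular $O$ together with $<$ is the happens-before order on the still-live pushes, and for each pending pop $k$ the set $be(k)\cup ov(k)$ is exactly the collection of pushes whose value could legitimately be the top of the stack when $k$ is linearized. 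I would show $I$ holds initially and is preserved by each rule, the only nontrivial cases being {\sc com-pop1} and {\sc com-pop2}; establishing that a committed value is a valid stack top then yields that ordering pops by commit points produces a correct LIFO sequence with the same $C\cup R$ projection as $\tau$.

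For the converse $H(AbsS_0)\subseteq H(AbsS)$ (completeness), I would take a history $h\in H(AbsS_0)$ together with a sequential stack linearization $h'$ and construct a trace of $AbsS$ with history $h$ by replaying the calls and returns of $h$ and inserting, for each completed pop, a commit action at a point consistent with the order of $h'$. An induction on the length of this schedule shows each commit is enabled: because $h'$ is a valid stack history, the value removed by a pop is the current top of the stack in $h'$, and the invariant $I$ (now read in the forward direction) guarantees that the corresponding push is recorded in $be(k)\cup ov(k)$, so {\sc com-pop1} fires, with {\sc com-pop2} handling the {\tt EMPTY} case when $be(k)$ is empty.

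I expect the main obstacle to be the bookkeeping performed by {\sc com-pop1} on the $be$ markings---the replacement of a removed push $k'$ in $be(k_1)$ by those predecessors of $k'$ that are followed only by pushes overlapping with $k_1$. Proving $I$ requires showing that this update exactly maintains, for every pending pop, the set of pushes that can supply its top-of-stack value in some linearization consistent with happens-before and earlier commits. This is the stack-specific (LIFO) core of the argument, and, unlike the queue case, it must account for the fact that removing the current top can expose an earlier push that happens-before it, which is precisely what the predecessor computation in {\sc com-pop1} is designed to capture.
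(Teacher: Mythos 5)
There is a genuine gap, and it sits exactly at the crux of why the paper's own proof uses a \emph{backward} simulation for the direction $H(AbsS)\subseteq H(AbsS_0)$. Your soundness construction linearizes pop operations in the order of their commit points. But commit points of $AbsS$ are deliberately \emph{not} linearization points: the paper's execution in Fig.~\ref{fig:commit} is precisely one ``for which the pop operations in every correct linearization are \emph{not} ordered according to the order between successful CASs,'' and since the forward simulation $\mathit{fs}_2$ maps each $com(pop,d,k)$ of $\mathit{TSS}$ to the identical action of $AbsS$, the projection of that execution is a trace of $AbsS$ with the same history and the same commit order. On that trace your construction produces a pop order that no correct linearization admits, so the induction cannot go through; indeed your invariant $I$, which asserts that $be(k)\cup ov(k)$ is the set of pushes that can be on top ``when $k$ is linearized,'' silently builds in the false identification of commit with linearization. (The claim that $be(k)\cup ov(k)$ is \emph{exactly} that set is also stronger than anything the paper maintains or needs.) The concrete failure mode is a pop committing on a completed push that still has completed successors in the happens-before order: its linearization point must then be placed \emph{after} the linearization points of pops that have not yet committed. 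The paper handles this with a backward simulation whose key device is the partial injective function $g$ associating \emph{uncommitted} pops (control point $R_1$ in $AbsS$) to pushes already removed from $\sigma$ in $AbsS_0$ --- i.e., pops linearized before their commits --- together with re-linearization of the pushes sitting above the popped value. Any forward-in-time construction that pins pops at their commits, as yours does, cannot be repaired by strengthening $I$; you would need either prophecy-style reasoning or an offline argument that places pop linearization points earlier than their commits, which is where all the real work of this theorem lies.

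Your completeness direction ($H(AbsS_0)\subseteq H(AbsS)$) is essentially sound and matches the paper's forward simulation: there one is free to \emph{choose} the $AbsS$ trace, and placing each pop's commit at its linearization point in $h'$ works; the enabledness of {\sc com-pop1} follows, as in the paper, from the fact that the popped value's push is maximal in the stored happens-before, hence in $be(k)\cup ov(k)$. Also note a framing mismatch: the theorem is stated for refinement as history inclusion, and $AbsS_0$'s alphabet includes $Lin$ actions, so your appeal to the linearizability--refinement correspondence for identifying $H(AbsS_0)$ is acceptable but does not relieve you of the soundness direction above, which remains unproved as written.
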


A trace of a stack implementation is called \emph{$Com(pop)$-complete} when every completed pop has a commit point, i.e., each return $ret(pop,d,k)$ is preceded by an action $com(pop,d,k)$. A stack implementation $L$ over $\Sigma$, such that $C\cup R\cup Com(pop)\subseteq \Sigma$, is called \emph{with fixed pop commit points} when every trace $@t\in Tr(L)$ is $Com(pop)$-complete.

As a consequence of Theorem~\ref{th:forSim}, $C\cup R\cup Com(pop)$-forward simulations are a sound and complete proof method for showing the correctness of a stack implementation with fixed pop commit points (up to the correctness of the commit points).

\vspace{-1.5mm}
\begin{corollary}
A stack $L$ with fixed pop commit points is a $C\cup R\cup Com(pop)$-refinement of $AbsS$ if{f} there is a $C\cup R\cup Com(pop)$-forward simulation from $L$ to $AbsS$.
\vspace{-1.5mm}
\end{corollary}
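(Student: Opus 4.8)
The plan is to instantiate Theorem~\ref{th:forSim} with $L_1=L$, $L_2=AbsS$, and $\Gamma=C\cup R\cup Com(pop)$, so that the two directions of the corollary become exactly the soundness and completeness parts of that theorem. First I would verify the hypotheses of Theorem~\ref{th:forSim}. Since $L$ is a stack implementation with fixed pop commit points we have $C\cup R\cup Com(pop)\subseteq \Sigma_L$, while by construction the alphabet of $AbsS$ is \emph{exactly} $C\cup R\cup Com(pop)$; writing $\Sigma_L$ and $\Sigma_{AbsS}$ for the two alphabets, this gives $C\cup R\subseteq \Gamma\subseteq \Sigma_L\cap\Sigma_{AbsS}$, as the theorem requires. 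Note that, unlike the corresponding queue corollary, the statement is phrased directly against $AbsS$ (not $AbsS_0$), so Theorem~\ref{th:absImplStack} is not needed here. With the hypotheses in place, the direction asserting that a $\Gamma$-forward simulation from $L$ to $AbsS$ implies that $L$ $\Gamma$-refines $AbsS$ is immediate from the soundness (first) part of Theorem~\ref{th:forSim}.

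For the converse direction I would invoke the completeness (second) part of Theorem~\ref{th:forSim}, which requires $AbsS$ to be $\Gamma$-deterministic. The key simplification is that the alphabet of $AbsS$ equals $\Gamma$, so $AbsS$ has no internal (non-$\Gamma$) transitions and $\Gamma$-determinism collapses to ordinary determinism. Hence the real content is to prove that $AbsS$ is deterministic: for every (reachable) state and every action $a$ there is at most one $a$-successor. I would establish this by inspecting the rules of Fig.~\ref{fig:transitions:AbsS} action by action. For each action label the applicable rules have mutually exclusive premises --- for example {\sc ret-push1}/{\sc ret-push2} are separated by $k\in O$ versus $k\notin O$, and {\sc com-pop1}/{\sc com-pop2} are separated by whether the committed value is {\tt EMPTY} --- and in each rule the successor is an explicit single-valued function of the current state and the fields of the action.

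The only step that is not purely syntactic is rule {\sc com-pop1}: its effect hinges on the push $k'$ with $\ell_1(k')=d$, and I must argue this $k'$ is unique. This is precisely where the standing assumption that the same value is never added twice enters. That assumption gives the invariant that in every reachable state the pushes recorded in $O$ carry pairwise distinct values, so for a fixed $d$ there is at most one $k'\in O$ with $\ell_1(k')=d$; consequently the removed element, the updated order $<\uparrow k'$, the updated map $be'$, and all remaining components are determined. I expect this uniqueness-of-$k'$ argument, together with checking that the somewhat intricate $be'$ update is well defined as a function of the state, to be the main obstacle; the remaining cases are a routine check. Having shown $AbsS$ deterministic, hence $\Gamma$-deterministic, the completeness part of Theorem~\ref{th:forSim} yields a $\Gamma$-forward simulation from $L$ to $AbsS$ whenever $L$ $\Gamma$-refines $AbsS$, which completes the proof.
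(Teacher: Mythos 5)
Your proposal is correct and follows essentially the same route as the paper: the corollary is obtained by instantiating Theorem~\ref{th:forSim} with $\Gamma=C\cup R\cup Com(pop)$, using that the alphabet of $AbsS$ is exactly $\Gamma$ (so no appeal to Theorem~\ref{th:absImplStack} is needed) and that $AbsS$ is deterministic. The determinism argument you spell out --- rule-by-rule inspection, with the uniqueness of the push $k'$ in \textsc{com-pop1} secured by the standing assumption that the same value is never added twice --- is precisely the justification the paper leaves implicit here and states explicitly for the analogous queue corollary.
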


Linearization points can also be seen as commit points and thus the following holds.

\vspace{-1.5mm}
\begin{corollary}
A stack implementation $L$ with fixed pop linearization points where transition labels $lin(pop,d,k)$ are substituted with $com(pop,d,k)$ is a $C\cup R\cup Com(pop)$-refinement of $AbsS_0$ if{f} there is a $C\cup R\cup Com(pop)$-forward simulation from $L$ to $AbsS$.
\vspace{-1.5mm}
\end{corollary}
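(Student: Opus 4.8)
The plan is to reduce the statement to the preceding corollary together with Theorem~\ref{th:absImplStack}, exploiting that a fixed pop linearization point is a special case of a fixed pop commit point. First I would check that the relabeling is legitimate: when a pop has a designated linearization point, executing $lin(pop,d,k)$ already fixes the return value $d$ and, by the very meaning of a linearization point, every continuation of the operation touches only local state before returning. Thus the action meets the defining property of a commit point, and replacing each label $lin(pop,d,k)$ by $com(pop,d,k)$ turns $L$ into a stack implementation over $C\cup R\cup Com(pop)$ whose traces are $Com(pop)$-complete (they were $Lin$-complete). Applying the preceding corollary to this relabeled $L$ then gives immediately: there is a $C\cup R\cup Com(pop)$-forward simulation from $L$ to $AbsS$ iff $L$ is a $C\cup R\cup Com(pop)$-refinement of $AbsS$. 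All that remains is to exchange $AbsS$ for $AbsS_0$ on the refinement side.

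For that exchange I would prove that $AbsS$ simulates $AbsS_0$ on $\Gamma = C\cup R\cup Com(pop)$, reading the pop linearization points of $AbsS_0$ as commit points. Concretely, I would exhibit a forward simulation relating a sequence state of $AbsS_0$ to the corresponding happens-before state of $AbsS$, observing that removing the top of the stack, as $AbsS_0$ does atomically, is always a legal instance of rule \textsc{com-pop1}: the top is a maximal not-yet-removed push and hence lies in $be(k)\cup ov(k)$ (the hidden push linearization steps of $AbsS_0$ are matched by empty $AbsS$ moves). This yields $Tr(AbsS_0) | \Gamma \subseteq Tr(AbsS) | \Gamma$, which already discharges one half of the corollary: if $L$ is a $\Gamma$-refinement of $AbsS_0$ then it is a $\Gamma$-refinement of $AbsS$, whence the forward simulation exists by Theorem~\ref{th:forSim}.

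The hard direction is the converse, namely that a $\Gamma$-trace of $L$ accepted by $AbsS$ is already a $\Gamma$-trace of $AbsS_0$. I expect this to be the main obstacle. The subtlety is that $AbsS$ deliberately decouples commit order from linearization order: as the Time-Stamped Stack shows, $AbsS$ accepts commit sequences that form no valid linearization when read verbatim, so membership in $Tr(AbsS)$ does not by itself force the commit order to be LIFO-consistent. I would resolve this using the hypothesis that in $L$ commit and linearization coincide, so that the sequence of $com(pop,\cdot,\cdot)$ actions is exactly the order in which $L$ effectuates its pops and therefore cannot be one of the ``reordered'' sequences that $AbsS$ admits only through a nontrivial underlying linearization. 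Formally, I would take the $AbsS$-computation witnessing acceptance of the trace and, invoking Theorem~\ref{th:absImplStack} to align the $C\cup R$ behavior of the two abstract stacks, promote it to an $AbsS_0$-computation with the same $\Gamma$-projection, the commit=linearization assumption guaranteeing that each commit can be scheduled atomically on the top of the sequence maintained by $AbsS_0$. The delicate point throughout is that Theorem~\ref{th:absImplStack} equates only the histories (the $C\cup R$ projections) of $AbsS$ and $AbsS_0$, so the extra $Com(pop)$ actions must be threaded through this promotion by hand rather than obtained for free.
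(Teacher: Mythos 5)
Your first two paragraphs coincide with what the paper actually does: its entire proof of this corollary is the one-line observation that a fixed pop linearization point satisfies the defining property of a commit point, so that the relabeled $L$ is a stack implementation with fixed pop commit points and the preceding corollary applies. In particular, the direction from $\Gamma$-refinement of $AbsS_0$ to the existence of the simulation (with $\Gamma = C\cup R\cup Com(pop)$) goes exactly as you say, via the inclusion $Tr(AbsS_0)\,|\,\Gamma \subseteq Tr(AbsS)\,|\,\Gamma$ and the completeness half of Theorem~\ref{th:forSim} (which applies because $AbsS$ is $\Gamma$-deterministic, using the assumption that no value is pushed twice). The simulation you sketch for that inclusion --- the head of $\sigma$ is a maximal not-yet-removed push, hence in $be(k)\cup ov(k)$, and each $lin(pop,d,k)$ of $AbsS_0$ is matched by $com(pop,d,k)$ of $AbsS$ at the same position, with $lin(push,\cdot,\cdot)$ matched by empty moves --- is literally the forward simulation constructed in Appendix~\ref{app:absImplStack}. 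Up to this point the proposal is sound and tracks the paper.

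The gap is in your third paragraph, and it is not merely an unfinished computation. In the paper's formalism, ``with fixed pop linearization points'' is a purely syntactic condition on traces ($Lin(pop)$-completeness), so the hypothesis you lean on --- that in $L$ commit and linearization ``coincide'' --- has no formal content from which the promotion step could be derived. Indeed the promotion is impossible in general: the relabeled Time-Stamped Stack satisfies the syntactic hypothesis and admits a $\Gamma$-forward simulation to $AbsS$ (Section~\ref{sec:corr_tss}), yet the execution of Fig.~\ref{fig:commit} yields a $\Gamma$-trace whose commit order contradicts every correct linearization of its history, hence is not a $\Gamma$-trace of the relabeled $AbsS_0$; so the trace inclusion $Tr(AbsS)\,|\,\Gamma \subseteq Tr(AbsS_0)\,|\,\Gamma$ that your ``hard direction'' ultimately requires fails outright, exactly as you feared. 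The paper never attempts it: Theorem~\ref{th:absImplStack} is established with a \emph{normal $C\cup R$-backward simulation} that deliberately relocates pop linearizations away from the commit positions (e.g., a commit removing a non-maximal completed push is matched by an $\epsilon$-move, the pop having linearized earlier), so the right-to-left direction is recovered only at the level of histories, and the corollary must be read, like the preceding one, ``up to the correctness of the linearization points''. Your third paragraph is therefore attempting to prove a strictly stronger, position-preserving statement than the paper's intended reading, and as proposed it cannot be completed; to salvage it you would have to build the correctness of the designated points into the hypothesis semantically, which the paper's definitions do not do.
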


\vspace{-6mm}
\subsection{A Correctness Proof For Time-Stamped Stack}\label{sec:corr_tss}
\vspace{-1mm}
We describe a forward simulation $\mathit{fs}_2$ from $\mathit{TSS}$ to $AbsS$. Except for the constraints on the components $be$ and $ov$ of a $AbsS$ state, it is similar to the simulation $\mathit{fs}_1$ from $\mathit{HWQ}$ to $AbsQ$. Thus, the $AbsS$ states $t=\tup{O,<,\ell,rv,cp,be,ov}$ associated by $\mathit{fs}_2$ to a $\mathit{TSS}$ state $s$ satisfy the following. The set $O$ consists of all the identifiers of pushes in $s$ which didn't added yet a node to {\tt pools} or for which the input is still present in {\tt pools} (i.e., the node created by the push has {\tt taken} set to {\tt false}). A push $k$ is labeled by $(d,{\tt PEND})$ where $d$ is the input value if it's pending and by $(d,{\tt COMP})$, otherwise. 

To describe the order relation $<$ we consider the following notations: ${\tt ts}_s(k)$, resp., ${\tt TID}_s(k)$, denotes the timestamp of the node created by the push $k$ in state $s$ (the {\tt ts} field of this node), resp., the id of the thread executing $k$. By an abuse of terminology, we call ${\tt ts}_s(k)$ the timestamp of $k$ in state $s$.
Also, $k \leadsto_s k'$ when intuitively, a traversal of {\tt pools}  would encounter the node created by $k$ before the one created by $k'$. More precisely, $k \leadsto_s k'$ when ${\tt TID}_s(k) < {\tt TID}_s(k')$, or ${\tt TID}_s(k) = {\tt TID}_s(k')$ and the node created by $k'$ is reachable from the one created by $k$ in the list pointed to by ${\tt pools}[{\tt TID}_s(k)]$.
The order relation $<$ satisfies the following: (1) pending pushes are maximal, (2) $<$ is consistent with the order between node timestamps, i.e., ${\tt ts}_s(k) \leq {\tt ts}_s(k')$ implies $k'\not< k$, and (3) $<$ includes the order between pushes executed in the same thread, i.e., ${\tt TID}_s(k) = {\tt TID}_s(k')$ and ${\tt ts}_s(k) < {\tt ts}_s(k')$ implies $k < k'$.

The components $be$ and $ov$ satisfy the following constraints (their domain is the set of identifiers of pending pops):
\vspace{-2mm}
\begin{itemize}
	\item a pop $k$ with ${\tt youngest}\neq {\tt null}$ that reached a node with timestamp $\tau$ (its variable {\tt n} points to this node) overlaps with every push that created a node with a timestamp bigger than $\tau$ and which occurs in {\tt pools} before the node reached by $k$, i.e., ${\tt youngest}_s(k)\neq{\tt null}$, ${\tt n}_s(k)={\tt n}_s(k_1)$, $k_2\leadsto_s k_1$, ${\tt n}_s(k_2)\text{\tt ->taken}={\tt false}$, and ${\tt ts}_s(k_2) \geq {\tt ts}_s(k_1)$ implies $k_2\in ov(k)$, for each $k, k_1, k_2$
	\item a pop $k$ with ${\tt youngest}={\tt null}$ overlaps with every push that created a node which occurs in {\tt pools} before the node reached by $k$,
i.e., ${\tt youngest}_s(k)={\tt null}$, ${\tt n}_s(k)={\tt n}_s(k_1)$, $k_2\leadsto_s k_1$, and ${\tt n}_s(k_2)\text{\tt ->taken}={\tt false}$ implies $k_2\in ov(k)$, for each $k, k_1, k_2$
	\item if the variable {\tt youngest} of a pop $k$ points to a node which is not taken, then this node was created by a push in $be(k)\cup ov(k)$ or the node currently reached by $k$ is followed in {\tt pools} by another node which was created by a push in $be(k)\cup ov(k)$, i.e., ${\tt youngest}_s(k)={\tt n}_s(k_1)$, ${\tt n}_s(k_1)\text{{\tt ->taken}}={\tt false}$, and ${\tt n}_s(k)={\tt n}_s(k_2)$ implies $k_1\in be(k)\cup ov(k)$ or that there exists $k_3\in O$ such that ${\tt ts}_s(k_3) > {\tt ts}_s(k_1)$, $k_3\in be(k)\cup ov(k)$, and either $k_2\leadsto_s k_3$ or ${\tt n}_s(k_2)={\tt n}_s(k_3)$ and TODO $k$ is traversing the last list in the array {\tt pools}, for each $k, k_1,k_2$
\vspace{-2mm}
\end{itemize}
There are some more constraints on $be$ and $ov$ that can be seen as invariants of $AbsS$, i.e., $be(k)$ and $ov(k)$ don't contain predecessors of pushes from $be(k)$ (for each $k, k_1, k_2$, $k_1 < k_2$ and $k_2 \in be(k)$ implies $k_1\not\in be(k)\cup ov(k)$). They can be found in Appendix~\ref{app:tss}.

Finally, for every pop operation $k$ such that ${\tt success}(k)={\tt true}$, we have that $rv(k)={\tt youngest}(k)\text{\tt->data}$. 

The proof that $\mathit{fs}_2$ is indeed a forward simulation from $\mathit{TSS}$ to $AbsS$ follows the same lines as the one given for the Herlihy\&Wing Queue. It can be found in Appendix~\ref{app:tss}.

\vspace{-3.5mm}
\section{Related Work}
\vspace{-1.5mm}
Many techniques for linearizability verification, e.g.,~\cite{conf/ppopp/VafeiadisHHS06,conf/cav/AmitRRSY07,conf/vmcai/Vafeiadis09,conf/tacas/AbdullaHHJR13}, are based on forward simulation arguments, and typically only work for libraries where the linearization point of every invocation of a method $m$ is fixed to a particular statement in the code of $m$. The works in~\cite{conf/cav/Vafeiadis10,Derrick2011,conf/cav/DragoiGH13,DBLP:conf/cav/ZhuPJ15} deal with \emph{external} linearization points where the action of an operation $k$ can be the linearization point of a concurrently executing operation $k'$. We say that the linearization point of $k'$ is external. This situation arises in read-only methods like the {\tt contains} method of an optimistic set~\cite{conf/podc/OHearnRVYY10}, libraries based on the elimination back-off scheme, e.g.,~\cite{conf/spaa/HendlerSY04}, or flat combining~\cite{DBLP:conf/spaa/HendlerIST10,DBLP:conf/podc/GorelikH13}. 
In these implementations, an operation can do an update on the shared state that becomes the linearization point of a concurrent read-only method (e.g., a {\tt contains} returning {\tt true} may be linearized when an {\tt add} method adds a new value to the shared state) or an operation may update the data structure on behalf of other concurrently executing operations (whose updates are published in the shared state). In all these cases, every linearization point can still be associated syntactically to a statement in the code of a method and doesn't depend on operations executed in the future (unlike $\mathit{HWQ}$ and $\mathit{TSS}$). However, identifying the set of operations for which such a statement is a linearization point can only be done by looking at the whole program state (the local states of all the active operations). This poses a problem in the context of compositional reasoning (where auxiliary variables are required), but still admits a forward simulation argument. For manual proofs, such implementations with external linearization points can still be defined as LTSs that produce $Lin$-complete traces and thus still fall in the class of implementations for which forward simulations are enough for proving refinement. These proof methods are not complete and they are not able to deal with implementations like $\mathit{HWQ}$ or $\mathit{TSS}$.

There also exist linearizability proof techniques based on backward simulations or alternatively, prophecy variables, e.g.,~\cite{phd/Vafeiadis08,DBLP:conf/cav/SchellhornWD12,DBLP:conf/pldi/LiangF13}. These works can deal with implementations where the linearization points are not fixed, but the proofs are conceptually more complex and less amenable to automation.

The works in~\cite{conf/concur/HenzingerSV13,DBLP:conf/icalp/BouajjaniEEH15} propose reductions of linearizability to assertion checking where the idea is to define finite-state automata that recognize violations of concurrent queues and stacks. These automata are simple enough in the case of queues and there is a proof of $\mathit{HWQ}$ based on this reduction~\cite{conf/concur/HenzingerSV13}. However, in the case of stacks, the automata become much more complicated and we are not aware of a proof for an implementation such as $\mathit{TSS}$ which is based on this reduction.
 

\newpage
\appendix

\section{Libraries}\label{app:prelim}

Programs interact with libraries by calling named library \emph{methods}, which
receive \emph{parameter values} and yield \emph{return values} upon completion.
We fix arbitrary sets $\<Methods>$ and $\<Vals>$ of method names and
parameter/return values.

\noindent
We fix an arbitrary set $\<Ops>$ of operation identifiers, and for given sets
$\<Methods>$ and $\<Vals>$ of methods and values, we fix the sets
\begin{align*}
  & C = \set{ inv(m,d,k) : m \in \<Methods>, d \in \<Vals>, k \in \<Ops> }
  \text{, and } \\
  & R = \set{ ret(m,d,k) : m \in \<Methods>, d \in \<Vals>, k \in \<Ops> }  
\end{align*}
of \emph{call actions} and \emph{return actions}; each call action $inv(m,d,k)$
combines a method $m \in \<Methods>$ and value $d \in \<Vals>$ with an
\emph{operation identifier} $k \in \<Ops>$. Operation identifiers are used to
pair call and return actions. 
We assume every set of 
words is closed under isomorphic renaming of operation identifiers. 
We denote the operation identifier of a
call/return action $a$ by $\<op>(a)$. Call and return actions $c \in C$ and $r
\in R$ are \emph{matching}, written $c \match r$, when $\<op>(c) = \<op>(r)$. 
We may omit the second field from a call/return action $a$ for methods that have no inputs (e.g., the pop method of a stack) or return values (e.g., the push method of a stack).
A word $\tau \in @S^*$ over alphabet $@S$, such that $(C \u R) \subseteq @S$, is
\emph{well formed} when:
\begin{itemize}

  \item Each return is preceded by a matching call: \\
  $\tau_j \in R$ implies $\tau_i \match \tau_j$ for some $i < j$.

  \item Each operation identifier is used in at most one call/return: \\
  $\<op>(\tau_i) = \<op>(\tau_j)$ and $i < j$ implies $\tau_i \match \tau_j$.

\end{itemize}
We say that the well-formed word $\tau \in @S^*$ is \emph{sequential} when
\begin{itemize}

  \item Operations do not overlap: \\
  $\tau_i, \tau_k \in C$ and $i < k$ implies $\tau_i \match \tau_j$ for some $i < j < k$.

\end{itemize}
Well-formed words represent traces of a library. We assume every set of well-formed
words is closed under isomorphic renaming of operation identifiers. For
notational convenience, we take $\<Ops>=\<Nats>$ for the rest of the paper.
When the value of a certain field in a call/return action is not important we use 
the placeholder $\_$, e.g., $inv(m,\_,k)$ instead of $inv(m,d,k)$ when the input  
$d$ can take any value.

An operation $k$ is called \emph{completed} in a well-formed trace $\tau$ when
$ret(m,d,k)$ occurs in $\tau$, for some $m$ and $d$. Otherwise, it is called \emph{pending}.

Libraries dictate the execution of methods between their call and return
points. Accordingly, a library cannot prevent a method from being called,
though it can decide not to return. Furthermore, any library action performed
in the interval between call and return points can also be performed should the
call have been made earlier, and/or the return made later. 
A library thus allows any sequence of
invocations to its methods made by \emph{some} program.

\begin{definition}\label{def:libraries}
A \emph{library} $L$ is an LTS over alphabet $\Sigma$ such that $C \u R\subseteq \Sigma$
and each trace $\tau \in Tr(L)$ is well formed, and
  \begin{itemize}

    \item Call actions $c \in C$ cannot be disabled: \\
    $\tau \cdot \tau' \in Tr(L)$ implies $\tau \cdot c \cdot \tau' \in Tr(L)$
    if $\tau \cdot c \cdot \tau'$ is well formed.
  
    \item Call actions $c \in C$ cannot disable other actions: \\
    $\tau \cdot a \cdot c \cdot \tau' \in Tr(L)$ implies $\tau \cdot c \cdot a \cdot \tau' \in Tr(L)$.
  
    \item Return actions $r \in R$ cannot enable other actions: \\
    $\tau \cdot r \cdot a \cdot \tau' \in Tr(L)$ implies $\tau \cdot a \cdot r \cdot \tau' \in Tr(L)$.
  
  \end{itemize}

\end{definition}

Note that even a library that implements \emph{atomic methods}, e.g.,~by
guarding method bodies with a global-lock acquisition, admits executions in
which method calls and returns overlap. 
For simplicity, Definition~\ref{def:libraries} assumes that every thread performs a single operation. The extension to multiple operations per thread is straightforward, e.g. the closure rules must assume that the actions $a$ and $c$ belong to different threads
\section{Normal Forward/Backward Simulations}\label{app:backSim}

We define a class of forward/backward simulations, called \emph{normal simulations}, that are used in the proofs in Appendix~\ref{app:absImplQueue} and Appendix~\ref{app:absImplStack}. 

\begin{definition}\label{def:for_app}
Let $L_1=(Q_1,\Sigma, s_0^1, \delta_1)$ and $L_2=(Q_2,\Sigma, s_0^2, \delta_2)$ be two libraries over alphabets $\Sigma_1$ and $\Sigma_2$, respectively, such that $C\cup R \subseteq \Sigma_1\cap\Sigma_2$, and $\Gamma$ a set of actions such that $C\cup R\subseteq \Gamma\subseteq \Sigma_1\cap\Sigma_2$. A relation $\mathit{fs} \subseteq Q_{1} \times Q_{2}$ is called a \emph{normal $\Gamma$-forward simulation} from $L_1$ to $L_2$ iff the following holds:
\begin{itemize}
\item[(i)] $\mathit{fs}[s_0^1] = \{s_0^2 \}$ 
\item[(ii-a)] If $(s,c,s') \in \delta_1$, for some $c\in C$, and $u \in \mathit{fs}[s]$, then there exists $u' \in \mathit{fs}[s']$ such that $u \xrightarrow{@s} u'$, $@s_0=c$, and $@s_i\in \Sigma_2\setminus\Gamma$, for each $0<i<|@s|$.
\item[(ii-b)] If $(s,r,s') \in \delta_{1}$, for some $r\in R$, and $u \in \mathit{fs}[s]$, then there exists $u' \in \mathit{fs}[s']$ such that $u \xrightarrow{@s} u'$, $@s_{|@s| -1}=r$, and $@s_i\in \Sigma_2\setminus\Gamma$, for each $0\leq i<|@s| -1$.
\item[(ii-c)] If $(s, \gamma , s') \in \delta_1$, for some $\gamma\in \Gamma\setminus (C\cup R)$, and $u \in fs[s]$, then there exists $u' \in fs[s']$ such that $\delta_2(u,\gamma, u')$. 
\item[(ii-d)] If $(s,e,s') \in \delta_1$, for some $e \in \Sigma_1\setminus \Gamma$ and $u \in \mathit{fs}[s]$, then there exists $u' \in \mathit{fs}[s']$ such that $u \xrightarrow{\sigma} u'$ and $\sigma\in (\Sigma_2\setminus\Gamma)^*$.  
\end{itemize}
\end{definition}
With normal $\Gamma$-forward simulations, a step of $L_1$ labeled by a call, resp., return, action is simulated by a sequence of steps of $L_2$ that start, resp., end, with the same action, and a step of $L_1$ labeled by another observable action should be matched by a step of $L_2$ labeled by the same action. The rest of the transitions in $L_1$ are matched to a possibly empty sequence of transitions of $L_2$ with arbitrary labels.

A dual notion of forward simulation is the backward simulation:
\begin{definition}\label{def:back_app}
Let $L_1=(Q_1,\Sigma, s_0^1, \delta_1)$ and $L_2=(Q_2,\Sigma, s_0^2, \delta_2)$ be two libraries over a common alphabet $\Sigma$, and $\Gamma\subseteq \Sigma$ a set of actions such that $(C\cup R)\subseteq \Gamma$. A relation $bs \subseteq Q_1 \times Q_2$ is called a \emph{normal $\Gamma$-backward simulation} from $L_1$ to $L_2$ iff the following holds:
\begin{itemize}
\item[(i)] $bs[s_0^1] = \{s_0^2 \}$
\item[(ii-a)] If $(s,c,s') \in \delta_1$, for some $c\in C$, and $u' \in bs[s']$, then there exists $u \in bs[s]$ such that $u \xrightarrow{@s} u'$, $@s_0=c$, and $@s_i\in \Sigma\setminus\Gamma$, for each $0<i<|@s|$.
\item[(ii-b)] If $(s,r,s') \in \delta_1$, for some $r\in R$, and $u' \in bs[s']$, then there exists $u \in bs[s]$ such that $u \xrightarrow{@s} u'$, $@s_{|@s| -1}=r$, and $@s_i\in \Sigma\setminus\Gamma$, for each $0\leq i<|@s| -1$.
\item[(ii-c)] If $(s,\gamma, s') \in \delta_1$, for some $\gamma\in \Gamma\setminus (C\cup R)$, and $u' \in bs[s']$, then there exists $u \in bs[s]$ such that $\delta_2(u,\gamma,u')$
\item[(ii-d)] If $(s,e,s') \in \delta_1$ for some $e \in \Sigma\setminus \Gamma$ and $u' \in bs[s']$, then there exists $u \in bs[s]$ such that $u \xrightarrow{@s} u'$ and $\sigma\in (\Sigma_2\setminus\Gamma)^*$.
\end{itemize}
\end{definition}

\section{Proof of Theorem~\ref{th:absImplQueue}}\label{app:absImplQueue}

\begin{figure} [t]
{\scriptsize
  \centering
  \begin{mathpar}
    \inferrule[call-enq]{
      k\not\in dom(cp^0) \\ 
      d\neq {\tt EMPTY}
    }{
      \sigma,in^0,rv^0,cp^0
      \xrightarrow{inv(enq,d,k)}
      \sigma,in^0[k\mapsto d], rv^0,cp^0[k\mapsto A_1]
    }\hspace{5mm}
    \inferrule[lin-enq]{
      cp^0(k)=A_1
    }{
      \sigma,in^0,rv^0,cp^0
      \xrightarrow{lin(enq,d,k)}
      d\cdot\sigma,in^0,rv^0,cp^0[k\mapsto A]
    }\hspace{5mm}
    
        \inferrule[ret-enq]{
      cp^0(k)=A
    }{
      \sigma,in^0,rv^0,cp^0
      \xrightarrow{ret(enq,k)}
      \sigma,in^0,rv^0,cp^0[k\mapsto A_2]
    }\hspace{5mm}

    \inferrule[call-deq]{
      k\not\in dom(cp^0) \\ 
    }{
      \sigma,in^0,rv^0,cp^0
      \xrightarrow{inv(deq,k)}
      \sigma,in^0,rv^0,cp^0[k\mapsto R_1]
    }\hspace{5mm}
        \inferrule[lin-deq1]{
      cp^0(k)=R_1 \\
      \sigma = \sigma'\cdot d 
    }{
      \sigma,in^0,rv^0,cp^0
      \xrightarrow{lin(deq,d,k)}
      \sigma',in^0,rv^0[k\mapsto d],cp^0[k\mapsto R_2]
    }\hspace{5mm}

        \inferrule[lin-deq2]{
      cp^0(k)=R_1 \\
      \sigma = \epsilon
    }{
      \sigma,in^0,rv^0,cp^0
      \xrightarrow{lin(deq,{\tt EMPTY},k)}
      \sigma,in^0,rv^0[k\mapsto {\tt EMPTY}],cp^0[k\mapsto R_2]
    }\hspace{5mm}
    \inferrule[ret-deq]{
      cp^0(k)=R_2 \\
      rv^0(k) = d
    }{
      \sigma,in^0,rv^0,cp^0
      \xrightarrow{ret(deq,d,k)}
      \sigma,in^0,rv^0,cp^0[k\mapsto R_3]
    }\hspace{5mm}
          \end{mathpar}
  }
 \vspace{-4mm}
  \caption{The transition relation of $AbsQ_0$. 
  }
  \label{fig:transitions:AbsQ_0}
\vspace{-2mm}
\end{figure}

We show that $AbsQ$ and $AbsQ_0$ refine each other. We start by giving a formal definition of the standard reference implementation $AbsQ_0$.
Thus, the states of $AbsQ_0$ are tuples $\tup{\sigma,in^0,rv^0,cp^0}$ where $\sigma\in\<Vals>^*$ is a sequence of values, $in^0:\<Ops> ~> \<Vals>$ records the input value of an enqueue, $rv^0:\<Ops> ~> \<Vals>$ records the return value of a dequeue fixed at its linearization point ($~>$ denotes a partial function), and $cp^0:\<Ops> ~> \{A_1,A,A_2,R_1,R_2,R_3\}$ records the control point of every enqueue ($A_1, A,A_2$) or dequeue operation ($R_1,R_2,R_3$).
All the components are $\emptyset$ in the initial state, and the transition relation $->$ is defined in Fig.~\ref{fig:transitions:AbsQ_0}. The alphabet of $AbsQ$ contains call/return actions and enqueue/dequeue linearization points.

To prove that $AbsQ$ is a refinement of $AbsQ_0$ we define a normal $C\cup R\cup Lin(deq)$-backward simulation (i.e, a backward simulation as in Definition~\ref{def:back_app}) from $AbsQ$ to $AbsQ_0$. The reverse is shown using a normal $C\cup R\cup Lin(deq)$-forward simulation (i.e, a forward simulation as in Definition~\ref{def:for_app}).

\begin{lemma} 
$AbsQ$ is a refinement of $AbsQ_0$.
\end{lemma}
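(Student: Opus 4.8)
The plan is to establish $H(AbsQ)\subseteq H(AbsQ_0)$ by exhibiting a normal $(C\cup R\cup Lin(deq))$-backward simulation $bs$ from $AbsQ$ to $AbsQ_0$ (Definition~\ref{def:back_app}), and then appeal to the standard soundness of backward simulations for trace inclusion~\cite{DBLP:journals/iandc/LynchV95}: such a simulation yields $Tr(AbsQ)|\Gamma\subseteq Tr(AbsQ_0)|\Gamma$ for $\Gamma=C\cup R\cup Lin(deq)$, and $\Gamma$-refinement implies refinement. Here the common alphabet is $C\cup R\cup Lin$, with $Lin(enq)$ playing the role of the internal actions ($\Sigma\setminus\Gamma$) of $AbsQ_0$ that $AbsQ$ never produces. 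The reason backward rather than forward reasoning is natural is that $AbsQ$ commits only to a \emph{partial} order $<$ on the not-yet-dequeued enqueues, whereas $AbsQ_0$ commits to a \emph{total} sequence $\sigma$; the linear order in which enqueues are placed in $AbsQ_0$ is revealed only in the future, when dequeues remove minimal elements, so it must be prophesied.

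First I would define $bs$. An $AbsQ$ state $\tup{O,<,\ell,rv,cp}$ is related to an $AbsQ_0$ state $\tup{\sigma,in^0,rv^0,cp^0}$ when: (i)~the dequeue control points agree, $rv=rv^0$, and the enqueue inputs agree ($in^0(k)=\ell_1(k)$); (ii)~enqueue control points match up to the absence of an enqueue linearization action in $AbsQ$, i.e.\ $cp(k)=A_2$ forces $cp^0(k)=A_2$ while $cp(k)=A_1$ permits $cp^0(k)\in\{A_1,A\}$; and (iii)~writing $L=\set{k\in O: cp^0(k)\in\{A,A_2\}}$ for the enqueues already linearized in $AbsQ_0$, the word $\sigma$ lists exactly the values $\ell_1(k)$ for $k\in L$ in an order that is a linear extension of $<$ restricted to $L$, with the $<$-minimal elements placed at the \emph{dequeuing} (rightmost) end of $\sigma$. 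Thus a single $AbsQ$ state is related to the whole family of $AbsQ_0$ states obtained by choosing which pending enqueues have already been linearized and by choosing a compatible linear extension.

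Then I would check the clauses for each rule of $AbsQ$, processed backwards. The initial condition $bs[s_0^1]=\set{s_0^2}$ is immediate. The call/return rules (\textsc{call-enq}, \textsc{ret-enq1/2}, \textsc{call-deq}, \textsc{ret-deq}) are matched by the homonymous rules of $AbsQ_0$ possibly surrounded by internal $lin(enq)$ steps: a backward \textsc{call-enq} either merely removes the call, or removes a call preceded by a prophesied \textsc{lin-enq} that had inserted the value at the head of $\sigma$. The empty-dequeue rule \textsc{lin-deq2} is matched by the $AbsQ_0$ rule of the same name, since its guard ($\ell_2(o)={\tt PEND}$ for every $o\in O$) lets the prophecy keep all enqueues un-linearized, making the related $\sigma$ empty, which is exactly the $AbsQ_0$ guard.

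The hard part will be the linearization rule \textsc{lin-deq1}, which removes a $<$-minimal enqueue $k'$ carrying value $d$. Given an $AbsQ_0$ state $u'$ related to the post-state $s'$ (where $k'$ has been deleted), I must produce a related pre-state $u$ with $u\xrightarrow{lin(deq,d,k)}u'$; in $AbsQ_0$ this forces $\sigma_u=\sigma_{u'}\cdot d$, so $d$ must sit at the rightmost end, which is exactly what the orientation of clause~(iii) guarantees from the $<$-minimality of $k'$. If $k'$ was pending in $s$, the prophecy must already have placed it in $\sigma$, which is permissible precisely because a minimal pending enqueue may be chosen as linearized. The remaining obligation is to verify that $bs$ is preserved, i.e.\ that $L$ stays downward-closed under $<$, so that deleting a minimal element keeps $\sigma$ a valid extension and never strands an un-linearized $<$-predecessor of a linearized enqueue. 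This holds because $<$-edges are created only by \textsc{call-enq}, which orders a new enqueue \emph{after} operations completed at that moment; hence every $<$-predecessor still present in $O$ is a completed ({\tt COMP}) enqueue, forcing $cp^0=A_2$ and membership in $L$. Once these cases are discharged, soundness of backward simulations gives the claim.
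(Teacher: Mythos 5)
Your overall route is exactly the paper's: a normal $(C\cup R\cup Lin(deq))$-backward simulation from $AbsQ$ to $AbsQ_0$ in the sense of Definition~\ref{def:back_app}, with $\sigma$ a linear extension of $<$ restricted to the already-linearized enqueues (minimal elements at the dequeuing end), a rule-by-rule backward case analysis, and the downward-closure observation that $<$-predecessors surviving in $O$ are {\tt COMP} and hence linearized. However, your relation has a genuine defect: clause (ii) permits a \emph{pending enqueue that is no longer in $O$} (i.e., $cp(k)=A_1$ and $k\notin O$, because a dequeue already removed it) to be related to $AbsQ_0$ states with $cp^0(k)=A_1$. The paper's relation contains the additional clause that such an enqueue \emph{must} have been linearized ($cp^0(k)=A$ whenever $k\notin O$ and $cp(k)=A_1$), and that clause is load-bearing.

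Here is where your relation breaks. Take the step $s\xrightarrow{lin(deq,d,k)}s'$ of \textsc{lin-deq1} removing a \emph{pending} minimal enqueue $k'$ with $\ell_1(k')=d$. Clause (ii-c) of Definition~\ref{def:back_app} requires, for \emph{every} $u'\in bs[s']$ (not only the "intended" ones), a pre-state $u\in bs[s]$ with a single transition $u\xrightarrow{lin(deq,d,k)}u'$ of $AbsQ_0$; that transition forces $\sigma_u=\sigma_{u'}\cdot d$ and leaves $cp^0(k')$ unchanged. Now choose $u'$ with $cp^0_{u'}(k')=A_1$, which your relation admits since $k'\notin O_{s'}$ and so your clause (iii) says nothing about $k'$. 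Then $cp^0_u(k')=A_1$ as well, so $k'\notin L_u$, and clause (iii) applied to $(s,u)$ forces $d\notin\sigma_u$ (values are added at most once), contradicting $\sigma_u=\sigma_{u'}\cdot d$. No valid $u$ exists, so the simulation conditions fail on this pair. Your own remark that "the prophecy must already have placed it in $\sigma$" is precisely the missing invariant, but in a backward simulation you cannot retroactively alter $u'$; the fact has to be baked into the relation itself, exactly as the paper does. A second, smaller looseness: in the backward \textsc{call-enq} case the value $d$ need not sit at the head of $\sigma_{u'}$ --- it can lie anywhere to the left of the values of completed enqueues, so the matching $AbsQ_0$ sequence must also re-linearize (via extra internal $lin(enq)$ steps) the pending enqueues whose values lie to the left of $d$, as in the paper's construction; your blanket phrase "possibly surrounded by internal $lin(enq)$ steps" covers this only implicitly. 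With the extra invariant added and the \textsc{call-enq} matching sequence spelled out, your proof coincides with the paper's.
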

\begin{proof}
We define a normal $C\cup R\cup Lin(deq)$-backward simulation $bs$ from $AbsQ$ to $AbsQ_0$ as follows. Given an $AbsQ$ state $s=\tup{O,<,\ell,rv,cp}$ and an $AbsQ_0$ state $t=\tup{\sigma,in^0,rv^0,cp^0}$ we have that $(s,t)\in bs$ iff the following hold:
\begin{itemize}
	\item the sequence $\sigma$ is a linearization of a partial order $(D,\prec)$ where $D$ contains values labeling elements of $O$ and all the values corresponding to completed enqueues, i.e., $\ell_1({\tt COMP}(O))\subseteq D\subseteq \ell_1(O)$ ordered according to the happens-before order between the enqueues that added them, i.e., $d_1\prec d_2$ if{f} there exists $k_1,k_2$ such that $\ell_1(k_1)=d_1$, $\ell_1(k_2)=d_2$, and $k_1 < k_2$.
	\item the return values fixed at dequeue linearization points are the same, i.e., for every $k$, $rv(k)=rv^0(k)$,
	\item every dequeue is at the same control point in both $s$ and $t$, i.e., for every $k$ and $i\in \{1,2,3\}$, $cp(k)=R_i$ iff $cp^0(k)=R_i$,
	\item every pending enqueue has the same input value in both $s$ and $t$, i.e., for every $k$, $\ell_1(k)=in^0(k)$,
	\item a pending enqueue from $O$ has been linearized whenever its value is contained in $\sigma$, i.e., for every $k$, $cp^0(k)=A$ if $\ell_1(k)\in D$ and $\ell_2(k)={\tt PEND}$, 
	\item a pending enqueue from $O$ hasn't been linearized whenever its value is not in $\sigma$, i.e., for every $k$, $cp^0(k)=A_1$ iff $\ell_1(k)\not\in D$ and $\ell_2(k)={\tt PEND}$, 
	\item a pending enqueue which is not in $O$ has been linearized, i.e., for every $k$, $cp^0(k)=A$ if $k\not\in O$ and $cp(k)=A_1$, 
	\item an enqueue is completed in $s$ whenever it is completed in $t$, i.e., for every $k$, $cp(k)=A_2$ iff $cp^0(k)=A_2$,
\end{itemize}

For the conditions described above, if we fix the set $D$ and $\sigma_t$, then the state $t$ related to $s$ becomes unique. We use this fact in the proof. In some places, we only give $D$, $\sigma_t$ and $s$ without explicitly defining $t$ or show that there exists $t$ with the given $\sigma_t$ that is related to $s$ by just finding a $D$ such that $\sigma_t$ is a linearization of $(D, \prec)$ where $\prec$ is induced from $<_s$.

or $\sigma_t$ and not describing $t$ explicitly.

In the following, we show that indeed $bs$ is a normal $C\cup R\cup Lin(deq)$-backward simulation from $AbsQ$ to $AbsQ_0$. %

\begin{itemize}
\item[$\langle i \rangle$] $bs[s^{AbsQ}_0] = \{ s^{AbsQ_0}_0 \}$.

\item[\textsc{call-enq}] Let $s \xrightarrow{inv(enq,d,k)}_{AbsQ} s'$ and $t' \in bs[s']$. Either $k \in D_{t'}$ or not. 

First consider the former case. We know that $\ell_{s'}(k) = (d, \texttt{PEND})$ and $k$ is maximal in $s'$. Hence $\sigma_{t'} = \rho \circ \langle d \rangle \circ \pi$ where $\pi$ contains linearization of pending elements in $O_{s'}$. Then, pick $\sigma_t = \rho$. We can find such a $t \in bs[s]$ with $\sigma_t$. Let $(D, \prec)$ be the partial order that is used while constructing $\sigma_{t'}$ from $O_s$ and $<_s$. We can find $(D', \prec')$ for relating $s$ to $t$ such that $D'$ does not contain the values of pending elements that formed $\pi$ suffix of $s_{t'}$ and $d$ coming from linearization of $k \in O_{s'}$. 

One can also see that $t \xrightarrow{\alpha}_{AbsQ_0} t'$ where $\alpha = inv(enq,d,k), lin(enq,d,k), \\lin(enq,d_1,k_1), ..., lin(enq,d_j,k_j)$ such that $\pi = d_1,...,d_j$ and $k_1,...,k_j \in O_{s'}$ are the pending elements that are  linearized to form $\pi$. Note that $\alpha$ obeys the definition of normal backward simulation definition.

For the second case, pick $t$ such that $\sigma_t = \sigma_{t'}$. We can find a $t$ with $\sigma_t$ related to $s$ by $bs$ using the same $(D,\prec)$ partial order that is used while relating $s'$ to $t'$. $\ell_1(\texttt{COMP}(O_s)) \subseteq D$ holds because $\texttt{COMP}(O_s) = \texttt{COMP}(O_{s'})$.

\item[\textsc{call-deq}] Let $s \xrightarrow{inv(deq,d,k)}_{AbsQ} s'$ and $t' \in bs[s']$. Pick $t$ such that it is equal to $t'$ in every field except that $k \notin dom(cp^0_t)$. Then, $t \in bs[s]$ and $t \xrightarrow{inv(deq,d,k)_{AbsQ_0}} t'$.

\item[\textsc{lin-deq1}]  Let $s \xrightarrow{lin(deq,d,k)}_{AbsQ} s'$, $t' \in bs[s']$ and $d \neq \texttt{EMPTY}$. We pick $t$ such that $\sigma_t = \langle d \rangle \circ \sigma_{t'}$. We first show that $t \in bs[s]$. Let $(D, \prec)$ be the partial order that is linearized to obtain $\sigma_{t'}$ and $k' \in O_s$ be the element such that ${\ell_s}_1(k') = d$. We know that $k'$ is minimal in $<_s$ due to the premise of the rule \textsc{lin-deq1}. Hence, we can obtain $(D', \prec')$ such that $D' = D \cup \{{\ell_s}_1(k')\}$ and $\sigma_t$ is a linearization of it.

In addition, $t \xrightarrow{lin(deq,d,k)}_{AbsQ_0} t'$. The action $lin(deq,d,k)$ is enabled in state $t$ since $d$ is the minimum element of $\sigma_t$. Note that the transition relating $t$ to $t'$ obeys the definition of normal forward simulation.

\item[\textsc{lin-deq2}]  Let $s \xrightarrow{lin(deq,\texttt{EMPTY},k)}_{AbsQ} s'$ and $t' \in bs[s']$. We pick $(D, \prec)$ for relating $s$ to $t$ such that $D = \emptyset$. Such a $D$ is a valid choice since all the elements $O_s$ are pending. Then, $\sigma_t = \langle \rangle$ is the only linearization of $(D, \prec)$. Hence, $lin(deq,\texttt{EMPTY},k)$ action is enabled in $AbsQ_0$ and $t \xrightarrow{lin(deq,\texttt{EMPTY},k)}_{AbsQ_0} t'$ holds.

\item[\textsc{ret-enq1}]  Let $s \xrightarrow{ret(enq,k)}_{AbsQ} s'$, $\ell_s(k) = (d,\texttt{PEND})$ and $t' \in bs[s']$. Assume $(D, \prec)$ be the partial order of which linearization is $\sigma_{t'}$. Pick $D'=D$. Then, $\ell_1(\texttt{COMP}(O_s)) \subseteq D \subseteq \ell_1(O_s)$ holds since $\texttt{COMP}(O_s) = \texttt{COMP}(O_{s'}) \setminus \{k\}$ and $k \in \texttt{PEND}(O_s)$. Construct $t \in bs[s]$ such that $\sigma_t = \sigma_{t'}$ is obtained by linearizing the partial order $(D',\prec')$. Then, $t \xrightarrow{ret(enq,k)}_{AbsQ_0} t'$ holds and it is a valid action with respect to normal backward-simulation relation definition. 

\item[\textsc{ret-enq2}]  Let $s \xrightarrow{ret(enq,k)}_{AbsQ} s'$, $k \notin O_s$ and $t' \in bs[s']$. Since $O_s = O_{s'}$ and $\texttt{COMP}(O_s) = \texttt{COMP}(O_{s'})$, we can pick $D' = D$ where $(D, \prec)$ is the strict partial order such that $\sigma_{t'}$ is its linearization. Construct $t \in bs[s]$ such that $\sigma_t = \sigma_{t'}$ is obtained by linearizing the partial order $(D',\prec')$. Then, $t \xrightarrow{ret(enq,k)}_{AbsQ_0} t'$ holds and it is a valid action with respect to normal backward-simulation relation definition. 

\item[\textsc{ret-deq}]  Let $s \xrightarrow{ret(deq,d,k)}_{AbsQ} s'$ and $t' \in bs[s']$. Assume $(D, prec)$ is the partial order of which linearization is $\sigma_{t'}$. Construct $t \in bs[s]$ such that $\sigma_t = \sigma_{t'}$ and $(D, \prec
)$ is the partial order $\sigma_t$ is obtained from. $\texttt{COMP}(O_s) \subseteq D \subseteq O_s$ since $\texttt{COMP}(O_s) = \texttt{COMP}(O_{s'})$ and $O_s = O_{s'}$. Then, $t \xrightarrow{ret(deq,d,k)}_{AbsQ_0} t'$ holds. We have $rv_s(k) = rv^0_t(k)$ since $t \in bs[s]$. Hence the $ret(deq,d,k)$ is enabled in $t$. Moreover, $ret(deq,d,k)$ is a valid transition with respect to the normal backward simulation relation definition.
\end{itemize}
\end{proof}

\begin{lemma} 
$AbsQ_0$ is a refinement of $AbsQ$.
\end{lemma}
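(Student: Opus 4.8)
The plan is to establish the reverse refinement by exhibiting a \emph{normal} $C\cup R\cup Lin(deq)$-forward simulation $\mathit{fs}$ from $AbsQ_0$ to $AbsQ$ (Definition~\ref{def:for_app}) and then invoking the soundness direction of Theorem~\ref{th:forSim}, together with the observation (noted after Definition~\ref{def:gref}) that $\Gamma$-refinement implies refinement. I would define $\mathit{fs}$ as essentially the converse of the backward simulation $bs$ from the previous lemma: an $AbsQ_0$ state $t=\tup{\sigma,in^0,rv^0,cp^0}$ is related to an $AbsQ$ state $s=\tup{O,<,\ell,rv,cp}$ whenever $\sigma$ is a linearization of the partial order $(D,\prec)$ induced by $<$ on the set $D$ of values currently present in $\sigma$ (so that $\ell_1({\tt COMP}(O))\subseteq D\subseteq \ell_1(O)$), the return values and dequeue control points coincide ($rv=rv^0$ and $cp(k)=R_i \Leftrightarrow cp^0(k)=R_i$), pending enqueue inputs agree ($\ell_1(k)=in^0(k)$), and the enqueue control points are linked by the same clauses used for $bs$ (a pending enqueue of $O$ has been linearized in $t$ iff its value lies in $\sigma$, an enqueue is completed in $s$ iff $cp^0$ marks it $A_2$, and so on). Since $\sigma=\epsilon$ and $O=\emptyset$ in the initial states, condition~(i), $\mathit{fs}[s_0^{AbsQ_0}]=\{s_0^{AbsQ}\}$, holds immediately.

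The verification then proceeds by case analysis on the $AbsQ_0$ transition rules. The rules \textsc{call-enq}, \textsc{call-deq}, \textsc{ret-deq}, \textsc{ret-enq} and the two dequeue linearization rules carry labels in $\Gamma=C\cup R\cup Lin(deq)$ and are each matched by the identically-labelled $AbsQ$ rule (\textsc{ret-enq} is matched by \textsc{ret-enq1} or \textsc{ret-enq2} according to whether $k\in O$); in all these cases the linking conditions are preserved by essentially the same computation as in the $bs$ proof, read in the opposite direction. The only genuinely non-trivial rule is \textsc{lin-enq}: its label $lin(enq,d,k)$ lies outside $\Gamma$, so it must be matched by the \emph{empty} sequence of $AbsQ$ steps, i.e.\ I must show that if $(t,s)\in\mathit{fs}$ and $t\xrightarrow{lin(enq,d,k)}t'$ with $t'=\tup{d\cdot\sigma,\ldots}$, then $(t',s)\in\mathit{fs}$ for the \emph{same} $s$. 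Here the enqueue $k$ is pending with $cp^0(k)=A_1$, hence pending in $s$ and therefore maximal in $<$; since $\sigma$ witnesses a linearization of $(D,\prec)$ and $d=\ell_1(k)$ is $\prec$-maximal, prepending $d$ yields a linearization of the extended order on $D\cup\{d\}$, which is exactly what $(t',s)\in\mathit{fs}$ requires. A dual point arises in \textsc{lin-deq1}: when $AbsQ_0$ removes the tail $d$ of $\sigma=\sigma'\cdot d$, that $d$ is $\prec$-minimal, so the unique enqueue $k'$ with $\ell_1(k')=d$ is minimal in $<$ and the $AbsQ$ rule \textsc{lin-deq1} is enabled and removes exactly $k'$; uniqueness of $k'$ is guaranteed by the standing no-duplicate-values assumption on queue implementations.

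I expect the main obstacle to be the \textsc{lin-enq} case and, more broadly, arguing that the state-linking invariant ``$\sigma$ is a linearization of the happens-before order $<$'' is genuinely maintained along every $AbsQ_0$ run, even though $AbsQ_0$ may linearize pending enqueues in an arbitrary order. The point to nail down is that $AbsQ_0$'s rule \textsc{ret-enq} forces an enqueue to be linearized before it returns, so any completed enqueue already lies in $\sigma$ before any strictly later enqueue is called or linearized; concurrent enqueues are unordered in $<$ and thus admit any relative position in $\sigma$. Establishing this maximality and consistency bookkeeping carefully---so that prepending in \textsc{lin-enq} never violates a $\prec$-constraint and removing the tail in \textsc{lin-deq1} always exposes a $<$-minimal enqueue---is where the real work lies; the remaining rules are routine once the linking invariant is stated precisely.
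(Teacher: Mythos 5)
Your proposal is correct and follows essentially the same route as the paper: the paper's proof also exhibits a normal $C\cup R\cup Lin(deq)$-forward simulation from $AbsQ_0$ to $AbsQ$ with exactly your linking invariant ($\sigma$ a linearization of the happens-before-induced order $(D,\prec)$ with $\ell_1({\tt COMP}(O))\subseteq D\subseteq \ell_1(O)$, matching control points and return values), matches \textsc{lin-enq} by the empty sequence using maximality of pending enqueues (taking $D'=D\cup\{k\}$), handles \textsc{lin-deq1} via $\prec$-minimality of the removed value plus determinism of $AbsQ$, and splits \textsc{ret-enq} into the \textsc{ret-enq1}/\textsc{ret-enq2} cases exactly as you do. Your identification of \textsc{lin-enq} and the linearization-bookkeeping as the crux matches where the paper's proof does its real work.
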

\begin{proof}
We define a normal $C\cup R\cup Lin(deq)$-forward simulation $fs$ from $AbsQ_0$ to $AbsQ$ as follows. 
Given $AbsQ_0$ state $t=\tup{\sigma,in^0,rv^0,cp^0}$ and an $AbsQ$ state $s=\tup{O,<,\ell,rv,cp}$ we have that $(t,s)\in fs$ iff the following hold:
\begin{itemize}
	\item the sequence $\sigma$ is a linearization of a partial order $(D,\prec)$ where $D$ contains values labeling elements of $O$ and all the values corresponding to completed enqueues, i.e., $\ell_1({\tt COMP}(O))\subseteq D\subseteq \ell_1(O)$ ordered according to the happens-before order between the enqueues that added them, i.e., $d_1\prec d_2$ if{f} there exists $k_1,k_2$ such that $\ell_1(k_1)=d_1$, $\ell_1(k_2)=d_2$, and $k_1 < k_2$.
	\item every dequeue is at the same control point in both $s$ and $t$, i.e., for every $k$ and $i\in \{1,2,3\}$, $cp(k)=R_i$ iff $cp^0(k)=R_i$,
	\item every enqueue is pending in $s$ whenever it is pending in $t$, i.e., for every $k$, $cp(k)=A_1$ iff $cp^0(k)\in \{A_1,A\}$,
	\item every enqueue is completed in $s$ whenever it is completed in $t$, i.e., for every $k$, $cp(k)=A_2$ iff $cp^0(k) = A_2$,
	\item every pending enqueue which is not linearized or whose value is present in $\sigma$ is a member of $O$, i.e., for every $k$, 
	\begin{align*}
	&k\in O\land \ell(k)=(d,{\tt PEND})\mbox{ iff } \\
	&\hspace{2cm}(cp^0(k)=A_1\land in^0(k)=d)\vee (\exists i.\ \sigma_i =d\land cp^0(k)=A\land in^0(k)=d)
	\end{align*}
	\item every completed enqueue whose value is present in $\sigma$ is a member of $O$, i.e., for every $k$, 
	\begin{align*}
	&k\in O\land \ell(k)=(d,{\tt COMP})\mbox{ iff } \exists i.\ \sigma_i =d\land cp^0(k)=A_2\land in^0(k)=d
	\end{align*}
	\item pending enqueues are maximal in $<$, i.e., for every $k$ and $k'$, $k \not< k'$ if $\ell_2(k)={\tt PEND}$,
	\item the return values fixed at dequeue linearization points are the same, i.e., for every $k$, $rv(k)=rv^0(k)$.
\end{itemize}
In the following, we show that indeed $fs$ is a normal $C\cup R\cup Lin(deq)$-forward simulation from $AbsQ_0$ to $AbsQ$.

\begin{itemize}
\item[$\langle i \rangle$] $fs[s_0^{AbsQ_0}] = \{s_0^{AbsQ}\}$
\item[\textsc{call-enq}] Let $t \xrightarrow{inv(enq,d,k)}_{AbsQ_0} t'$ and $s \in fs[t]$. Then, $inv(enq,d,k)$ is an enabled action in $AbsQ$ since premise of \textsc{call-enq} holds in $t$ and $s \in fs[t]$. Obtain $s'$ such that $s \xrightarrow{inv(enq,d,k)}_{AbsQ} s'$. Note that $s'$ is unique since $AbsQ$ is deterministic with respect to $C \cup R \cup Lin(deq)$. 

Next, we show that $s' \in fs[t']$. Let $(D, \prec)$ be the partial order used while relating $t$ to $s$. Same partial order can be used while relating $\sigma_{s'}$ to $t'$ since $\texttt{COMP}(O_s) = \texttt{COMP}(O_{s'}$, $O_s \subseteq O_{s'}$ and $<_s \subseteq <_{s'}$. The only change we have in control point fields after the actions is that $cp^0_{s'}(k)= A_1$ and $cp_{t'}(k)=A_1$ which satisfies the conditions on $fs$. Moreover $k$ is a maximal pending node in $t'$ as required by the $fs$ conditions. Consequently, $s' \in fs[t']$.

\item[\textsc{call-deq}] Let $t \xrightarrow{inv(deq,k)}_{AbsQ_0} t'$ and $s \in fs[t]$. Then, $inv(deq,k)$ is an enabled action in $AbsQ$ since premise of \textsc{call-deq} holds in $t$ and $s \in fs[t]$. Obtain $s'$ such that $s \xrightarrow{inv(deq,k)}_{AbsQ} s'$. Note that $s'$ is unique since $AbsQ$ is deterministic with respect to $C \cup R \cup Lin(deq)$. 

Next, we show that $s' \in fs[t']$. Since $\sigma_s =\sigma_{s'}$, $O_s = O_{s'}$ and $\texttt{COMP}(O_s) = \texttt{COMP}(O_{s'}$, we can pick same $(D,\prec)$  partial order in $s'$ and show that $\sigma_{t'}$ is a linearization of it. The only change in control points after the transitions is that $cp^0_{t'}(k) = cp_{s'}(k) = R_1$ which does not violate any condition in $fs$. Consequently, $s' \in fs[t']$. 

\item[\textsc{lin-enq}] Let $t \xrightarrow{lin(enq,d,k)}_{AbsQ_0} t'$ and $s \in fs[t]$. Then, pick $s' = s$ such that $s \xrightarrow{\epsilon}_{AbsQ} s'$. Note that $\epsilon$ is a valid transition with respect to the normal forward simulation relation definition. We show that $s \in fs[t']$. If $(D, \prec)$ is the partial order in $s$ of which one linearization is $\sigma_t$, we pick $D' = D \cup \{k\} \subseteq O_s$. $(D', \prec')$ can be linearized to $\sigma_{t'}$ since $k$ is a maximal pending node and can be linearized at the end. Moreover, the only change in control point $cp^0_{t'}(k) = A$ which does not violate the $fs$ conditions.

\item[\textsc{lin-deq1}] Let $t \xrightarrow{lin(deq,d,k)}_{AbsQ_0} t'$, $d \neq \texttt{EMPTY}$ and $s \in fs[t]$. Then, $lin(deq,d,k)$ is an enabled action in $AbsQ$. There must exist $d \in D \subseteq \ell_1(O_s)$ such that ${\ell_s}_1(k') = d$ and $k'$ is minimal in $D$ (since $d$ is linearized as the minimum element in $\sigma_t$ according to premise of \textsc{lin-deq1} of $AbsQ$). Obtain $s'$ such that $s \xrightarrow{lin(deq,d,k)}_{AbsQ} s'$. Note that $s'$ is unique since $AbsQ$ is deterministic with respect to $C \cup R \cup Lin(deq)$. 

Next, we show that $s' \in fs[t']$. Let $(D, \prec)$ be the partial order used while relating $t$ to $s$ such that $\sigma_t$ is a linearization of this partial order. Since we have shown that $k'$ is minimal in that partial order, $\sigma_{t'}$ is a linearization of $(D', \prec')$ where $D' = D \setminus \{\ell_1(k')\}$. Note that $\ell_1(\texttt{COMP}(O_{s'})) \subseteq D' \subseteq \ell_1(O_{s'})$ holds. The only change in control points is that $cp^0_{t'}(k) = cp_{s'}(k) = R_2$ which does not violate the conditions for relating $t'$ to $s'$. Note that the fifth condition of $fs$ still holds for $k'$ while relating $t'$ to $s'$. After transitions $rv^0_{t'}(k) = rv_{s'}(k) = d$ and the last condition on $fs$ is preserved.

\item[\textsc{lin-deq2}] Let $t \xrightarrow{lin(deq,\texttt{EMPTY},k)}_{AbsQ_0} t'$ and $s \in fs[t]$. Then, $lin(deq,d,k)$ is an enabled action in $AbsQ$. If $\texttt{COMP}(O_t) \neq \emptyset$, then $D$ use for linearization cannot be $\emptyset$ $\sigma_t = \langle\rangle$ cannot be a linearization of $(D, \prec)$. Obtain $s'$ such that $s \xrightarrow{lin(deq,\texttt{EMPTY},k)}_{AbsQ} s'$. Note that $s'$ is unique since $AbsQ$ is deterministic with respect to $C \cup R \cup Lin(deq)$. 

Next, we show that $s' \in fs[t']$. Let $(D, \prec)$ be the partial order used while relating $t$ to $s$ such that $\sigma_t = \langle \rangle $ is a linearization of this partial order. We can use the same $(D, \prec)$ for relating $t'$ to $s'$ because $\sigma$ field is the same for both $s$, $s'$; and $O$, $<$, $\ell$ fields are same for both $t$ and $t'$. The only change in control points is that $cp^0_{t'}(k) = cp_{s'}(k) = R_2$ which does not violate the conditions for relating $t'$ to $s'$.  After transitions $rv^0_{t'}(k) = rv_{s'}(k) = d$ and the last condition on $fs$ is preserved.

\item[\textsc{ret-enq}] Let $t \xrightarrow{ret(enq,k)}_{AbsQ_0} t'$ and $s \in fs[t]$. Then, there are two cases assuming data independence: (i) $in^0_t(k)=d$ and $\exists i. \sigma_t(i)= d$ (ii) or not.   

First, consider the former case. Then, \textsc{ret-enq1} rule of $AbSQ$ is applicable. Its precondition holds since fifth condition of $fs$ holds while relating $t$ to $s$. Apply this rule ($ret(enq,k)$) to obtain $s'$. Note that $s'$ is unique since $AbsQ$ is deterministic with respect to $C \cup R \cup Lin(deq)$ and it is a valid action according to the normal forward-simulation relation definition.

Next, we show that $s' \in fs[t']$. Since $\exists i. \sigma_t(i) = d$, we know that $d \in D$ where $(D, \prec)$ is the partial order satisfying first condition of $fs$ while relating $t$ to $s$, and $k \in O_s$ takes part in the linearization i.e., ${\ell_t}_1(k) \in D$. We can use the same partial order $(D,\prec)$ for relating $t'$ to $s'$ such that it satisfies the first condition of $fs$. The only change in control points is that $cp^0_{t'}(k) = cp_{s'}(k) = A_2$ which does not violate the conditions for relating $t'$ to $s'$. Note that the sixth condition of $fs$ also continue to hold for $k$ for the post-states.

Second, consider the latter case:  $in^0_t(k)=d$, but $\forall i. \sigma_t(i)\neq d$. Since $(t,s) \in fs$, $k \notin O_s$ by the fifth and sixth conditions. Hence, the pre-condition of \textsc{ret-enq2} is satisfied by $t$. Apply this rule ($ret(enq,k)$) to obtain $s'$. Note that $s'$ is unique since $AbsQ$ is deterministic with respect to $C \cup R \cup Lin(deq)$ and it is a valid action according to the normal forward-simulation relation definition.

Next, we show that $s' \in fs[t']$. For satisfying the first condition, one can use the same $(D, \prec)$ partial order that is used for relating pre-states since $\sigma$ fields of $t$, $t'$ and $O$, $<$, $\ell$ fields of $s$ and $s'$ are the same.  The only change in control points is that $cp^0_{t'}(k) = cp_{s'}(k) = A_2$ which does not violate the conditions for relating $t'$ to $s'$.

\item[\textsc{ret-deq}] Let $t \xrightarrow{ret(deq,d,k)}_{AbsQ_0} t'$ and $s \in fs[t]$. Then, $ret(deq,d,k)$ is an enabled action in $AbsQ$ due to premise \textsc{ret-deq} of $AbsQ_0$ and the last condition on $fs$ (since $(t,s) \in fs$). Obtain $s'$ such that $s \xrightarrow{ret(deq,d,k)}_{AbsQ} s'$. Note that $s'$ is unique since $AbsQ$ is deterministic with respect to $C \cup R \cup Lin(deq)$.

We see that $s' \in fs[t']$. Pre-states are equal to the post-states with the only exception in the control points such that $cp^0_{t'}(k) = cp_{s'}(k) = R_3$. All the conditions except the third one continues to hold in the post states since they hold in the pre-states. The third rule regarding the control points of dequeues also continue to the hold since changes in the control point of $k$ does not violate it.
\end{itemize}
\end{proof} %
\section{Proof of Theorem~\ref{th:absImplStack}}\label{app:absImplStack}

We show that $AbsS$ and $AbsS_0$ refine each other. The standard reference implementation $AbsS_0$ is defined exactly as the one for queues, $AbsQ_0$, except that pop linearization points extract values from the beginning of the sequence stored in the state.

\begin{figure} [t]
{\scriptsize
  \centering
  \begin{mathpar}
    \inferrule[call-push]{
      k\not\in dom(cp^0) \\ 
      d\neq {\tt EMPTY}
    }{
      \sigma,in^0,rv^0,cp^0
      \xrightarrow{inv(push,d,k)}
      \sigma,in^0[k\mapsto d], rv^0,cp^0[k\mapsto A_1]
    }\hspace{5mm}
    \inferrule[lin-push]{
      cp^0(k)=A_1
    }{
      \sigma,in^0,rv^0,cp^0
      \xrightarrow{lin(push,d,k)}
      d\cdot\sigma,in^0,rv^0,cp^0[k\mapsto A]
    }\hspace{5mm}
    
        \inferrule[ret-push]{
      cp^0(k)=A
    }{
      \sigma,in^0,rv^0,cp^0
      \xrightarrow{ret(push,k)}
      \sigma,in^0,rv^0,cp^0[k\mapsto A_2]
    }\hspace{5mm}

    \inferrule[call-pop]{
      k\not\in dom(cp^0) \\ 
    }{
      \sigma,in^0,rv^0,cp^0
      \xrightarrow{inv(pop,k)}
      \sigma,in^0,rv^0,cp^0[k\mapsto R_1]
    }\hspace{5mm}
        \inferrule[lin-pop1]{
      cp^0(k)=R_1 \\
      \sigma = d\cdot \sigma' 
    }{
      \sigma,in^0,rv^0,cp^0
      \xrightarrow{lin(pop,d,k)}
      \sigma',in^0,rv^0[k\mapsto d],cp^0[k\mapsto R_2]
    }\hspace{5mm}

        \inferrule[lin-pop2]{
      cp^0(k)=R_1 \\
      \sigma = \epsilon
    }{
      \sigma,in^0,rv^0,cp^0
      \xrightarrow{lin(pop,{\tt EMPTY},k)}
      \sigma,in^0,rv^0[k\mapsto {\tt EMPTY}],cp^0[k\mapsto R_2]
    }\hspace{5mm}
    \inferrule[ret-pop]{
      cp^0(k)=R_2 \\
      rv^0(k) = d
    }{
      \sigma,in^0,rv^0,cp^0
      \xrightarrow{ret(pop,d,k)}
      \sigma,in^0,rv^0,cp^0[k\mapsto R_3]
    }\hspace{5mm}
          \end{mathpar}
  }
 \vspace{-4mm}
  \caption{The transition relation of $AbsS_0$. 
  }
  \label{fig:transitions:AbsS_0}
\vspace{-2mm}
\end{figure}

Thus, the states of $AbsS_0$ are tuples $\tup{\sigma,in^0,rv^0,cp^0}$ where $\sigma\in\<Vals>^*$ is a sequence of values, $in^0:\<Ops> ~> \<Vals>$ records the input value of a push, $rv^0:\<Ops> ~> \<Vals>$ records the return value of a pop fixed at its linearization point ($~>$ denotes a partial function), and $cp^0:\<Ops> ~> \{A_1,A,A_2,R_1,R_2,R_3\}$ records the control point of every push ($A_1, A,A_2$) or pop operation ($R_1,R_2,R_3$).
All the components are $\emptyset$ in the initial state, and the transition relation $->$ is defined in Fig.~\ref{fig:transitions:AbsS_0}. The alphabet of $AbsS$ contains call/return actions and push/pop linearization points.

To prove that $AbsS$ is a refinement of $AbsS_0$ we define a normal $C\cup R$-backward simulation (i.e, a backward simulation as in Definition~\ref{def:back_app}) from $AbsS$ to $AbsS_0$. The reverse is shown using a normal $C\cup R$-forward simulation (i.e, a forward simulation as in Definition~\ref{def:for_app}).

\begin{lemma} 
$AbsS$ is a refinement of $AbsS_0$.
\end{lemma}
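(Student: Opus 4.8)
The plan is to mirror the proof that $AbsQ$ refines $AbsQ_0$ and exhibit a normal $C\cup R$-backward simulation $bs$ from $AbsS$ to $AbsS_0$ (Definition~\ref{def:back_app}), which by~\cite{DBLP:journals/iandc/LynchV95,DBLP:journals/tcs/AbadiL91} entails $H(AbsS)\subseteq H(AbsS_0)$. Since refinement only compares projections onto $C\cup R$, every internal action is non-observable: in particular the commit points $com(pop,d,k)$ of $AbsS$ and \emph{both} families of linearization points $lin(push,d,k)$ and $lin(pop,d,k)$ of $AbsS_0$ are silent, so a commit step of $AbsS$ will be matched backward by a silent sequence of $AbsS_0$ steps (case (ii-d)), while calls and returns are matched by silent $AbsS_0$ sequences flanked by the same call/return action (cases (ii-a),(ii-b)). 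The backbone of $bs$ reuses the queue invariant: the stack contents $\sigma$ must be a linearization of a partial order $(D,\prec)$, where $D$ carries the values of the pushes still in $O$ together with all completed pushes, ordered by the happens-before $<$; committed pops must agree on return values, $rv(k)=rv^0(k)$; and the control points of push operations correspond exactly as before.

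The essential new difficulty is that for stacks a commit point is \emph{not} a linearization point, so the control points of pops cannot be matched rigidly. As the examples in Figure~\ref{fig:stackSim} show, a pop committing to $d$ must, in the reference $AbsS_0$, be linearized at the \emph{earlier} instant when $d$ sits at the top of $\sigma$. Consequently $bs$ must permit a pop still uncommitted in $AbsS$ (control point $R_1$) to have already been linearized in $AbsS_0$ (control point $R_2$), and must insist that the value recorded there, $rv^0(k)$, is exactly the value that pop is \emph{going} to commit. This future-dependent bookkeeping is precisely what backward simulation supplies for free: at a commit step we inspect the already-built successor $u'\in bs[s']$, in which pop $k$ already carries $rv^0(k)=d$, and pick the predecessor $u$ accordingly. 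The role of the components $be(k)$ and $ov(k)$ is to certify \emph{which} values $k$ is still allowed to return, so $bs$ must additionally require $be$ and $ov$ to be consistent with $\sigma$ in the sense that every value admissible for $k$ can be exposed at the top of some legal linearization of the surviving pushes at the moment $k$ is linearized in $AbsS_0$.

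With the relation fixed, I would discharge the conditions of Definition~\ref{def:back_app} rule by rule over Figure~\ref{fig:transitions:AbsS}. Condition (i) is immediate since all components start empty. For \textsc{call-push}, \textsc{call-pop}, \textsc{ret-push1}, \textsc{ret-push2} and \textsc{ret-pop} the matching $AbsS_0$ computation is the same call/return action with silent $lin(push)$/$lin(pop)$ steps attached on the appropriate side, exactly as in the queue proof; the only real content is re-establishing the linearization invariant and the $be$/$ov$ consistency across the step (e.g.\ \textsc{call-pop} installs $be(k)=maxCo(O)$ and $ov(k)=\texttt{PEND}(O)$, which must match the freedom of a freshly called, not-yet-linearized pop in $AbsS_0$). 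The silent rule \textsc{com-pop2} requires only $be(k)=\emptyset$, which guarantees that an empty-stack configuration of $\sigma$ is reachable by a legal reordering, so the pop can be linearized there by \textsc{lin-pop2}.

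The main obstacle is \textsc{com-pop1}. Here $AbsS$ removes a push $k'$ with $\ell_1(k')=d$ lying in $be(k)\cup ov(k)$, deletes it from $(O,<)$, and rewires every $be(k_1)$ that contained $k'$ to those predecessors of $k'$ that thereafter are followed only by pushes overlapping $k_1$. I must show this is simulated backward by a silent $AbsS_0$ segment that linearizes pop $k$ with value $d$ at the correct earlier instant, and that the resulting predecessor $u$ still satisfies $bs$ with $s$. The crux is a LIFO argument: a value can legitimately become the top of $\sigma$ only once everything ordered after it in the chosen linearization has been removed, and this is exactly what the $be$-rewiring records—when $k'$ is popped its predecessors become the new maximal ``closed'' pushes eligible to be popped next, but only along branches not blocked by still-overlapping pushes. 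Verifying that the prophetic placement of every pending pop's linearization point in $AbsS_0$ is \emph{globally} consistent—no two pops demand incompatible tops of $\sigma$, and every demanded top is reachable by a reordering permitted by $<$—is the technical heart of the argument. Once this stack-specific monotonicity of the $be$/$ov$ invariant under \textsc{com-pop1} is established, the remaining cases close as in the queue development.
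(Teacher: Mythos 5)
You take the same route as the paper: a normal $C\cup R$-backward simulation from $AbsS$ to $AbsS_0$, with all $lin$ and $com$ actions silent, and with pops allowed to be linearized in $AbsS_0$ while still uncommitted ($R_1$) in $AbsS$. But your simulation relation, as stated, is inconsistent at exactly the point you yourself call the technical heart. You reuse the queue backbone verbatim: $D$ must contain the values of \emph{all} completed pushes still in $O$, i.e.\ $\ell_1({\tt COMP}(O))\subseteq D\subseteq\ell_1(O)$. Once a pop at $R_1$ has been prophetically linearized at $R_2$ in $AbsS_0$, however, the push it removed is still in $O$ (its {\sc com-pop1} step has not yet happened in $AbsS$) while its value is gone from $\sigma$; if that push is completed, your inclusion $\ell_1({\tt COMP}(O))\subseteq D$ is violated, so the two clauses of your relation contradict each other. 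Moreover, your requirement that $rv^0(k)$ equal ``the value that pop is \emph{going} to commit'' is future-dependent and cannot appear in a state relation at all, and your substitute condition --- that every value admissible for $k$ be exposable at the top of some legal linearization --- is both stronger than needed and not obviously preserved by the transitions (two pops may each admissibly claim the same single completed push).

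The missing idea, which the paper supplies, is a partial \emph{injective} function $g$ from uncommitted pops to pushes in $O$, recording which push each prophetically linearized pop removed. The relation then requires $g(k)\in be(k)\cup ov(k)$ (guaranteeing the future {\sc com-pop1} will be enabled with value $\ell_1(g(k))$, which replaces your prophecy about future return values), pops in $dom(g)$ to be at $cp^0=R_2$ and pushes in $range(g)$ at $cp^0=A$, and the linearization invariant weakened to $\ell_1({\tt COMP}(O)\setminus range(g))\subseteq D\subseteq\ell_1(O\setminus range(g))$ --- with $\sigma$ the \emph{mirror} of a linearization, a stack-specific detail you also elide. Injectivity of $g$ is precisely what discharges your ``no two pops demand incompatible tops'' worry locally, with no global feasibility condition. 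With $g$ in place, the {\sc com-pop1} case splits cleanly: if the removed push $k'$ is a greatest completed push or pending, the step is matched by a silent segment containing $lin(pop,d,k)$ (preceded by $lin(push,\cdot)$ steps for pending pushes whose values sit above $d$); if $k'$ is completed but buried under other completed pushes, the step is matched by the \emph{empty} sequence, setting $g(k)=k'$, and the pop's linearization is inserted earlier in the backward construction, at its invocation or at a push return. Without $g$ your plan has no way to close this second sub-case, nor even a well-defined relation to induct on.
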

\begin{proof}
We define a normal $C\cup R$-backward simulation $bs$ from $AbsS$ to $AbsS_0$ as follows. Given an $AbsS$ state $s=\tup{O,<,\ell,rv,cp}$ and an $AbsS_0$ state $t=\tup{\sigma,in^0,rv^0,cp^0}$ we have that $(s,t)\in bs$ iff the following hold:
\begin{itemize}
	\item if a pop has committed or respectively, it has returned in $s$, then it had been linearized or respectively, it has returned in $t$, i.e., for every $k$, if $cp(k)\in\{R_2,R_3\}$ then $cp^0(k)=cp(k)$,
	\item a push is completed in $s$ whenever the same is true in $t$, i.e., for every $k$, $cp(k)=A_2$ iff $cp^0(k)=A_2$,
	\item a push is pending in $s$ iff either it is a non-linearized pending push in $t$ or its linearization point has been executed, i.e., for every $k$, $cp(k)=A_1$ iff $cp^0(k)=A_1$ and $in^0(k)$ doesn't occur in $\sigma$, or $cp(k)=A$,
	\item if a pop didn't commit in $s$ then it is pending in $t$ and it may have been linearized, i.e., for every $k$, if $cp(k)=R_1$ then $cp^0(k)\in \{R_1,R_2\}$,
	\item there exists a partial injective function $g: \{k: cp(k)=R_1\} ~> O$ which associates uncommitted pops to pushes in $O$ such that:
	\begin{itemize}
		\item for every $k$, $k\in dom(g)$ iff $g(k)\in be(k)\cup ov(k)$
		\item  the sequence $\sigma$ is the mirror of a linearization of a partial order $(D,\prec)$ where $D$ contains values labeling elements of $O$ except for those in the range of $g$, and all the values corresponding to completed pushes which are not in the range of $g$, i.e., $\ell_1({\tt COMP}(O)\setminus range(g))\subseteq D\subseteq \ell_1(O\setminus range(g))$ ordered according to the happens-before order between the pushes that added them, i.e., $d_1\prec d_2$ if{f} there exists $k_1,k_2$ such that $\ell_1(k_1)=d_1$, $\ell_1(k_2)=d_2$, and $k_1 < k_2$
		\item every pop in the domain of $g$ has been linearized, i.e., for every $k$, $k\in dom(g)$ implies $cp^0(k)=R_2$,
		\item every pop which is not in the domain of $g$ hasn't been linearized, i.e., for every $k$, $k\not\in dom(g)$ implies $cp^0(k)=R_1$,
		\item every push in the range of $g$ has been linearized, i.e., for every $k$, $k\in range(g)$ implies $cp^0(k)=A$,
		\item a pending enqueue from $O$ has been linearized when its value is contained in $\sigma$, i.e., for every $k$, if $\ell_1(k)\in D$ and $\ell_2(k)={\tt PEND}$, then $cp^0(k)=A$.
	\end{itemize}
	\item the return values fixed at pop commit points are the same, i.e., for every $k$, if $rv(k)$ is defined, then $rv(k)=rv^0(k)$,
	\item every pending push has the same input value in both $s$ and $t$, i.e., for every $k$, $\ell_1(k)=in^0(k)$,
\end{itemize}

In the following, we show that indeed $bs$ is a normal $C\cup R$-backward simulation from $AbsS$ to $AbsS_0$:
\begin{itemize}
	\item Let $s\xrightarrow{inv(push,d,k)} s'$ be a transition in $AbsS$ and $(s',t')\in bs$. We consider two cases depending on whether the value $d$ occurs on a position $i$ in the sequence $\sigma$ of $t'$ or not. If it occurs, let $t$ be a $AbsS_0$ state where essentially, the component $\sigma$ is the prefix of the sequence $\sigma$ of $t'$ that contains the first $i-1$ positions (except for some set of pushes that will be defined hereafter, all operations are at the same control point). Let $@t$ be the following $AbsS_0$ trace:
	\begin{align*}
	@t=inv(push,d,k),\ lin(push,d,k),\ lin(push,d_{i+1},k_{i+1}),\ldots,lin(push,d_{n-1},k_{n-1})
	\end{align*}
	where $d_j$ is the value on position $j$ in the sequence $\sigma$ of $t'$ and $n$ is the length of this sequence (we assume that positions are indexed starting from $0$). Let $k_i=k$.
	For every $k_j$ with $i\leq j\leq n-1$, we must have that $cp(k_j)=A$ in $t'$. We take $cp(k_j)=A_1$ in $t$ for every $j\geq i+1$ and $cp$ undefined for $k_i$.
	We have that $t\xrightarrow{@t} t'$ is a valid sequence of transitions of $AbsS_0$ and $(s,t)\in bs$ (the latter can be proved by taking the same function $g$ used in establishing that $(s',t')\in bs$).
	Now, assume that the value $d$ is not in the sequence $\sigma$ of $t'$.  We consider an $AbsS_0$ state $t$ where the component $\sigma$ is the same as the one in $t'$. There are two sub-cases depending on whether there exists a pending pop $k'$ such that $g(k')=k$ when establishing that $(s',t')\in bs$. If it exists, the operations are at the same control point in both $t$ and $t'$ except for the push $k$ for which $cp(k)$ is undefined in $t$, and the the pop $k'$ for which we take $cp(k')=R_0$ in $t$. We have that 
	\begin{align*}
	t\xrightarrow{inv(push,d,k),\ lin(push,d,k)\ lin(pop,d,k')} t'
	\end{align*}
	in $AbsS_0$. If there exists no such pop $k'$, it can be easily seen that there exists $t$ such that $t\xrightarrow{inv(push,d,k)} t'$ and $(s,t)\in bs$.	

	\item Let $s\xrightarrow{inv(pop,k)} s'$ be a transition in $AbsS$ and $(s',t')\in bs$. We consider two cases depending on whether in the function $g$ used to relate $s'$ to $t'$ we have that $k\in dom(g)$. In other words, either the newly invoked pop operation $k$ did not linearize yet ($k\not\in dom(g)$) or it linearizes and removes an element inserted by a linearized push ($k\in dom(g)$). The second case also splits into two sub-cases: The value removed by pop $k$ is inserted by a push $k'=g(k)$ that is still pending or the push has returned. We will look at all three cases separately. 
	The easiest one is the first case. There exists some $t$ where essentially the component $\sigma$ is the same as the one in $t'$, such that $t\xrightarrow{inv(pop,k)} t'$ and $(s,t)\in bs$. 
	
	For the first sub-case of the second case, we take an $AbsS_0$ state $t$ where $\sigma_t = d\cdot\sigma_{t'}$ (we use $\sigma_t$ to denote the component $\sigma$ in $t$). It must happen that $cp^0_{t'}(k)=R_2$. The operations are at the same control point in both $t$ and $t'$, except for $k$ in which case $cp^0$ is undefined. We have that $t\xrightarrow{inv(pop,k),\ lin(pop,d,k')} t'$ and $(s,t)\in bs$. The latter holds because essentially, $k'$ is a maximal node in $s'$ (since it is pending). 

	For the second sub-case, we define an $AbsS_0$ state $t$ where the sequence $\sigma$ is the minimal prefix of $\sigma_{t'}$ that includes the value $d$ added by $k'$. Let $i$ be the index of this value in $\sigma_{t'}$ and $k_{j}$ with $i<j$ the identifiers of the pushes that added the values following $d$ in $\sigma_{t'}$.  Let $@t$ be the following $AbsS_0$ trace:
	\begin{align*}
	@t=inv(pop,k),\ lin(pop,d,k),\ lin(push,d_{i+1},k_{i+1}),\ldots,lin(push,d_{n-1},k_{n-1})
	\end{align*}
	where $d_j$ is the value on position $j$ in the sequence $\sigma_{t'}$ and $n$ is the length of this sequence. We have that $t\xrightarrow{@t} t'$ is a valid sequence of transitions of $AbsS_0$ and $(s,t)\in bs$. The latter relies on the fact that $k'$ is a greatest completed push in $s$ and all pushes $k_j$ with $j>i$ are pending in $s$.
	
	\item Let $s\xrightarrow{com(pop,d,k)} s'$ be a transition in $AbsS$ and $(s',t')\in bs$. When this transition results in removing a greatest completed push or a pending push in $s$, then there exists an $AbsS_0$ state $t$ such that $t\xrightarrow{@t} t'$ is a valid sequence of $AbsS_0$ transitions and $(s,t)\in bs$, for some $t$ and $@t$ defined as in the second case of $inv(pop,k)$. When it removes a completed push which is followed by other completed pushes (in the happens-before in $s$), then we pick $t=t'$. We have that $t\xrightarrow{\epsilon} t'$ and $(s,t)\in bs$ (for the latter we must choose a function $g$ such that $g(k)=k'$ where $k'$ is the push removed by the $AbsS$ transition.
	
	\item Let $s\xrightarrow{ret(push,k)} s'$ be a transition in $AbsS$ and $(s',t')\in bs$. We consider two cases depending on whether the happens-before in $s$ contains push $k$. If it contains push $k$, there are two sub-cases: (1) if its input is present in $\sigma_{t'}$ then there exists an $AbsS_0$ state $t$ such that $t\xrightarrow{ret(push,k)} t'$ is a valid sequence of $AbsS_0$ transitions and $(s,t)\in bs$, and (2) otherwise, we take a state $t$ where essentially, $\sigma_t = d\cdot\sigma_{t'}$ for which we have that $t\xrightarrow{lin(pop,d,k),\ ret(push,k)} t'$ and $(s,t)\in bs$. If the happens-before in $s$ doesn't contain push $k$, then there exists an $AbsS_0$ state $t$ such that $t\xrightarrow{ret(push,k)} t'$.
	
	\item The case of pop returns $ret(pop,k)$ is trivial. Such transitions are simulated by $ret(pop,k)$ transitions of $AbsS_0$.
\end{itemize}

\end{proof}

\begin{lemma} 
$AbsS_0$ is a refinement of $AbsS$.
\end{lemma}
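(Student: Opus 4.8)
The plan is to prove the claim by constructing a normal $C\cup R$-forward simulation $fs$ from $AbsS_0$ to $AbsS$ (Definition~\ref{def:for_app}), dual to the backward simulation of the preceding lemma and parallel to the $AbsQ_0\to AbsQ$ forward simulation in Appendix~\ref{app:absImplQueue}. Because $AbsS$ is $(C\cup R\cup Com(pop))$-deterministic and its only non-observable actions are the commit points $com(pop,\cdot,\cdot)$, producing such a simulation suffices: by Theorem~\ref{th:forSim} instantiated with $\Gamma=C\cup R$ it yields $Tr(AbsS_0)|(C\cup R)\subseteq Tr(AbsS)|(C\cup R)$, i.e.\ $H(AbsS_0)\subseteq H(AbsS)$, which combined with the previous lemma closes Theorem~\ref{th:absImplStack}.

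Concretely, $fs$ would relate an $AbsS_0$ state $t=\tup{\sigma,in^0,rv^0,cp^0}$ to an $AbsS$ state $s=\tup{O,<,\ell,rv,cp,be,ov}$ by reading the relation $bs$ of the previous lemma from $t$ to $s$. I would require that control points agree as expected (a push is pending, resp.\ completed, in $s$ iff it is so in $t$; a pop is at $R_i$ in $s$ iff at $R_i$ in $t$), that $rv$ agrees with $rv^0$, that pending pushes are $<$-maximal, and that there is a partial injection $g$ from the pops already linearized in $t$ but not yet committed in $s$ (those with $cp^0(k)=R_2$ and $cp(k)=R_1$) into $O$ with $g(k)\in be(k)\cup ov(k)$, such that $\sigma$ is the mirror of a linearization of $<$ restricted to $O\setminus range(g)$, with $\ell_1(\mathtt{COMP}(O)\setminus range(g))\subseteq D\subseteq \ell_1(O\setminus range(g))$ and $\prec$ induced by $<$. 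I would carry along the $be$/$ov$ invariants noted in Section~\ref{sec:corr_tss} (e.g.\ a push in $be(k)\cup ov(k)$ has no $<$-predecessor in $be(k)$) as part of the relation, since they are needed to discharge the commit cases.

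The clauses of the forward simulation are then checked rule by rule on $AbsS_0$ (Fig.~\ref{fig:transitions:AbsS_0}). The observable rules for $inv(push)$, $ret(push)$, $inv(pop)$ and $ret(pop)$ are matched one-to-one by the homonymous rules of $AbsS$, following the $AbsQ_0\to AbsQ$ argument; for $ret(pop,d,k)$ I would, when $k$ is still uncommitted in $s$ (i.e.\ $k\in dom(g)$), first fire the matching $com(pop,d,k)$ step and then the return, which clause (ii-b) permits since $com$ is non-observable. The rule {\sc lin-push} is matched by the empty sequence and handled exactly like {\sc lin-enq}: the new value is placed at the bottom of the mirror, which is legal because the just-linearized push is $<$-maximal. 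The substance is in {\sc lin-pop1} and {\sc lin-pop2}: when $AbsS_0$ removes the top value $d$, added by a push $k'$, the natural move is to fire $com(pop,d,k)$ in $AbsS$ at the same time, using that the front of the mirror is a $<$-maximal live push and hence a legal commit target ($k'\in be(k)\cup ov(k)$); when this is momentarily impossible the commit is deferred by recording $k\in dom(g)$ with $g(k)=k'$ and responding with the empty sequence. For {\sc lin-pop2} (the case $\sigma=\epsilon$) I match with $com(pop,\mathtt{EMPTY},k)$ and must show that $\sigma=\epsilon$ forces $be(k)=\emptyset$.

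The main obstacle is precisely the $be$/$ov$ bookkeeping around commits. Two things must be established: first, that the front of the mirror (a $<$-maximal live push) is always a legal commit target, i.e.\ always lands in $be(k)\cup ov(k)$ — immediate for pushes overlapping $k$, but for completed pushes lying below earlier pops it relies on the $be'$-rewriting of {\sc com-pop1} having inserted $k'$ into $be(k)$ exactly when $k'$ became maximal; and second, that this same $be'$-rewriting (replacing a removed push by those immediate predecessors that are otherwise followed only by pushes overlapping $k$) preserves, for every pending pop, the property that $\sigma$ is the mirror of a linearization of $O\setminus range(g)$. Showing that the $be$/$ov$ invariants are inductive under {\sc com-pop1}, and in particular that they guarantee the commit target always exists, is where the real work lies; the remaining cases are routine adaptations of the queue proof. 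I expect to need the auxiliary invariants of Appendix~\ref{app:tss} as standalone lemmas carried alongside $fs$ to make this induction go through.
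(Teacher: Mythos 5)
Your overall architecture is the paper's: a normal $C\cup R$-forward simulation checked rule by rule, with \textsc{lin-push} matched by the empty sequence and \textsc{lin-pop1}/\textsc{lin-pop2} matched by $com(pop,d,k)$ and $com(pop,{\tt EMPTY},k)$. The genuine divergence is the deferral mechanism via the partial injection $g$, and it is both unnecessary and, as you state it, unsound. In the paper, $g$ appears only in the \emph{backward} simulation (the other direction of Theorem~\ref{th:absImplStack}); the forward relation carries no such bookkeeping. Instead it carries invariants on $be$ and $ov$ --- every pending push of $O$ lies in $ov(k)$ for every non-linearized pop $k$; every greatest completed push lies in $be(k)\cup ov(k)$; successors of members of $be(k)\cup ov(k)$ lie in $ov(k)$; and these are shown stable under the $be'$-rewriting of \textsc{com-pop1} --- which guarantee that the push $k'$ adding the front value of $\sigma$ (a $\prec$-maximal element of $D$, hence either pending or a greatest completed push, pending pushes being $<$-maximal) always satisfies $k'\in be(k)\cup ov(k)$. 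So the commit is enabled at the very moment $AbsS_0$ fires $lin(pop,d,k)$: the ``momentarily impossible'' case you hedge against never arises. Likewise $\sigma=\epsilon$ forces ${\tt COMP}(O)=\emptyset$, and since $be(k)$ only ever contains completed pushes still in $O$, it forces $be(k)=\emptyset$, which is exactly why \textsc{lin-pop2} goes through eagerly.

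If you actually defer, the scheme breaks. Take completed pushes $k_a < k_b$ and two pops $p_1,p_2$ invoked after both returns, so $be(p_1)=be(p_2)=\{k_b\}$ and $ov(p_1)=ov(p_2)=\emptyset$. Let $AbsS_0$ linearize $p_1$ (removing $k_b$'s value) and defer via $g(p_1)=k_b$, then linearize $p_2$ (removing $k_a$'s value). In the current $AbsS$ state, $com(pop,\ell_1(k_a),p_2)$ is disabled, since $k_a\notin be(p_2)\cup ov(p_2)$, and the only way $k_a$ ever enters $be(p_2)$ is through the $be'$-rewriting triggered by committing $p_1$. Your plan fires $p_1$'s deferred commit only at $p_1$'s own return, which may occur after $p_2$'s return, or never if $p_1$ stays pending; to rescue $p_2$ you would have to cascade \emph{other} pops' deferred commits inside the step simulating $p_2$'s transitions, and proving that such a cascade is always available amounts to re-establishing precisely the immediate-enabledness invariants above. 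The empty case suffers the same defect: a deferred push sitting in $be(k)$ keeps \textsc{com-pop2} disabled even when $\sigma=\epsilon$. The fix is to drop $g$ and the $O\setminus range(g)$ clause, commit eagerly on \textsc{lin-pop1}/\textsc{lin-pop2}, and fold the $be$/$ov$ properties into the simulation relation and prove them inductive (the \textsc{com-pop1} case being the one requiring real work), which is exactly the shape of the paper's proof.
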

\begin{proof}
We define a normal $C\cup R$-forward simulation $fs$ from $AbsS_0$ to $AbsS$ as follows. Given an $AbsS_0$ state $t=\tup{\sigma,in^0,rv^0,cp^0}$ and an $AbsS$ state $s=\tup{O,<,\ell,rv,cp}$ we have that $(t,s)\in fs$ iff the following hold:
\begin{enumerate}
	\item every pop is at the same control point in both $t$ and $s$, i.e., for every $k$ and $i\in \{1,2,3\}$, $cp^0(k)=R_i$ iff $cp(k)=R_i$,
	\item a push has been invoked in $t$ whenever it has been invoked in $s$, i.e., for every $k$, $cp^0(k)=A_1$ iff $cp(k)=A_1$,
	\item a push which is linearized in $t$ has been invoked in $s$, i.e., for every $k$, if $cp^0(k)=A$ then $cp(k)=A_0$,
	\item a push is completed in $t$ iff the same holds in $s$, i.e., for every $k$, $cp^0(k)=A_2$ iff $cp(k)=A_2$,
	\item the pair $(O,\ell)$ in $s$ satisfies the following:
	\begin{itemize}
		\item for every $k$, if $in^0(k)=d$, $cp^0(k)\in \{A_1,A\}$, and $d$ occurs in $\sigma$, then $k\in O$ and $\ell(k)=(d,{\tt PEND})$,
		\item for every $k$, if $in^0(k)=d$, $cp^0(k) = A_2$, and $d$ occurs in $\sigma$, then $k\in O$ and $\ell(k)=(d,{\tt COMP})$,
	\end{itemize}
	\item every pending push in $O$ is overlapping with every non-linearized pop, i.e., for every $k$, if $cp^0(k)=R_1$ then $\{k': k'\in O\land \ell_2(k')={\tt PEND}\}\subseteq ov(k)$.
	\item every completed push is either overlapping or was the greatest completed push before a non-linearized pop started, i.e., for every $k$, if $cp^0(k)=R_1$, then ${\tt COMP}(O)\subseteq ov(k)\cup be(k)$,
	\item for every push that overlaps with a pop $k$ or was maximal in $<$ when $k$ started, its successors are overlapping with $k$, i.e., $k_1\in be(k)\cup ov(k)$ and $k_1 < k_2$ implies $k_2 \in ov(k)$ for each $k, k_1, k_2$
	\item predecessors of pushes in $be(k)$ for a given pop $k$ are neither overlapping with $k$ nor in $be(k)$, i.e., $k_1 < k_2$ and $k_2\in be(k)$ implies $k_1\not\in ov(k)\cup be(k)$ for each $k,k_1,k_2$
	\item pending pushes are maximal in $<$, for every $k$ and $k'$, $k \not< k'$ if $\ell_2(k)={\tt PEND}$,
	\item\label{item:stack_fs_lin} the sequence $\sigma$ is the mirror of a linearization of a partial order $(D,\prec)$ where $D$ contains values labeling elements of $O$ and all the values corresponding to completed pushes, i.e., $\ell_1({\tt COMP}(O))\subseteq D\subseteq \ell_1(O)$ ordered according to the happens-before order between the pushes that added them, i.e., $d_1\prec d_2$ if{f} there exists $k_1,k_2$ such that $\ell_1(k_1)=d_1$, $\ell_1(k_2)=d_2$, and $k_1 < k_2$.
	\item the return values fixed at pop linearization/commit points are the same, i.e., for every $k$, $rv(k)=rv^0(k)$,
	\item every pending push has the same input value in both $s$ and $t$, i.e., for every $k$, $\ell_1(k)=in^0(k)$,
\end{enumerate}

In the following, we show that indeed $fs$ is a normal $C\cup R$-backward simulation from $AbsS_0$ to $AbsS$:
\begin{itemize}
	\item Let $t\xrightarrow{inv(push,d,k)} t'$ be a transition in $AbsS_0$ and $(t,s)\in fs$. We have that $(t',s')\in fs$ where $s\xrightarrow{inv(push,d,k)} s'$ (recall that $AbsS$ is deterministic). Since the push $k$ is non-linearized in $t'$, the component $\sigma$ of both $t$ and $t'$ are the same and $\ell_2(k)={\tt PEND}$ in $s'$. Then, the component $\sigma$ in $Abs_0$ states related by $fs$ to $s'$ is allowed to exclude values added by pushes in $s'$ which are labeled as pending. The effect of $inv(push,d,k)$ in $AbsS$ implies that $k$ overlaps with all pending pops.
	\item Let $t\xrightarrow{inv(pop,k)} t'$ be a transition in $AbsS_0$ and $(t,s)\in fs$. We have that $(t',s')\in fs$ where $s\xrightarrow{inv(pop,k)} s'$. The only difference between $s$ and $s'$ is that the components $be(k)$ and $ov(k)$ in $s'$ contain the greatest completed pushes in $s$ and the pending pushes in $s$, respectively (these components were undefined in $s$). The relation $fs$ doesn't exclude this particular choice for $be(k)$ and $ov(k)$ when applied to $t'$ and $s'$.
	\item Let $t\xrightarrow{lin(push,d,k)} t'$ be a transition in $Abs_0$ and $(t,s)\in fs$. We have that $(t',s)\in fs$, i.e., the $AbsS_0$ transition is simulated by an empty sequence of $AbsS$ transitions, because essentially the component $\sigma$ of $t'$ still corresponds to a linearization of the pushes in $s$ according to item \ref{item:stack_fs_lin} in the definition of $fs$. The sequence $\sigma$ in $t'$ contains the value added by the push $k$ at the end, but this is allowed by $fs$ since $k$ is labeled as pending in $s$.
	\item Let $t\xrightarrow{lin(pop,d,k)} t'$ be a transition in $AbsS_0$ and $(t,s)\in fs$. We have that $(t',s')\in fs$ where $s\xrightarrow{com(pop,d,k)} s'$. The transition labeled by $com(pop,d,k)$ is enabled in $AbsS_0$ because $d$ was the first value in the sequence $\sigma$ of $t$. Indeed, this implies that $d$ was added by a push $k'$ which is maximal in the happens-before stored in $s$. This clearly implies that $k'\in be(k)\cup ov(k)$. In addition, the sequence $\sigma$ in $t'$ does correspond to a linearization of the pushes in $s'$ (which don't contain $k$ anymore) because $\sigma$ in $t$ had this property with respect to $s$ and $\sigma$ in $t'$ is obtained by deleting the first value in the sequence $\sigma$ of $t$. 
	\item Let $t\xrightarrow{ret(push,k)} t'$ be a transition in $AbsS_0$ and $(t,s)\in fs$. We have that $(t',s')\in fs$ where $s\xrightarrow{ret(push,k)} s'$. There are two cases depending on whether the value added by $k$ is still present in the sequence $\sigma$ of $t$. If it is not, then the push $k$ doesn't occur in the happens-before from $s$, and the only effect of these two transitions is changing the control point of $k$. Therefore, $(t',s')\in fs$ clearly holds. When this value is still present, the effect of $ret(push,k)$ in $AbsS$ is changing the flag of push $k$ from ${\tt PEND}$ to ${\tt COMP}$. Since the order between pushes doesn't change, we have that $(t',s')\in fs$.
	\item Let $t\xrightarrow{ret(pop,d,k)} t'$ be a transition in $AbsS_0$ and $(t,s)\in fs$. We have that $(t',s')\in fs$ where $s\xrightarrow{ret(pop,d,k)} s'$. This case is obvious, the only change between $s$ and $s'$ being the control point of $k$.
\end{itemize}

\end{proof}
\section{Proving the correctness of $TSS$}\label{app:tss}

\begin{figure}[t]
\centering
\includegraphics[width=7cm]{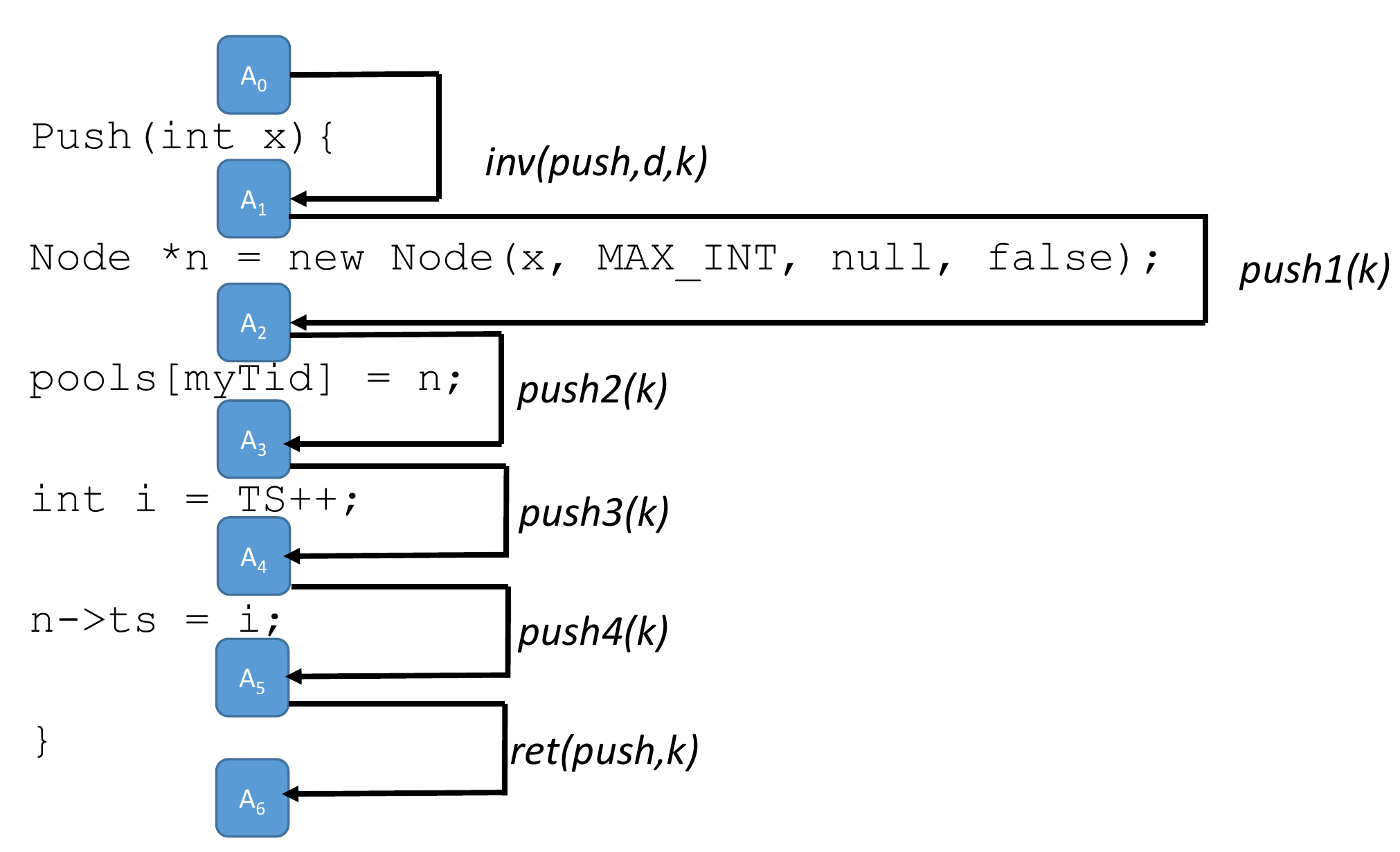}
\includegraphics[width=16cm]{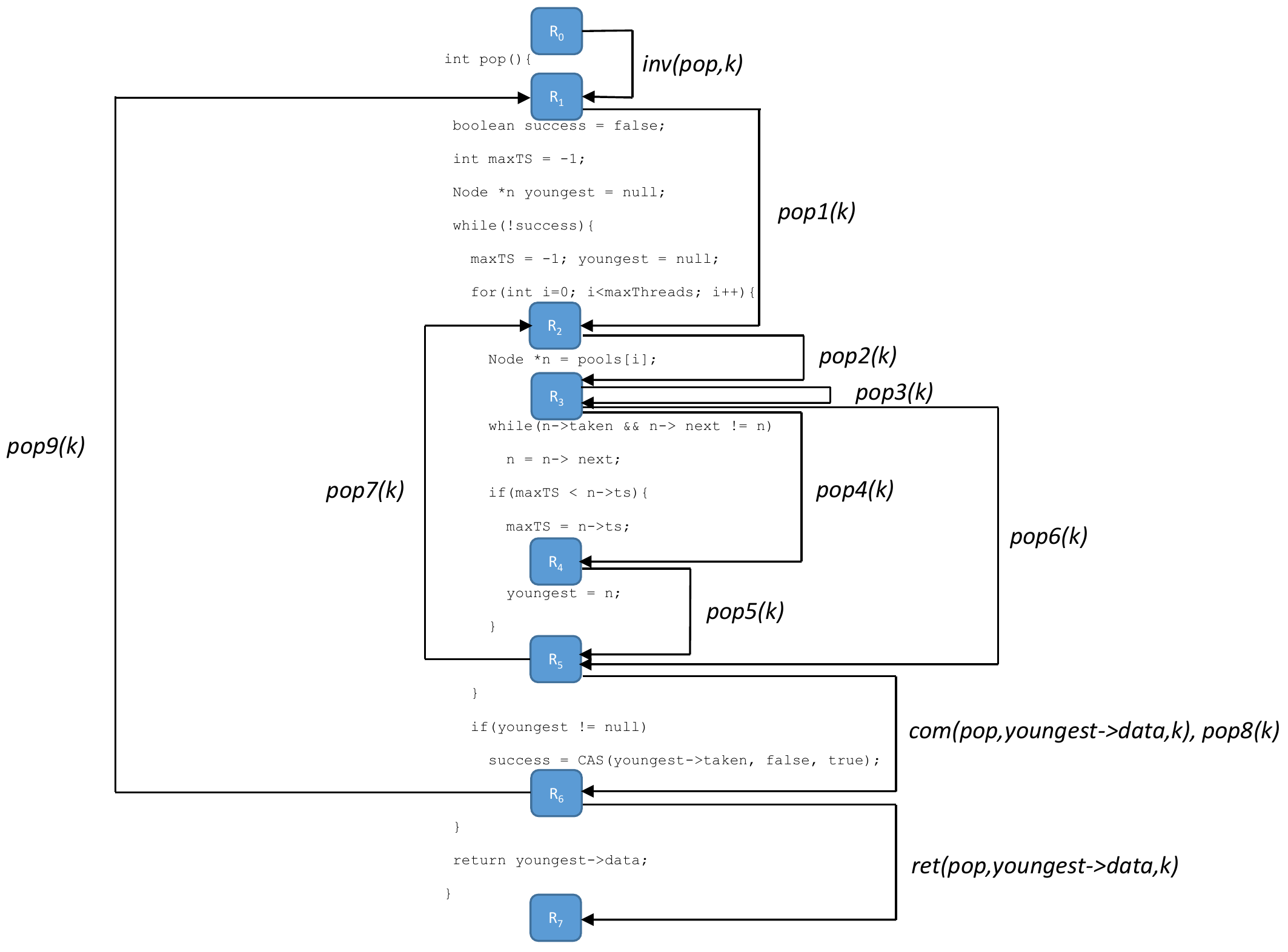}
\vspace{-8mm}
\caption{The flow diagram for the pop and push methods of the Time-Stamped Stack algorithm. The blue points show the control points roughly and the arrows show the possible transitions.}
\label{fig:tssFlow}
\end{figure}

The LTS corresponding to the description of $TSS$ given in Fig.~\ref{fig:TimeStamped} is defined as usual. The control points and transition labels we use in the following proof are pictured in Fig.~\ref{fig:tssFlow}. To simplify the proof, we take the initializations of some local variables together as atomic.

States of the TS-Stack contains the global variables and local variables as fields. Global variables are just elements of their domains and local variables are maps from operation identifiers to their domains. We say $i_q(k)$ for referencing the value of local variable $i$ of operation $k$ in state $q$. There is only one special local variable called $myTID$. Its value is unique to each pending operation in a state i.e., concurrent operations cannot have the same $myTID$ value. TS-Stack states also contains sets $O_a, O_r \in \mathbb{O}$ which are operation identifier sets of push and pops respectively, and the control point function $cp$ which is a map from operation identifiers to the control points set that are presented in the flow diagram Figure ~\ref{fig:tssFlow}. Transition relation of the TS-STack is presented in Figure~\ref{fig:transitions:TSSPush} (push rules) and Figure~\ref{fig:transitions:TSSPop} (pop rules).
Next, we show that the linearizability of TS Stack.

\begin{figure} [t]
{\scriptsize
  \centering
  \begin{mathpar}
    \inferrule[call-push]{
      k\not\in dom(cp) \\ 
      d\neq {\tt null}
    }{
      ..., O_a, x, cp,...
      \xrightarrow{inv(push,d,k)} 
     ..., O_a\cup\{k\}, x[k \mapsto d], cp[k \mapsto A_1], ...
    }\hspace{5mm}

    \inferrule[push1]{
      cp(k) = A_1 \\ 
      *n' = (x(k),\texttt{MAX\_INT}, \texttt{null}, \texttt{false})
    }{
      ..., n, cp,...
      \xrightarrow{push1(k)} 
     ..., n[k \mapsto n'], cp[k \mapsto A_2], ...
    }\hspace{5mm}

  \inferrule[push2]{
      cp(k) = A_2 
    }{
      ..., pools, cp,...
      \xrightarrow{push2(k)} 
     ..., pools[myTid(k) \mapsto n(k)], cp[k \mapsto A_3], ...
    }\hspace{5mm}
    
      \inferrule[push3]{
      cp(k) = A_3 
    }{
      ..., i, TS, cp,...
      \xrightarrow{push3(k)} 
     ..., i[k \mapsto TS], TS+1, cp[k \mapsto A_4], ...
    }\hspace{5mm}
    
   \inferrule[push4]{
      cp(k) = A_4 \\
      n'(k) = n(k)[ts \mapsto i(k)] \\
      \forall k'. cp(k') = A_6 \implies i(k') < i(k)
    }{
      ..., n, cp,...
      \xrightarrow{push4(k)} 
     ..., n[k \mapsto n'(k)], cp[k \mapsto A_5], ...
    }\hspace{5mm}
    
   \inferrule[ret-push]{
      cp(k) = A_5 
    }{
      ...,cp,...
      \xrightarrow{ret(push,k)} 
     ...,cp[k \mapsto A_6], ...
    }\hspace{5mm}
    
      \end{mathpar}
  }
 \vspace{-5mm}
  \caption{The push derivation rules of $TSS$. We only mention the state components that are modified. Unmentioned state components have the names in the algorithm in the prestate. $*n = (a,b,c,d)$ is shorthand for $n->data = a$, $n->ts = b$, ... $n' = n[ts \mapsto expr]$ is short for $n'->ts = expr$ and all the other fields of $n$ and $n'$ are the same.
  }
  \label{fig:transitions:TSSPush}
\vspace{-2mm}
\end{figure}

\begin{figure} [t]
{\scriptsize
  \centering
  \begin{mathpar}
    \inferrule[call-pop]{
      k\not\in dom(cp) 
    }{
      ..., O_r, cp,...
      \xrightarrow{inv(pop,k)} 
     ..., O_r\cup\{k\}, cp[k \mapsto R_1], ...
    }\hspace{5mm}
    
    \inferrule[pop1]{
      cp(k) = R_1 \\
      maxThreads > 0
    }{
      ..., suc, ygst, mTS, i, cp
      \xrightarrow{pop1(k)} 
     ..., suc[k \mapsto \texttt{false}], ygst[k \mapsto \texttt{null}], mTS[k \mapsto -1], i[k \mapsto 0],  cp[k \mapsto R_2]
    }\hspace{5mm}
    
    \inferrule[pop2]{
      cp(k) = R_2 \\
      0 \leq i(k) < \texttt{maxThreads}
    }{
      ..., n, cp, ...
      \xrightarrow{pop2(k)} 
     ..., n[k \mapsto pools(i(k))], cp[k \mapsto R_3],...
    }\hspace{5mm}
    
    \inferrule[pop3]{
      cp(k) = R_3 \\
      n(k) \neq \texttt{null}\\
      n(k)->taken = \texttt{true}\\
      n(k)->next \neq n(k)
    }{
      ..., n, ...
      \xrightarrow{pop3(k)} 
     ..., n[k \mapsto n(k)->next],...
    }\hspace{5mm}
    
    \inferrule[pop4]{
      cp(k) = R_3 \\
      n(k) \neq \texttt{null}\\
      n(k)->taken = \texttt{false}\\
      n(k)->ts > maxTS(k)
    }{
      ..., maxTS, cp ...
      \xrightarrow{pop4(k)} 
     ..., maxTS[k \mapsto n(k)->ts], cp[k \mapsto R_4]...
    }\hspace{5mm}
    
    \inferrule[pop5]{
      cp(k) = R_4 
    }{
      ..., youngest, cp ...
      \xrightarrow{pop5(k)} 
     ..., youngest[k \mapsto n(k)], cp[k \mapsto R_5]...
    }\hspace{5mm}
    
    \inferrule[pop6]{
      cp(k) = R_3 \\
      n(k) \neq \texttt{null}\\
      n(k)->taken = \texttt{false}\\
      n(k)->ts \leq maxTS(k)
    }{
      ..., cp, ...
      \xrightarrow{pop6(k)} 
     ...,cp[k \mapsto R_5],...
    }\hspace{5mm}
    
    \inferrule[pop7]{
      cp(k) = R_5 \\
      i(k) < \texttt{maxThreads}-1
    }{
      ...,i, cp, ...
      \xrightarrow{pop7(k)} 
     ...,i[k \mapsto i(k)+1], cp[k \mapsto R_2],...
    }\hspace{5mm}
    
    \inferrule[pop8]{
      cp(k) = R_5 \\
      youngest(k) = \texttt{null} \vee (youngest(k) \neq \texttt{null} \wedge youngest->taken)
    }{
      ...,success, cp, ...
      \xrightarrow{pop7(k)} 
     ...,success[k \mapsto \texttt{false}], cp[k \mapsto R_6],...
    }\hspace{5mm}
    
    \inferrule[com-pop]{
      cp(k) = R_5 \\
      youngest(k) \neq \texttt{null} \\
      youngest(k) = m \\
      d = m->data\\
      m->taken = false \\
      m' = m[taken \mapsto true]
    }{
      ...,success, youngest, cp, ...
      \xrightarrow{com(pop,d,k)} 
     ...,success[k\mapsto true], youngest[k \mapsto m'], cp[k \mapsto R_6],...
    }\hspace{5mm}
    
    \inferrule[pop9]{
      cp(k) = R_6 \\
      success(k) = \texttt{false}
    }{
      ..., cp, ...
      \xrightarrow{pop9(k)} 
     ..., cp[k \mapsto R_1],...
    }\hspace{5mm}
    \inferrule[ret-pop]{
      cp(k) = R_6 \\
      suc(k) = \texttt{false} \\ 
      d = yst(k)->data
    }{
      ..., cp, ...
      \xrightarrow{ret(pop,d,k)} 
     ..., cp[k \mapsto R_7],...
    }\hspace{5mm}

    \end{mathpar}
  }
 \vspace{-5mm}
  \caption{The pop derivation rules of $TSS$. We only mention the state components that are modified. Unmentioned state components have the names in the algorithm in the pre-state. $n' = n[taken \mapsto expr]$ is short for $n'->taken = expr$ and all the other fields of $n$ and $n'$ are the same.
  }
  \label{fig:transitions:TSSPop}
\vspace{-6mm}
\end{figure}

\begin{lem}
$TSS$ is a $C\cup R\cup Com(pop)$-refinement of $AbsS$. 
\end{lem}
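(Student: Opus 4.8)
By the corollary to Theorem~\ref{th:forSim} characterizing $C\cup R\cup Com(pop)$-refinement of $AbsS$ for stack implementations with fixed pop commit points, and since $AbsS$ is deterministic over $C\cup R\cup Com(pop)$ while every completed pop of $TSS$ is preceded by the action $com(pop,d,k)$ emitted at its successful {\tt CAS} (rule {\sc com-pop} in Fig.~\ref{fig:transitions:TSSPop}), it suffices to prove that the relation $\mathit{fs}_2$ defined in Section~\ref{sec:corr_tss} is a $C\cup R\cup Com(pop)$-forward simulation from $TSS$ to $AbsS$. The plan is to check the simulation clauses transition by transition, following the pattern of the Herlihy\&Wing proof in Section~\ref{ssec:HerlihyWing}. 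First I would verify that $\mathit{fs}_2$ relates the two initial states (all components empty), and dispatch the call and return actions: {\sc call-push}/{\sc call-pop} and {\sc ret-push}/{\sc ret-pop} of $TSS$ are matched by the homonymous rules of $AbsS$, using that {\sc call-pop} installs $be(k)=maxCo(O)$ and $ov(k)={\tt PEND}(O)$, exactly the values the $be$/$ov$ constraints demand at the moment a pop starts, and that $rv(k)={\tt youngest}(k)\text{\tt ->data}$ supplies the equality needed for {\sc ret-pop}.

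The bulk of the work is to show that every internal step of a push, or of a pop other than its commit point, is simulated by the \emph{empty} sequence of $AbsS$ transitions, i.e.\ preserves $\mathit{fs}_2$ for a fixed $AbsS$ state. As in the queue proof this amounts to a \emph{monotonicity} argument: the orders $<$ and the markings $be$/$ov$ admitted by $\mathit{fs}_2$ after the step must still admit the abstract state held before it. For the push steps {\sc push1}--{\sc push4} I would argue that adding a node, assigning it a timestamp, and updating that timestamp can only refine $<$ consistently with clauses (1)--(3) (a freshly added or not-yet-timestamped push is pending, hence maximal, and the {\tt TS++} protocol keeps timestamps monotone per thread), so no previously admissible order is excluded. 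The delicate clauses are the $be$/$ov$ constraints across the pop traversal steps {\sc pop2}--{\sc pop7}. When a pop $k$ advances its cursor {\tt n} or updates {\tt youngest}/{\tt maxTS}, the constraints may newly require some push $k_2$ to lie in $ov(k)$; the crucial lemma is that any such $k_2$ was already pending when $k$ was invoked (hence placed in $ov(k)$ by {\sc call-pop}) or was created after $k$ started (hence inserted into $ov(k)$ by the clause $\forall k'.\ ov'(k')=ov(k')\cup\{k\}$ of {\sc call-push}); a push that had \emph{completed} before $k$ started and whose node still carries a larger timestamp than the one $k$ is inspecting cannot occur earlier in {\tt pools} than $k$'s cursor, which is precisely the semantic content of the first two $be$/$ov$ constraints. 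I would also check that none of these steps disturbs the auxiliary invariants forbidding predecessors of $be(k)$-pushes from lying in $be(k)\cup ov(k)$.

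It remains to simulate the commit point: a successful {\tt CAS} of pop $k$ on a node $m={\tt youngest}(k)$ with $m\text{\tt ->taken}={\tt false}$ and $d=m\text{\tt ->data}$ (rule {\sc com-pop}), matched by the $AbsS$ transition $com(pop,d,k)$ via {\sc com-pop1}. Writing $k'$ for the push that created $m$, I would first show {\sc com-pop1} is enabled, i.e.\ $k'\in be(k)\cup ov(k)$ and $\ell_1(k')=d$: the third $be$/$ov$ constraint applies to $m$, and its second disjunct is impossible here because at a successful {\tt CAS} the traversal is complete and {\tt youngest} already holds the maximal timestamp among all non-taken nodes $k$ encountered, so no not-yet-visited node of strictly larger timestamp remains to witness that disjunct; hence $k'\in be(k)\cup ov(k)$ directly. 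Since $\ell_1(k')=d$ and $rv$ is set to $d$ on both sides, the return-value constraint is restored. For the post-state I would verify that marking $m\text{\tt ->taken}={\tt true}$ removes exactly $k'$ from $O$ and from $<$, matching $O\setminus\{k'\}$ and ${<}\uparrow k'$ in {\sc com-pop1}, and that the $be$-rewrite of {\sc com-pop1}---replacing $k'$ in each $be(k_1)$ by those predecessors of $k'$ all of whose other successors overlap $k_1$---mirrors the fact that, once $m$ is taken, a traversal by another pop $k_1$ now skips $m$ and exposes as its new maximal reachable non-taken node exactly those timestamp/thread predecessors of $k'$.

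I expect the commit point, and specifically the correspondence between the $be$-rewrite of {\sc com-pop1} and the effect of flipping {\tt taken} in $TSS$, to be the main obstacle: it requires relating the \emph{syntactic} notion of ``immediate predecessor that becomes maximal-closed'' in $AbsS$ to the \emph{operational} notion of the node a concurrent traversal reaches once $m$ is skipped, under the partial order induced by timestamps and per-thread list order. Establishing the auxiliary $be$/$ov$ invariants strongly enough that they survive both the traversal steps and this rewrite is where the real care lies; the push steps and the call/return bookkeeping are routine by comparison.
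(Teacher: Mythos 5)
Your proposal matches the paper's own proof in Appendix~\ref{app:tss} essentially step for step: it exhibits the same relation $\mathit{fs}_2$, simulates internal push and pop steps by the empty sequence of $AbsS$ transitions via the same invariant-preservation (monotonicity) argument, matches calls/returns by the homonymous rules, and handles the commit point exactly as the paper does---enabling \textsc{com-pop1} through the third $be$/$ov$ constraint (the paper's \emph{TraverseAfter}), with its second disjunct ruled out because at a successful \texttt{CAS} no later node in traversal order carries a larger timestamp, and then checking the $be$-rewrite against the effect of flipping \texttt{taken}. Your plan is correct and takes essentially the same approach; what remains is only the routine case-by-case verification you already outline.
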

\begin{proof}
We show that the relation $\mathit{fs}_2$ defined in Section~\ref{sec:corr_tss} is a $C\cup R\cup Com(pop)$-forward simulation  from $TSS$ to $AbsS$. For readability, we recall the definition of $\mathit{fs}_2$.

Let us make some clarifications before defining the relation. In order not to confuse nodes in TS Stack and nodes in $AbsS$, we call nodes of $AbsS$ as vertices from now on. We also define ordering relation (called traverse order) among the operations in a state of $TS$. It basically reflects the traverse order of pop operations. For two push operations $m,n \in O_a$ is state $s$ we say that $m <^{tr}_s n$ iff either $myTid(m) < myTid(n)$ or $myTid(m) = myTid(n)$ and $n_s(n)$ is reachable from $n_s(m)$ using next pointers. $\geq^{tr}$ is obtained from $<^{tr}$ in the usual way.

The relation $\mathit{fs}_2 \subseteq Q_C \rightarrow Q_{AbsS}$ contains $(s,t)$ iff the following are satisfied:
\begin{itemize}
\item[\emph{Nodes}] $k \in O_t$ iff $k$ is a push operation in $s$ ($k \in O_a$) such that either it has not inserted its node to the pool yet ( $cp_s(k) = A_i$ and $i<3$) or its node is not taken by a pop ($cp_s(k) = A_i$, $i\geq 3$ and $n_s(k)->taken = false$). 
\item[\emph{Pend/Comp}] A vertex $k \in O_t$ is pending ($\ell_t(k) = (d, \texttt{PEND})$) iff $k$ satisfies the previous condition, $x_s(k) = d$ and it is not completed in $s$ ($cp_s(k) = A_i$ and $i<6$). Similarly, this vertex is completed ($\ell_t(k) = (d, \texttt{COMP})$) iff $k$ satisfies the previous condition, $x_s(k) = d$ and it is completed in $s$ ($cp_s(k) = A_6$). Pending vertices are maximal with respect to $<_t$ i.e., if $k \in O_t$ is a pending vertex, then for all $k' \in O_t$ $k \nless_t k'$.
\item[\emph{TSOrder}] If a node has a smaller timestamp than the other node in $s$, the operations that inserted them cannot be ordered reversely in $t$. More formally, let $k, k' \in O_t$ s.t. $n_s(k)-> ts \leq n_s(k')->ts$. Then, $k' \nless_t k$.
\item[\emph{TidOrder}] Order among the nodes inserted by the same threads in $s$ must be preserved among the operations that inserted them in $t$. Let $k, k' \in O_t$ s.t. $myTid_s(k) = myTid_s(k')$ and $n_s(k)->ts < n_s(k')->ts$. Then, $k <_t k'$.
\item[Frontiers] Every maximally closed or pending vertex can be removed by a pending pop. More formally, let $k \in O_t$ such that $\ell_t(k) = (\_,\texttt{PEND})$. Then, for all pops $p$, $k \in ov_t(p)$. In the other case, let $k \in O_t$ such that $\ell_t(k) = (\_,\texttt{COMP})$ and for all other $k' \in O_t$ such that $k<_t k'$, we know $\ell_t(k') = (\_,\texttt{PEND})$. Then, for all pop operations $p$, $k \in be_t(p)$ or $k \in ov_t(p)$. 
\item[\emph{MaximalOV}] If a push $k \in O_t$ is a candidate to be removed by a pop $p$, then every other push $k'$ invoked after $k$ is a candidate to be removed by $p$ since $k$ is concurrent with $p$. More formally, let $k, k' \in O_t$ such that $k <_t k'$ and there exists a pop $p$ such that $k \in be_t(p)$ or $k \in ov_t(p)$. Then, $k' \in ov_t(p)$.
\item[\emph{MinimalBE}] If a push $k \in O_t$ has finished before the pop $p$ is invoked and yet $k$ is a candidate to be removed by $p$, other pushes completed before $k$ can not be candidates to be removed by $p$ at that state. More formally, let $k, k' \in O_t$ such that $k <_t k'$ and there exists a pop $p$ such that $k' \in be_t(p)$. Then, neither $k \in be_t(p)$ nor $k \in ov_t(b)$.
\item[\emph{ReverseFrontiers}] If all immediate followers $k' \in O_t$ of a push $k \in O_t$ are concurrent with pop $p$, then $k$ is either concurrent or maximally closed with respect to $p$. More formally, let $k \in O_t$ and for all $k' \in O_t$ such that $k \in pred_{<_t}(k')$, $k' \in ov_t(p)$, where $p$ is a pop operation. Then, $k \in ov_t(p) \cup be_t(p)$. 
\item[\emph{FixReturn}] If a pop $p$ is after its commit point action in $s$, then the $rv$ value of this operation in $t$ is fixed to $youngest_s(p)->data$. More formally, Let $p$ be the pop operation such that $cp_s(p) = R_6$ and $success_s(p) = \texttt{true}$. Then, $rv_t(p) = youngest_s(p)->data$. 
\item[\emph{TraverseBefore}] If a pop operation $p$ is currently visiting node $n$, it has non-null node $y$ as the $youngest$ and there is a non-null not taken node $m$ coming before $n$ in the traverse order with a greater timestamp than $y$, then the operation that inserts $m$ must be concurrent with $p$. More formally, assume $youngest_s(p) = y$ and $ y \neq \texttt{null}$. Let $k \in O_t$ such that $n_s(k) \neq \texttt{null}$, $n_s(k)->taken = \texttt{false}$, $n_s(k) <^{tr}_s n_s(p)$ and $n_s(k)->ts \geq y->ts$. Then, $k \in ov_t(p)$.
\item[\emph{TraverseBeforeNull}] If a pop operation $p$ is currently visiting node $n$, and its $youngest$ field is \texttt{null}, then every other node $m$ coming before $n$ in the traverse order must be concurrent with $p$. More formally , let $youngest_s(p) = \texttt{null}$ and assume there exists  an operation $k \in O_t$ such that $n_s(k) \neq \texttt{null}$, $n_s(k)->taken = \texttt{false}$ and  $n_s(k) <^{tr}_s n_s(p)$. Then, $k \in ov_t(p)$. 
\item[\emph{TraverseAfter}] If a pop operation $p$ is currently visiting node $n$ that is not null and its youngest element $m$ is not null and still not taken in state $s$, then either $m$ is a candidate to be removed by $p$ in $t$ or there exists a later node $m'$ than $n$ such that $m'$ is a candidate in $t$ and it has a bigger timestamp than n. More formally, assume that there exists $k, k' \in O_t$ such that $youngest_s(p)->taken \neq false$, $youngest_s(p) = n_s(k)$ and $n_s(k') = n_s(p)$. Then, either $k \in ov_t(p) \vee k \in be_t(p)$ or there exists $k'' \in O_t$ s.t. $n_s(k'')->ts > n_s(k) ->ts$ and $k'' \in ov_t(p) \vee k'' \in be_t(p)$ and either $k' <^{tr}_s k''$ or $n_s(p) = n_s(k'') \wedge cp_s(p) = R_j \wedge j<5$. 
\end{itemize}
Next, we will show that $\mathit{fs}_2$ is really a $C\cup R\cup Com(pop)$-forward simulation relation. Except the trivial base case, we case-split on the transition rules. We first assume $s \xrightarrow{\alpha}_{TSS} s' $ and $t \in \mathit{fs}_2[s]$. Then, we find  corresponding transition $\alpha' \in \Sigma_{AbsS}$ obeying the $C\cup R\cup Com(pop)$-forward simulation relation conditions and obtain $t'$ such that $t \xrightarrow{\alpha'}_{AbsS} st $  and $t' \in \mathit{fs}_2[s']$.

We observe that if $\alpha \in C \cup R \cup Com(pop)$, then the corresponding rule in $AbsS$ is $\alpha' = \alpha$. Otherwise, $\alpha' = \epsilon$.

Let the following describe $\alpha$: $\psi \triangleright s\xrightarrow{\alpha}_{TSS} s'$ where $\psi$ is the precondition (guard) that needs to be satisfied for enabling $\alpha$ and $\psi' \triangleright t \xrightarrow{\alpha'}_{AbsS} t'$ describe the $\alpha'$ if $\alpha' \neq \epsilon$ (equivalently $\alpha' = \alpha$).

For the cases $\alpha' = \alpha$, we first need to show $\alpha'$ is enabled in state $t$ i.e., $t$ satisfies $\psi'$. If this can not be directly obtained from the information that $s$ satisfies $\psi$ and using one or two obvious conditions on $\mathit{fs}_2$ (since $t \in \mathit{fs}_2[s]$), we show the derivation in the proof. Then, $t'$ is obtained in a unique way since $AbsS$ is deterministic on its alphabet $ \Sigma_{AbsS} = C \cup R \cup Com(pop)$. The, only other thing to show is $t' \in \mathit{fs}_2[t']$. We show this by proving that $t'$ does not violate any of the conditions of the $\mathit{fs}_2$ described above. Suppose conditions on $\mathit{fs}_2$ are of the form $\forall \overline{k}. guard_{s,t}(\overline{k}) \triangleright \phi_{s,t}(\overline{k})$ where the $\overline{k}$ is a vector of operation identifiers and $\phi$ defined on states $s$ and $t$ must hold if the guard defined on $s$ and $t$ holds. We say that a vector $\overline{k_1}$ is a new instantiation of the condtion if $\overline{k_1}$ does not satisfy the $guard_{s,t}$ while relating pre-states, but it satisfies $guard_{s',t'}$ while relating post-states.

We only explain why the new instantiations due to the difference between $s'$ and $s$ or the difference between $t'$ and $t$ do not violate the conditions. We skip the instances that we assumed while relating $s$ to $t$.

For the cases in which $\alpha' = \epsilon$, we have $t'=t$ and the only thing to show is $t \in \mathit{fs}_2[s']$. Again, we only explain why the new instantiations due to the difference between $s'$ and $s$ do not violate the conditions.

In the following, we show that $\mathit{fs}_2$ is a $C \cup R \cup Com(pop)$-forward simulation relation.
\begin{itemize}
\item[\textsc{init}] $\mathit{fs}_2[{q_0}_{TSS}] =\{{q_0}_{AbsS}\}$
\item[\textsc{call-push}] The same derivation rule of $TSS$ is applied to $t$  to obtain $t'$. The premise of the rule is satisfied by $t$ trivially in the sense explained before. The new vertex $k$ is added to the $O_t$ such that $k$ is maximal, pending and every completed vertex is ordered before $k$ in $t'$. Moreover, $k$ is overlapping with every pending pop. To see that $t' \in \mathit{fs}_2[s']$ we observe the following: \emph{Nodes} condition is preserved because $k \in O_{t'}$. Since the newly added vertex $k$ is maximal and pending in $t'$, \emph{Pend/Comp} condition is preserved. \emph{Frontiers} and \emph{MaximalOV} conditions are not violated since $k$ is added to $ov(p)$ set for every pending pop operation $p$. 
\item[\textsc{push1}] We have $t' = t$ and show $t \in \mathit{fs}_2[s']$.\emph{Nodes} and \emph{Pend/Comp} conditions are still satisfied since $k$ remains to be a pending vertex. \emph{TSOrder} is still preserved. Timestamp of $n_{s'}(k)$ is maximal and every other nodes of push operations with maximal timestamp in $s'$ are pending vertices in $t$. Hence there can be no ordering between those pushes and $k$ in $t$ that can violate \emph{TSOrder}. Moreover, $k$ is maximal in $t$ which means that it cannot be ordered before another push $k'$ of which node has a lower timestamp. \emph{TidOrder} is also satisfied. Since $k$ is ordered after every completed push in $t$ and every other push by the same thread is completed, ordering required by the \emph{TidOrder} is present.
\item[\textsc{push2}] We have $t' = t$ and show $t \in \mathit{fs}_2[s']$. \emph{Nodes} and \emph{Pend/Comp} conditions are still satisfied since $k$ remains to be a pending vertex. One can also see that the \emph{TraverseBefore} condition is preserved. Let the pop $p$ visiting node $m$ and $n_{s'}(k) <^{tr}_{s'} m$. Since $k$ and $p$ are both pending in $s$ and $t \in \mathit{fs}_2[s]$, $k \in ov_t(p)$ (by the \emph{Frontiers} condition). Hence, \emph{TraverseBefore} is preserved. 
\item[\textsc{push3}] We have $t' = t$ and show $t \in \mathit{fs}_2[s']$. We consider two cases: $n_s(k)->taken$ is \texttt{true} or it is \texttt{false}. For the former case, $k \notin O_t$. The only new instantiation we check is $k \notin O_t$ does not violate \emph{Nodes} condition while relating $s'$ to $t$. 

For the latter case, we have $k \in O_t$. \emph{Nodes} and \emph{Pend/Comp} conditions are still satisfied since $k$ remains to be a pending vertex after changing $s$ to $s'$.
\item[\textsc{push4}] We have $t' = t$ and show $t \in \mathit{fs}_2[s']$. 

We consider two cases: $n_s(k)->taken$ is \texttt{true} or it is \texttt{false}. For the former case, \emph{Nodes} condition is still satisfied since $k$ remains to be not a vertex. 

For the latter case \emph{Nodes} and \emph{Pend/Comp} conditions are still satisfied since $k$ remains to be a pending vertex. \emph{TSOrder} condition is still not violated since if $k'<_t k$, then $k'$ is a completed vertex in $s$ and $s'$. By the premise of the rule (which can be shown to hold for every operation at control point $A_4$) $i_s(k') < i_s(k)$ and consequently $n_{s'}(k')->ts < n_{s'}(k)->ts$. Since every other push by the thread of $k$ is completed, \emph{TidOrder} still continues to hold for the same reasons. \emph{TraverseAfter} condition is also preserved. Let $k'$ be the push and $p$ be the pop such that $n_s(k') = youngest_s(p)$, $n_s(k') \leq^{tr}_s n_s(k)$, $n_s(k')->ts < n_s(k)->ts$ and $k \in ov_t(p)$ or $k \in be_t(p)$. Assume $n_{s'}(k')->ts \geq N_{s'}(k)->ts$ after the action. Then, $k'$ must be a pending push both in $s$ and $s'$ by the premise of the derivation rule and $k' \in ov_t(p)$ must be true by \emph{Frontiers} condition and $t \in \mathit{fs}_2[s]$. Hence, the \emph{TraverseAfter} condition is preserved.
\item[\textsc{ret-push}] We consider two cases, $n_s(k)->taken$ is \texttt{false} or \texttt{true}. For the former case, we obtain $t'$ by applying \textsc{ret-push1} rule of $AbsS$. \emph{Nodes} and \emph{Pend/Comp} conditions are still satisfied since $k$ becomes a completed vertex in $t'$. \emph{Frontiers} condition still holds since although $k$ become a maximally closed vertex in $t'$, we have $k \in ov_{t'}(p)$ for all pending nodes $p$ (due to \emph{Frontiers} condition, $t \in \mathit{fs}_2[s]$ and $k$ was a pending operation in state $t$, $k \in ov_t(p)$). 

For the latter case, we obtain $t'$ by applying \textsc{ret-push2} rule of $AbsS$. \emph{Nodes} condition is still satisfied since $k \notin O_{t'}$. 
\item[\textsc{call-pop}] The same derivation rule of $TSS$ is applied to $t$  to obtain $t'$. \emph{Frontiers} condition holds for $p = k$ relating $s'$ to $t'$ since $k' \in ov_{t'}(k)$ for every pending vertex $k'$ and $k'' \in be_{t'}(p)$ for all completed vertex $k''$. $t'$ due to action $inv(pop,k)$ applied on $t$. \emph{MaximalOV} condition holds for $p = k$ since pending vertices are maximal in $t'$ and for any maximally closed vertex $k'$ in $t'$, if $k'$ is ordered before other vertex $k''$, then $k''$ is a pending operation by definition of being maximally closed and $k'' \in ov_t(k)$ due to the changes by \textsc{inv-pop} action on $t$. \emph{MinimalBE} condition holds while relating $s'$ to $t'$ for the pop $ p = k$ because only maximally closed vertices are in $be(k)$ and if a push $k'$ is ordered before a maximally closed push $k''$ in $t$, neither $k'' \in be_{t'}(k)$ (since $k''$ is not maximally closed) nor $k'' \in ov_{t'}$ (since $k''$ cannot be pending). \emph{ReverseFrontiers} condition holds while relating $s'$ to $t'$ for the pop $p=k$ because, if $k'' \in ov_{t'}(k)$ for all immediate successors of $k'$ in $t$, then $k'' $ are pending vertices (due to \emph{call-pop} action of $AbsS$), $k'$ is a maximally closed vertex and $k' \in be_{t'}(k)$ (due to \emph{call-pop} action of $AbsS$).
\item[\textsc{pop1}]We have $t' = t$ and $t \in \mathit{fs}_2[s']$.
\item[\textsc{pop2}]We have $t' = t$ and show $t \in \mathit{fs}_2[s']$. \emph{TraverseBefore} condition while relating $s'$ to $t$ still holds for $p=k$. Assume $youngest_{s'}(k) = y$ is a non-null node. Then, for all nodes $m$ in $s'$ such that $n_s(k) \leq^{tr}_{s'} m <^{tr}_{s'} n_{s'}(k)$ we have $m->ts < y->ts$ in $s'$ because  $n_s(k)->ts > m->ts$ (since $n_s(k)$ is added to the pool after $m$ by the same thread) and $y->ts \geq n_s(k)->ts$ in $s'$ (since either $youngest_{s'}(k) = n_s(k)$ or $youngest_{s'}(k)->ts > n_s(k)->ts$). \emph{TraverseAfter} does not have any new instatiations since the guard mentions the nodes after $n_s(k)$ while relating $s$ to $t$ whereas it mentions nodes after or including $n_{s'}(k)$ which contains the all nodes in the former case.
\item[\textsc{pop3}]We have $t' = t$ and $t \in \mathit{fs}_2[s']$.
\item[\textsc{pop4}]We have $t' = t$ and $t \in \mathit{fs}_2[s']$.
\item[\textsc{pop5}]We have $t' = t$ and show $t \in \mathit{fs}_2[s']$. 
\emph{TraverseBefore} condition while relating $s'$ to $t$ still holds for $p=k$ since $youngest_s(k)->ts < youngest_{s'}(k)->ts$ and \emph{TraverseBefore} holds while relating $s$ to $t$. 

\emph{TraverseAfter} condition also continues to hold for $p=k$. There are two possible cases: $youngest_s(k) = \texttt{null}$ or not. 

First, consider the former case. Since \emph{TraverseBeforeNull} is satisfied while relating $s$ to $t$, for every operation $k', k'' \in O_t$ such that  $k'' <^{tr}_s k'$ and $n_s(k') = youngest_{s'}(k)$ we have $k'' \in ov_t(k)$. Consider all such $k''$ such that $n_s(k'')->ts > n_s(k')->ts$. If there exists such a $k''$ such that $k' \in pred_{<_t}(k'')$, then $k' \in ov_t(k) \cup be_t(k)$ since \emph{ReverseFrontiers} condition holds relating $s$ to $t$. Otherwise, either $k'$ is maximal in $t$ or all the vertices $v$ ordered after $k'$ in $t$ we have $v >^{tr}_s k'$. Then, either $k'$ or one of these $v$ vertices must be maximal in $t$ and must be in $be_t(k) \cup ov_t(k)$ since \emph{Frontiers} condition holds (one of them is maximal in $t$) while relating $s$ to $t$. 

Second, assume there exists push operations $j, k'$ such that $n_s(j) = youngest_s(k) \neq \texttt{null}$ and $n_s(k') = n_s(k) = youngest_{s'}(k)$ . Since \emph{TraverseBefore} is satisfied while relating $s$ to $t$, if there exists a push $k'' <^{tr}_s k'$ such that $n_s(k'')$ is not taken and $n_s(k'')->ts \geq n_s(j)->ts$, then $k'' \in ov_t(k)$. Then, for all $k'' <^{tr}_s k'$ such that $n_s(k'')$ is not taken and $n_s(k'')->ts \geq n_s(k')->ts$, then $k'' \in ov_t(k)$ since $n_s(k')->ts \geq n_s(j)->ts$. If there exists such a $k''$ such that $k' \in pred_{<_t}(k'')$, then $k' \in ov_t(k) \cup be_t(k)$ since \emph{ReverseFrontiers} condition holds relating $s$ to $t$. Otherwise, either $k'$ is maximal in $t$ or all the vertices $v$ ordered after $k'$ in $t$ we have $v >^{tr}_s k'$. Then, either $k'$ or one of these $v$ vertices must be maximal in $t$ and must be in $be_t(k) \cup ov_t(k)$ since \emph{Frontiers} condition holds (one of them is maximal in $t$) while relating $s$ to $t$. 

\item[\textsc{pop6}] We have $t' = t$ and show $t \in \mathit{fs}_2[s']$. \emph{TraverseAfter} continues to hold while relating $s'$ to $t$ for $p=k$. Let $k', k'' \in O_t$ such that $youngest_s(k) = n_s(k')$, $n_s(k) = n_s(k'')$ and $k' \notin  ov_t(k) \cup be_t(k)$. Note that $k' <^{tr}_s k''$. Then, $n_s(k'')->ts < n_s(k')->ts$ since $n_s(k'')->ts < maxTS(k)$ and $maxTS(k) = n_s(k')->TS$ ($n_s(k')->ts$ cannot be \texttt{MAX\_INT} since $k'$ would be pending and $k' \in ov_t(k)$ otherwise). Hence, there exists another push $j$ such that $j >^{tr}_s$ and $j \in ov_t(k) \cup be_t(k)$. 

\item[\textsc{pop7}] We have $t' = t$ and $t \in \mathit{fs}_2[s']$.
\item[\textsc{pop8}] We have $t' = t$ and $t \in \mathit{fs}_2[s']$.
\item[\textsc{com-pop}] $t'$ is obtained by applying \textsc{com-pop1} rule of $AbsS$.
We first show that precondition of \textsc{com-pop1} rule of $AbsS$ si satisfied by $t$. If $com(pop,d,k)$ removes a  node $n$ such that there exists a push $k'$ such that $n_s(k') =n$ in $s$, then $k' \in O_t$ since it is non-null and not taken. Moreover, $k' \in ov_t(k) \cup be_t(k)$ since \emph{TraverseAfter} is preserved while relating $s$ to $t$ and all the nodes that come after $n_s(k)$ in terms of traverse order in $s$ have lower timestamp values than $n_s(k)->ts$ and $n_s(k)->ts \leq youngest_s(k)->ts$.

Next, we show that $t' \in \mathit{fs}_2[s']$. We case split on the conditions of $\mathit{fs}_2$ considering new instantiations.

\emph{Nodes} condition is still preserved after $k$ removes the node pushed by operation $k'$ in $s$ since $k' \notin O_{t'}$ anymore by due to $com(pop,d,k)$ action. 

\emph{Frontiers} condition is still preserved if $k$ removes the vertex $k'$ and makes another $k''$ maximally closed in $t$. Since all the other nodes $j$ ordered after $k''$ (except possibly $k'$) in $t$ are pending, $j \in ov_t(p)$ (due to \emph{Frontiers} condition while relating $s$ to $t$) for some pending pop $p \neq k$. Then, $k'' \in be_{t'}(p)$ by $com(pop,d,k)$ action. 

For the \emph{MinimalBE} condition, we do not have a new instance. If $k' \in be_{t'}(p)$ becomes true although $k' \notin be_t(p)$, we cannot have $k'' \in O_{t'}$ such that $k' \in pred_{<_{t'}}(k'')$ and $k'' \in be_{t'}(p)$ since $com(pop,d,k)$ does not add $k''$ to $ov(p)$ if its successor is not pending with respect to $p$.

\emph{ReverseFrontiers} condition is still preserved. If $k$ removes the vertex $k'$ and there exists an immediate predecessor $k''$ of $k'$ such that all of immediate successors of $k''$ are in $ov_{t'}(p)$, then $k'' \in ov_{t'}(p)$ due to the action $com(pop,d,k)$.

\emph{TraverseAfter} condition is still preserved after $k$ removes the node of push $k'$. Let $p \neq k$ be another pop operation such that $n_s(j) = youngest_s(p)$ for some push $j$ and $n_s(k')$ be the only node such that $n_s(k')->ts > youngest_s(p)->ts$ and $n_s(k')$ comes after $n_s(p)$ in the traverse order of $s$ and $k' \in ov_t(p) \cup be_t(p)$. Hence, there is no $k''$ such that $n_s(k'')$ comes after $n_s(p)$ in the traverse order and $j <_t k''$ except $k'$ (i). In other direction, if for all $k'' \in O_t$ such that $n_s(k'')$ comes before $n_s(p)$ in the traverse order and $n_s(k'')->ts > youngest_s(p)->ts$ , then $k'' \in ov_t(p)$ since \emph{TraverseBefore} condition holds while relating $s$ to $t$. Then, for all $k'' \in O_t$ such that $n_s(k'')$ comes before $n_s(p)$ in the traverse order of $s$ and $k'' >_t j$ implies $k'' \in ov_t(p)$ since $n_s(k'')->ts > n_s(j)->ts$ if $k'' >_t j$ (ii). Then, for all $k'' \in O_t$ such that if $k'' >_t j$, then $k'' \in ov_t(p)$ except $k'$ due to (i) and (ii). If $j \nless_t k'$, then $j \in ov_t(p) \cup be_t(p)$ since \emph{ReverseFrontiers} hold while relating $s$ to $t$ and $j \in ov_{t'} \cup be_{t'}$ after applying the action $com(pop,d,k)$. Otherwise, if $j <_t k'$, then $k \in be_{t'}$ after applying $com(pop,d,k)$.

\emph{FixReturn} condition continues to hold. If $com(pop,d,k)$ removes the node pushed by $k'$ in $s$, then $com(pop,d,k)$ removes the vertex $k'$ (assuming data independece) and $youngest_s(k')->data = \ell_t(k')_1$. Then, $youngest_{s'}(p)->data = rv_{t'}(p)$ after applying commit actions at both sides.

\item[\textsc{pop9}]We have $t' = t$ and $t \in \mathit{fs}_2[s']$.
\item[\textsc{ret-pop}] $t'$ is obtained by applying \textsc{ret-pop} rule of $AbsS$ and $t' \in \mathit{fs}_2[s']$.
\end{itemize}
\end{proof}


\begin{thebibliography}{29}
\providecommand{\natexlab}[1]{#1}
\providecommand{\url}[1]{\texttt{#1}}
\expandafter\ifx\csname urlstyle\endcsname\relax
  \providecommand{\doi}[1]{doi: #1}\else
  \providecommand{\doi}{doi: \begingroup \urlstyle{rm}\Url}\fi

\bibitem[Abadi and Lamport(1991)]{DBLP:journals/tcs/AbadiL91}
M.~Abadi and L.~Lamport.
\newblock The existence of refinement mappings.
\newblock \emph{Theor. Comput. Sci.}, 82\penalty0 (2):\penalty0 253--284, 1991.
\newblock \doi{10.1016/0304-3975(91)90224-P}.
\newblock URL \url{http://dx.doi.org/10.1016/0304-3975(91)90224-P}.

\bibitem[Abdulla et~al.(2013)Abdulla, Haziza, Hol\'{\i}k, Jonsson, and
  Rezine]{conf/tacas/AbdullaHHJR13}
P.~A. Abdulla, F.~Haziza, L.~Hol\'{\i}k, B.~Jonsson, and A.~Rezine.
\newblock An integrated specification and verification technique for highly
  concurrent data structures.
\newblock In \emph{TACAS}, pages 324--338, 2013.

\bibitem[Alur et~al.(2000)Alur, McMillan, and Peled]{journals/iandc/AlurMP00}
R.~Alur, K.~L. McMillan, and D.~Peled.
\newblock Model-checking of correctness conditions for concurrent objects.
\newblock \emph{Inf. Comput.}, 160\penalty0 (1-2):\penalty0 167--188, 2000.

\bibitem[Amit et~al.(2007)Amit, Rinetzky, Reps, Sagiv, and
  Yahav]{conf/cav/AmitRRSY07}
D.~Amit, N.~Rinetzky, T.~W. Reps, M.~Sagiv, and E.~Yahav.
\newblock Comparison under abstraction for verifying linearizability.
\newblock In \emph{CAV '07}, volume 4590 of \emph{LNCS}, pages 477--490, 2007.

\bibitem[Bouajjani et~al.(2013)Bouajjani, Emmi, Enea, and
  Hamza]{conf/esop/BouajjaniEEH13}
A.~Bouajjani, M.~Emmi, C.~Enea, and J.~Hamza.
\newblock Verifying concurrent programs against sequential specifications.
\newblock In \emph{ESOP '13}, volume 7792 of \emph{LNCS}, pages 290--309.
  Springer, 2013.

\bibitem[Bouajjani et~al.(2015{\natexlab{a}})Bouajjani, Emmi, Enea, and
  Hamza]{DBLP:conf/icalp/BouajjaniEEH15}
A.~Bouajjani, M.~Emmi, C.~Enea, and J.~Hamza.
\newblock On reducing linearizability to state reachability.
\newblock In M.~M. Halld{\'{o}}rsson, K.~Iwama, N.~Kobayashi, and B.~Speckmann,
  editors, \emph{Automata, Languages, and Programming - 42nd International
  Colloquium, {ICALP} 2015, Kyoto, Japan, July 6-10, 2015, Proceedings, Part
  {II}}, volume 9135 of \emph{Lecture Notes in Computer Science}, pages
  95--107. Springer, 2015{\natexlab{a}}.
\newblock ISBN 978-3-662-47665-9.
\newblock \doi{10.1007/978-3-662-47666-6}.
\newblock URL \url{http://dx.doi.org/10.1007/978-3-662-47666-6}.

\bibitem[Bouajjani et~al.(2015{\natexlab{b}})Bouajjani, Emmi, Enea, and
  Hamza]{DBLP:conf/popl/BouajjaniEEH15}
A.~Bouajjani, M.~Emmi, C.~Enea, and J.~Hamza.
\newblock Tractable refinement checking for concurrent objects.
\newblock In  \citet{DBLP:conf/popl/2015}, pages 651--662.
\newblock ISBN 978-1-4503-3300-9.
\newblock \doi{10.1145/2676726.2677002}.
\newblock URL \url{http://doi.acm.org/10.1145/2676726.2677002}.

\bibitem[Derrick et~al.(2011)Derrick, Schellhorn, and Wehrheim]{Derrick2011}
J.~Derrick, G.~Schellhorn, and H.~Wehrheim.
\newblock \emph{Verifying Linearisability with Potential Linearisation Points},
  pages 323--337.
\newblock Springer Berlin Heidelberg, Berlin, Heidelberg, 2011.
\newblock ISBN 978-3-642-21437-0.

\bibitem[Dodds et~al.(2015)Dodds, Haas, and Kirsch]{DBLP:conf/popl/DoddsHK15}
M.~Dodds, A.~Haas, and C.~M. Kirsch.
\newblock A scalable, correct time-stamped stack.
\newblock In  \citet{DBLP:conf/popl/2015}, pages 233--246.
\newblock ISBN 978-1-4503-3300-9.
\newblock \doi{10.1145/2676726.2676963}.
\newblock URL \url{http://doi.acm.org/10.1145/2676726.2676963}.

\bibitem[Dragoi et~al.()Dragoi, Gupta, and Henzinger]{conf/cav/DragoiGH13}
C.~Dragoi, A.~Gupta, and T.~A. Henzinger.
\newblock Automatic linearizability proofs of concurrent objects with
  cooperating updates.
\newblock In \emph{CAV '13}, volume 8044 of \emph{LNCS}, pages 174--190.
  Springer.

\bibitem[Filipovic et~al.(2010)Filipovic, O'Hearn, Rinetzky, and
  Yang]{journals/tcs/FilipovicORY10}
I.~Filipovic, P.~W. O'Hearn, N.~Rinetzky, and H.~Yang.
\newblock Abstraction for concurrent objects.
\newblock \emph{Theor. Comput. Sci.}, 411\penalty0 (51-52):\penalty0
  4379--4398, 2010.

\bibitem[Gorelik and Hendler(2013)]{DBLP:conf/podc/GorelikH13}
M.~Gorelik and D.~Hendler.
\newblock Brief announcement: an asymmetric flat-combining based queue
  algorithm.
\newblock In P.~Fatourou and G.~Taubenfeld, editors, \emph{{ACM} Symposium on
  Principles of Distributed Computing, {PODC} '13, Montreal, QC, Canada, July
  22-24, 2013}, pages 319--321. {ACM}, 2013.
\newblock ISBN 978-1-4503-2065-8.
\newblock \doi{10.1145/2484239.2484279}.
\newblock URL \url{http://doi.acm.org/10.1145/2484239.2484279}.

\bibitem[Hamza(2015)]{DBLP:conf/netys/Hamza15}
J.~Hamza.
\newblock On the complexity of linearizability.
\newblock In A.~Bouajjani and H.~Fauconnier, editors, \emph{Networked Systems -
  Third International Conference, {NETYS} 2015, Agadir, Morocco, May 13-15,
  2015, Revised Selected Papers}, volume 9466 of \emph{Lecture Notes in
  Computer Science}, pages 308--321. Springer, 2015.
\newblock ISBN 978-3-319-26849-1.
\newblock \doi{10.1007/978-3-319-26850-7}.
\newblock URL \url{http://dx.doi.org/10.1007/978-3-319-26850-7}.

\bibitem[Hendler et~al.()Hendler, Shavit, and
  Yerushalmi]{conf/spaa/HendlerSY04}
D.~Hendler, N.~Shavit, and L.~Yerushalmi.
\newblock A scalable lock-free stack algorithm.
\newblock In \emph{SPAA 2004}, pages 206--215. ACM.

\bibitem[Hendler et~al.(2010)Hendler, Incze, Shavit, and
  Tzafrir]{DBLP:conf/spaa/HendlerIST10}
D.~Hendler, I.~Incze, N.~Shavit, and M.~Tzafrir.
\newblock Flat combining and the synchronization-parallelism tradeoff.
\newblock In F.~M. auf~der Heide and C.~A. Phillips, editors, \emph{{SPAA}
  2010: Proceedings of the 22nd Annual {ACM} Symposium on Parallelism in
  Algorithms and Architectures, Thira, Santorini, Greece, June 13-15, 2010},
  pages 355--364. {ACM}, 2010.
\newblock ISBN 978-1-4503-0079-7.
\newblock \doi{10.1145/1810479.1810540}.
\newblock URL \url{http://doi.acm.org/10.1145/1810479.1810540}.

\bibitem[Henzinger et~al.(2013)Henzinger, Sezgin, and
  Vafeiadis]{conf/concur/HenzingerSV13}
T.~A. Henzinger, A.~Sezgin, and V.~Vafeiadis.
\newblock Aspect-oriented linearizability proofs.
\newblock In \emph{CONCUR}, pages 242--256, 2013.

\bibitem[Herlihy and Wing(1990)]{journals/toplas/HerlihyW90}
M.~Herlihy and J.~M. Wing.
\newblock Linearizability: A correctness condition for concurrent objects.
\newblock \emph{ACM Trans. Program. Lang. Syst.}, 12\penalty0 (3):\penalty0
  463--492, 1990.

\bibitem[Hoffman et~al.(2007)Hoffman, Shalev, and
  Shavit]{DBLP:conf/opodis/HoffmanSS07}
M.~Hoffman, O.~Shalev, and N.~Shavit.
\newblock The baskets queue.
\newblock In E.~Tovar, P.~Tsigas, and H.~Fouchal, editors, \emph{Principles of
  Distributed Systems, 11th International Conference, {OPODIS} 2007,
  Guadeloupe, French West Indies, December 17-20, 2007. Proceedings}, volume
  4878 of \emph{Lecture Notes in Computer Science}, pages 401--414. Springer,
  2007.
\newblock ISBN 978-3-540-77095-4.

\bibitem[Liang and Feng(2013)]{DBLP:conf/pldi/LiangF13}
H.~Liang and X.~Feng.
\newblock Modular verification of linearizability with non-fixed linearization
  points.
\newblock In H.~Boehm and C.~Flanagan, editors, \emph{{ACM} {SIGPLAN}
  Conference on Programming Language Design and Implementation, {PLDI} '13,
  Seattle, WA, USA, June 16-19, 2013}, pages 459--470. {ACM}, 2013.
\newblock ISBN 978-1-4503-2014-6.
\newblock \doi{10.1145/2462156.2462189}.
\newblock URL \url{http://doi.acm.org/10.1145/2462156.2462189}.

\bibitem[Lynch and Vaandrager(1995)]{DBLP:journals/iandc/LynchV95}
N.~A. Lynch and F.~W. Vaandrager.
\newblock Forward and backward simulations: I. untimed systems.
\newblock \emph{Inf. Comput.}, 121\penalty0 (2):\penalty0 214--233, 1995.
\newblock \doi{10.1006/inco.1995.1134}.
\newblock URL \url{http://dx.doi.org/10.1006/inco.1995.1134}.

\bibitem[Morrison and Afek(2013)]{DBLP:conf/ppopp/MorrisonA13}
A.~Morrison and Y.~Afek.
\newblock Fast concurrent queues for x86 processors.
\newblock In A.~Nicolau, X.~Shen, S.~P. Amarasinghe, and R.~W. Vuduc, editors,
  \emph{{ACM} {SIGPLAN} Symposium on Principles and Practice of Parallel
  Programming, PPoPP '13, Shenzhen, China, February 23-27, 2013}, pages
  103--112. {ACM}, 2013.
\newblock ISBN 978-1-4503-1922-5.
\newblock \doi{10.1145/2442516.2442527}.
\newblock URL \url{http://doi.acm.org/10.1145/2442516.2442527}.

\bibitem[O'Hearn et~al.()O'Hearn, Rinetzky, Vechev, Yahav, and
  Yorsh]{conf/podc/OHearnRVYY10}
P.~W. O'Hearn, N.~Rinetzky, M.~T. Vechev, E.~Yahav, and G.~Yorsh.
\newblock Verifying linearizability with hindsight.
\newblock In \emph{PODC '10}, pages 85--94. ACM.

\bibitem[Rajamani and Walker(2015)]{DBLP:conf/popl/2015}
S.~K. Rajamani and D.~Walker, editors.
\newblock \emph{Proceedings of the 42nd Annual {ACM} {SIGPLAN-SIGACT} Symposium
  on Principles of Programming Languages, {POPL} 2015, Mumbai, India, January
  15-17, 2015}, 2015. {ACM}.
\newblock ISBN 978-1-4503-3300-9.
\newblock URL \url{http://dl.acm.org/citation.cfm?id=2676726}.

\bibitem[Schellhorn et~al.(2012)Schellhorn, Wehrheim, and
  Derrick]{DBLP:conf/cav/SchellhornWD12}
G.~Schellhorn, H.~Wehrheim, and J.~Derrick.
\newblock How to prove algorithms linearisable.
\newblock In P.~Madhusudan and S.~A. Seshia, editors, \emph{Computer Aided
  Verification - 24th International Conference, {CAV} 2012, Berkeley, CA, USA,
  July 7-13, 2012 Proceedings}, volume 7358 of \emph{Lecture Notes in Computer
  Science}, pages 243--259. Springer, 2012.
\newblock ISBN 978-3-642-31423-0.
\newblock \doi{10.1007/978-3-642-31424-7}.
\newblock URL \url{http://dx.doi.org/10.1007/978-3-642-31424-7}.

\bibitem[Vafeiadis()]{conf/cav/Vafeiadis10}
V.~Vafeiadis.
\newblock Automatically proving linearizability.
\newblock In \emph{CAV '10}, volume 6174 of \emph{LNCS}, pages 450--464.

\bibitem[Vafeiadis(2008)]{phd/Vafeiadis08}
V.~Vafeiadis.
\newblock \emph{Modular fine-grained concurrency verification}.
\newblock PhD thesis, University of Cambridge, 2008.

\bibitem[Vafeiadis(2009)]{conf/vmcai/Vafeiadis09}
V.~Vafeiadis.
\newblock Shape-value abstraction for verifying linearizability.
\newblock In \emph{VMCAI '09: Proc. 10th Intl. Conf. on Verification, Model
  Checking, and Abstract Interpretation}, volume 5403 of \emph{LNCS}, pages
  335--348. Springer, 2009.

\bibitem[Vafeiadis et~al.()Vafeiadis, Herlihy, Hoare, and
  Shapiro]{conf/ppopp/VafeiadisHHS06}
V.~Vafeiadis, M.~Herlihy, T.~Hoare, and M.~Shapiro.
\newblock Proving correctness of highly-concurrent linearisable objects.
\newblock In \emph{PPOPP '06}, pages 129--136. ACM.

\bibitem[Zhu et~al.(2015)Zhu, Petri, and Jagannathan]{DBLP:conf/cav/ZhuPJ15}
H.~Zhu, G.~Petri, and S.~Jagannathan.
\newblock Poling: {SMT} aided linearizability proofs.
\newblock In D.~Kroening and C.~S. Pasareanu, editors, \emph{Computer Aided
  Verification - 27th International Conference, {CAV} 2015, San Francisco, CA,
  USA, July 18-24, 2015, Proceedings, Part {II}}, volume 9207 of \emph{Lecture
  Notes in Computer Science}, pages 3--19. Springer, 2015.
\newblock ISBN 978-3-319-21667-6.
\newblock \doi{10.1007/978-3-319-21668-3}.
\newblock URL \url{http://dx.doi.org/10.1007/978-3-319-21668-3}.

\end{thebibliography}
\end{document}